\numberwithin{equation}{section}
\numberwithin{figure}{section}
\theoremstyle{plain}
\newtheorem{thm}{\protect\theoremname}[section]
\theoremstyle{definition}
\newtheorem{rem}[thm]{\protect\remarkname}
\theoremstyle{definition}
\newtheorem{defn}[thm]{\protect\definitionname}
\theoremstyle{plain}
\newtheorem{prop}[thm]{\protect\propositionname}
\theoremstyle{plain}
\theoremstyle{plain}
\newtheorem{lem}[thm]{\protect\lemmaname}
\theoremstyle{plain}
\theoremstyle{plain}
\theoremstyle{definition}
\theoremstyle{definition}
\theoremstyle{definition}
\theoremstyle{definition}
\newtheorem{lemma}[thm]{Lemma}
\newenvironment{example}
{\pushQED{\qed}\examplex}
{\popQED\endexamplex}
\DeclareMathOperator{\Leb}{Leb}
\DeclareMathOperator{\supp}{supp}
\DeclareMathOperator{\Var}{Var}
\newcommand{\R}{\mathbb R}
\newcommand{\Z}{\mathbb Z}
\newcommand{\N}{\mathbb N}
\newcommand{\PP}{\mathbb P}
\newcommand{\eps}{\varepsilon}
\newcommand{\mC}{\mathcal{C}}
\newcommand{\mE}{\mathcal{E}}
\newcommand{\E}{\mathbb{E}}
\newcommand{\mdimcor}{\mdim_{\mathrm{cor}}}
\newcommand{\ualdim}{\overline{\dim}_{AL}\,}
\newcommand{\laldim}{\underline{\dim}_{AL}\,}
\newcommand{\aldim}{\dim_{AL}\,}
\newcommand{\mualdim}{\overline{\mdim}_{AL}\,}
\newcommand{\mlaldim}{\underline{\mdim}_{AL}\,}
\newcommand{\hdim}{\dim_H}
\newcommand{\umid}{\overline{\mathrm{mid}}}
\newcommand{\comp}{\mathrm{comp}}
\newcommand{\lmid}{\underline{\mathrm{mid}}}
\newcommand{\midim}{\mathrm{mid}}
\newcommand{\uid}{\overline{\mathrm{id}}}
\newcommand{\lid}{\underline{\mathrm{id}}}
\newcommand{\idim}{\mathrm{id}}
\newcommand{\adim}{\dim_A}
\newcommand{\ld}{\underline{d}}
\newcommand{\ud}{\overline{d}}
\newcommand{\essinf}{\mathrm{essinf}}
\newcommand{\esssup}{\mathrm{esssup}}
\newcommand{\uidimr}{\overline{\mathrm{idimr}}}
\newcommand{\lidimr}{\underline{\mathrm{idimr}}}
\newcommand{\idimr}{\mathrm{idimr}}
\DeclareMathOperator{\mdim}{mdim}
\DeclareMathOperator{\rank}{rank}
\providecommand{\fundprobname}{Fundamental Problem – Achievability}
\providecommand{\conjecturename}{Conjecture}
\providecommand{\corollaryname}{Corollary}
\providecommand{\definitionname}{Definition}
\providecommand{\examplename}{Example}
\providecommand{\lemmaname}{Lemma}
\providecommand{\problemname}{Problem}
\providecommand{\propositionname}{Proposition}
\providecommand{\remarkname}{Remark}
\providecommand{\theoremname}{Theorem}
\providecommand{\taskname}{Task}
\newcommand{\om}{\omega}
\def\Om{\Omega}
\def\N{{\mathbb N}}
\def\C{{\mathcal C}}
\def\Pk{{\mathcal P}}
\def\Ck{{\mathcal C}}
\def\be{\begin{equation}}
	\def\ee{\end{equation}}
\newcommand{\Ek}{{\mathcal E}}
\newcommand{\Fk}{{\mathcal F}}
\title[Optimal compressed sensing]{Optimal compressed sensing for mixing stochastic processes}
\author[Y. Gutman]{Yonatan Gutman$^\dagger$}
\address{$^\dagger$Institute of Mathematics, Polish Academy of Sciences,
	ul.~\'Sniadeckich 8, 00-656 Warszawa, Poland}
\email{gutman@impan.pl}
\author[A. \'{S}piewak]{Adam \'{S}piewak$^\dagger$}
\email{ad.spiewak@gmail.com}
\thanks{YG and A\'S were partially supported by the National Science Centre (Poland) grant 2020/39/B/ST1/02329.}
\begin{document}

\begin{abstract}
		Jalali and Poor introduced an asymptotic framework for compressed sensing of stochastic processes, demonstrating that any rate strictly greater than the mean information dimension serves as an upper bound on the number of random linear measurements required for (universal) almost lossless recovery of $\psi^*$-mixing processes, as measured in the normalized $L^2$ norm. In this work, we show that if the normalized number of random linear measurements is strictly less than the mean information dimension, then almost lossless recovery of a $\psi^*$-mixing process is impossible by any sequence of decompressors. This establishes the mean information dimension as the fundamental limit for compressed sensing in this setting (and, in fact, the precise threshold for the problem). To this end, we introduce a new quantity, related to techniques from geometric measure theory: the correlation dimension rate, which is shown to be a lower bound for compressed sensing of arbitrary stationary stochastic processes.
\end{abstract}

	\subjclass[2020]{31E05, 37A35, 60G10, 68P30, 94A29}
\keywords{
	Mean information dimension, correlation dimension rate, $\psi^*$-mixing stochastic process, compressed sensing.
}

	\maketitle

	\section{Introduction}
\subsection{Compressed sensing for stochastic processes}

The field of \textit{compressed sensing} originated from the foundational work by Candès, Donoho, Romberg, and Tao~\cite{candes2006compressive,candes2006stable,donoho2006compressed,FR13} and others. A central result in the theory~\cite[Theorem 9.12]{FR13} (see also~\cite{candes2006near,candes2006robust}) asserts that, with high probability, any vector \( x \in \mathbb{R}^N \) satisfying the \textit{\( s \)-sparsity} condition — i.e., \( \|x\|_0 := \#\{j : x_j \neq 0\} \leq s \) — can be recovered  with high probability  from $m$ random (Gaussian) linear measurements \( y := \mathbf{A}x \in \mathbb{R}^m \), where \( m \approx s \ln(N/s) \). This recovery is achieved using an \( \ell_1 \)-minimization method known as \textit{basis pursuit}~\cite[§1.4.3]{mallat1999wavelet} (see also~\cite{chen1995atomic}). Leveraging signal sparsity, compressed sensing has since enabled a wide range of applications~\cite{lustig2007sparse,duarte2008single,baraniuk2007compressive,herman2009high}.
However, from a practical perspective, it is advantageous to develop recovery algorithms that are applicable to sources exhibiting more general structural characteristics than sparsity. 
Following the tradition of information theory it is natural to model the source with the help of a  real-valued stationary stochastic processes $\bm{X} :=(X_i)_{i=1}^\infty$\footnote{A possible critique of the stochastic approach is that by its own nature it does not guarantee decompression for \textit{all} source vectors, at best giving almost sure results. For some related uniform results see \cite{mmdimcompress}.}. As measures of structural complexity we will primarily use several quantities: \textit{information dimension rate}, \textit{mean information dimension} and \textit{correlation dimension rate}, all defined in Section \ref{sec: Preliminaries}. The  fundamental role of these quantities is justified by the results presented in the sequel as well as previous results in the literature (e.g. \cite{WV10, jalali2017universal,rezagah2017compression,geiger2019information}). As an example let us recall that the \textbf{upper  mean information dimension} of a stationary stochastic processes $\mathbf{X}$ with distribution $\mu$ is defined as
\begin{equation}\label{eq: umid def}
	\umid(\mathbf{X}) = \lim \limits_{n \to \infty} \limsup \limits_{k \to \infty} \frac{H_\mu([X^n]_k)}{n \log k},
\end{equation}
where $[X^n]_k := \frac{\lfloor k (X_1, \ldots, X_n) \rfloor}{k}$ and $H_\mu(\cdot)$ denotes the Shannon entropy with respect to distribution $\mu$. This quantity measures the linear growth rate of the Rényi information dimension  along finer and finer quantizations of the process. It was introduced by Jalali and Poor \cite{jalali2017universal}\footnote{Jalali and Poor called $\umid(\mathbf{X})$ simply the (upper) information dimension of a process. We adopt the name \textit{mean information dimension} to emphasize the averaging over the dynamics of the stochastic process.} (the original definition is different, but it agrees with the above one by \cite[Lemma 3]{jalali2017universal}). See Section \ref{subsec: mid} for more details. For the sake of illustration let $\comp(\mathbf{X})$, be one of the complexity quantities mentioned above e.g., $\comp(\cdot)=\umid(\cdot)$ (or another relevant complexity quantity). Using this notation one may reformulate the main problem studied by Jalali and Poor in \cite{jalali2017universal} (elaborating on previous  research  such as \cite{JalaliMalekiBaraniuk14,ZhuBaronDuarte15}) in the following way:  \\

\textbf{Fundamental Problem - Achievability.}
	\begin{itemize}
		\item[--] Let $\C$ be a class of (stationary) stochastic process, e.g., ergodic stochastic process or $\psi^*$-mixing stochastic process.
		\item[--] Let   $\bm{X} =(X_i)_{i=1}^\infty$  be a stochastic process belonging to $\C$ whose 
		distribution is denoted by $\mu$ and $\bm{A} :=(A_n)_{n=1}^\infty$ be an i.i.d. stochastic process of Gaussian matrices $A_n \in \R^{m_n \times n}$,  $n=1,2,\ldots$, independent from $\bm{X}$, known as \textbf{compressors} whose 
		distribution is denoted by $\nu$.
		\item[--] Assume $\lim \limits_{n\rightarrow \infty} \frac{m_n}{n}> \comp(\mathbf{X})$.
		\item[--] Can one find a family of Borel maps $F_n :  \R^{m_n} \times  \R^{m_n \times n }\to \R^n$, $n=1,2,\ldots$, known as \textbf{decompressors} so that  \textbf{almost lossless recovery}  holds in the following sense
		\[\frac{1}{\sqrt{n}}\Big\|(X_1, \ldots, X_n) - F_n (A_n (X_1, \ldots, X_n), A_n)  \Big\|_2 \overset{n \to \infty}{\longrightarrow} 0 \text{ in } (\mu\times \nu)-\text{probability?} \]
	Above $\|\cdot\|_2$ denotes the Euclidean norm.
	\end{itemize}
%\vspace{0.5 cm}

\noindent
Jalali and Poor showed in \cite[Theorem 7]{jalali2017universal} that for $\comp(\cdot)=\umid(\cdot)$, the above question has a positive answer in the class of the $\psi^*$-mixing stochastic processes (thus in particular for i.i.d. processes). Let us recall the definition of $\psi^*$-mixing.

\begin{defn}\label{def: psi star mixing}
	Given a stochastic process $\mathbf{X}$, define for $g \in \N$ the $\mathbf{\psi}^*$\textbf{-mixing coefficient} as
	\[ \psi^*(g) = \sup \frac{\PP(A \cap B)}{\PP(A)\PP(B)}, \]
	where the supremum is taken over all $n \in \N$ and events $A \in \sigma(X_1^n), B \in \sigma(X_{n+g}^\infty)$ such that $\PP(A) > 0$ and $\PP(B) > 0$. A process $\mathbf{X}$ is called $\mathbf{\psi}^*$\textbf{-mixing} if $\lim \limits_{ g \to \infty} \psi^*(g) = 1$.
\end{defn}

Basic examples of $\psi^*$-mixing processes include i.i.d. processes and finite state irreducible aperiodic Markov chains. See \cite{BradleySurvey05} for a comprehensive survey. We prove in Appendix \ref{sec: markov psi star proof} that continuous-state Markov chains with uniformly bounded transition densities are $\psi^*$-mixing.

Returning to the discussion of the Jalali--Poor result \cite[Theorem 7]{jalali2017universal}, their decompressors are given explicitly and produce vectors that match the observed random linear measurements while minimizing a certain empirical entropy functional. In fact, the decompressors in \cite{jalali2017universal} are \textit{universal} in the sense that they are constructed without a prior knowledge of the distribution of $\bm{X}$. Thus the above framework, which incorporates asymptotic analysis of compressed sensing where both the compression matrix and the  input vector are random, results with a considerable extension of the sparsity paradigm, while still providing a universal decompression algorithm\footnote{The paper \cite{jalali2017universal} also contains an extension of the above result to a noisy setting.}.

A natural question which arises is if one may  extend the scope of the above result to a larger class of processes. Indeed it is unknown if the result of \cite{jalali2017universal} stands for the class of all (ergodic) stochastic processes\footnote{Note that a positive result for the class of all stochastic processes was achieved in a weaker setting where one allows decompressors dependent on the distribution of $\bm{X}$ \cite{rezagah2017compression}, i.e. in the non-universal setting. The result is given in terms of $\comp(\cdot) = \idimr(\cdot)$, where $\idimr(\cdot)$ is the information dimension rate (see Section \ref{subsec: mid} and \cite{geiger2019information} for the proof of equality with the rate-distortion dimension employed originally in \cite{rezagah2017compression})}   
Another important question is the question of the so-called \textit{converse}. We achieve a converse under mild technical conditions:

\begin{defn}
	A probability measure $\mu$ on $\R^n$ is said to be \textbf{local dimension regular} if the limit 
	$$
	\lim \limits_{r \to 0} \frac{\log \mu(B^n_2(x,r))}{\log r}
	$$
	exists for $\mu$-a.e. $x\in \R^n$, where $B^n_2(x,r)=\{y\in \R^n:\, \|x-y\|_2 \leq r\}$.
\end{defn}
We will apply this definition to the  measures $\mu_n$ being the distributions of $(X_1,\ldots, X_n)$, referred to as the \textbf{finite-dimensional marginals} of a stochastic process $\bm{X} =(X_i)_{i=1}^\infty$. Our main result is the following.

\begin{thm}[Main Theorem --- Converse for $\psi^*$-mixing stochastic processes]\label{main theorem}
	Let $\bm{X} =(X_i)_{i=1}^\infty$ be a finite variance, stationary, $\psi^*$-mixing stochastic process with local dimension regular finite-dimensional marginals. Consider a sequence $m_n  \in \N$ such that
	$$
	\liminf \limits_{n \to \infty} \frac{m_n}{n} < \umid(\mathbf{X}).
	$$
	Let  $F_n :  \R^{m_n} \times \R^{m_n \times n }\to \R^n$ be a sequence of Borel maps, i.e. an arbitrary family of decompressors. 
	Let $\bm{A} :=(A_n)_{n=1}^\infty$ be an i.i.d. stochastic process of Gaussian matrices $A_n \in \R^{m_n \times n}$,  $n=1,2,\ldots$ (with entries drawn i.i.d from the $N(0,1)$ distribution), independent from $\bm{X}$, with  
	distribution $\nu$. Then 
	\[\frac{1}{\sqrt{n}}\Big\|(X_1, \ldots, X_n) - F_n (A_n (X_1, \ldots, X_n), A_n)  \Big\|_2 
	\]
	does not converge to zero in $(\mu\times \nu)$-probability as $n \to \infty$.
\end{thm}

The proof is given in Section~\ref{sec: laldim main}; it follows directly from Theorem~\ref{thm: laldim main} and Lemma~\ref{lem: aldim idim}.

This result, together with the result of Jalali and Poor \cite[Theorem 7]{jalali2017universal} mentioned above, essentially establishes the upper mean information dimension as the 
\textit{fundamental limit} of compressed sensing of $\psi^*$-mixing stochastic processes. Note that the result is stronger than a converse for the universal compression, as it does not require the compressors to be universal (hence it gives a converse also to the results of \cite{rezagah2017compression} in the class of $\psi^*$-mixing processes). In particular it may be applied to $\psi^*$-mixing Gaussian processes, or i.d.d. sources with mixed discrete-continuous or regular enough fractal distributions - see discussion in Examples \ref{ex: gaussian} and \ref{ex: i.i.d.}. See also Example \ref{ex: sparse} for the analysis of the asymptotically sparse case.

One should note that $\psi^*$-mixing is a strong uniform mixing condition, substantially stronger than the commonly used $\alpha$- (strong), $\beta$-, and $\phi$-mixing conditions; see \cite{BradleySurvey05} for a comprehensive comparison. For example, within the class of stationary Gaussian processes, $\psi^*$-mixing is highly restrictive: it is equivalent to being $m$-dependent for some $m\in \N$, i.e. independence holds for all lags bigger than $m$ (see \cite[Theorem~7.1(3)]{BradleySurvey05}), which in turn is equivalent to the process being a finite-order moving average of an i.i.d.\ sequence of standard Gaussian random variables (see \cite[Theorem~4.4.2]{BrockwellDavis1991}). Thus, in general,   stationary Gaussian autoregressive processes (and more generally stationary  Gaussian ARMA processes) will not be $\psi^*$-mixing. In the same vein Gaussian processes with band-limited or  singular spectral measures are incompatible with $\psi^*$-mixing, as another equivalent characterization of $\psi^*$-mixing in the Gaussian case is the spectral density being of the form $|p(e^{i\lambda})|^2$ for some polynomial~$p$  (see \cite{ibragimov1961spectral}).
 Beyond the Gaussian setting, intermittent dynamical systems i.e., systems which exhibit irregular alternation of phases of apparently periodic and chaotic dynamics (see \cite[Chapter 8.2]{Ott02}) will not be in general $\psi^*$-mixing. In the prototypical example of the Pomeau-Manneville map this can be seen as a consequence of the infinite dependencies for small neighborhoods of the neutral fixed point at $x=0$ (see \cite{Gouezel2004, freitas2016rare}). More generally, metastable systems, i.e., stochastic processes exhibiting multiple almost-invariant regions with rare transitions between them, such as the Curie-Weiss Ising model under Glauber dynamics (see \cite[Chapter~14]{BovierDenHollander2015}) will typically fail to have $\psi^*$-mixing. 
 Our result matches the achievability result of Jalali and Poor \cite{jalali2017universal}, who make the same $\psi^*$-mixing assumption. As far as we know, the converse is new even in the case of i.i.d.\ processes, and dealing with the i.i.d.\ case alone does not appear to significantly simplify the proof, at least with the methods of the present paper. A general class of processes to which the result applies are Markov chains on continuous state spaces with uniformly bounded transition densities with respect to a given reference measure (not necessarily Lebesgue). The $\psi^*$-mixing property holds for arbitrary reference measures, while $\midim(\mathbf{X})$ is easy to calculate for local dimension regular measures, which can include highly fractal measures as elaborated in Example \ref{ex: non dyn}, as well as the canonical examples of i.i.d. processes with mixed discrete-continuous distribution (Example \ref{ex: sparse} below), which have been considered in the context of information theory and analog compression - see e.g. \cite[Theorems 1 and 6]{WV10} or \cite[Corollary~1]{jalali2017universal}. See Proposition~\ref{prop: markov psi star} in Appendix~\ref{sec: markov psi star proof} for the precise statement and proof.

While Theorem~\ref{main theorem} provides a full converse to the achievability problem at the cost of requiring $\psi^*$-mixing and local dimension regularity, we also obtain results under weaker assumptions. Theorem~\ref{thm: mdimcor detail} gives a lower bound on the compression rates in terms of the correlation dimension rate (Definition~\ref{def: mdimcor}) for arbitrary sequences of measures, and hence in particular for arbitrary stochastic processes. The difficulty in applying it lies in the fact that calculating the correlation dimension rate appears to be quite challenging; we do not pursue this problem further in this paper. For us, Theorem~\ref{thm: mdimcor detail} serves as the main technical step for obtaining Theorem~\ref{main theorem}, but it may also be of independent interest. Assuming $\psi^*$-mixing but dropping the local dimension regularity assumption, we obtain an analog of Theorem~\ref{main theorem} with $\midim(\mathbf{X})$ replaced by the mean average local dimension $\mlaldim(\mathbf{X})$ (Definition~\ref{def: mlaldim}) - see Theorem~\ref{thm: laldim main}. This result, too, is primarily a step in the proof of Theorem~\ref{main theorem}, but may be of independent interest as well.

\subsection{Comparison with the Wu-Verd\'u theory}

In recent years there has been a surge in interest in a compressed sensing framework for \textbf{analog signals} modeled by continuous-alphabet discrete-time stochastic processes\footnote{The rigorous passage between continuous-time signals and discrete-time signals is justified by the Nyquist-Shannon sampling theorem (\cite[Chapter 1]{higgins1996sampling}).} (\cite{WV10,donoho2010precise,donoho2011noise,jalali2017universal,rezagah2017compression,mmdimcompress,geiger2019information}). Let us remark that fundamental limits for analog compression have been obtained before, but none of those results apply to the setting of Theorem~\ref{main theorem}. In particular, Wu and Verd\'u \cite{WV10} consider only exact recovery with high probability (i.e. they consider compression schemes with $\PP(X^n \neq \hat{X}^n) < \eps$ for all $n$ large enough). It follows from  \cite[Corollary IX-A.2 and (13)]{mmdimcompress} and \cite[Theorem 9]{geiger2019information} that the information dimension rate $\uidimr(\mathbf{X})$ (see Section \ref{subsec: mid}) is a fundamental limit for the convergence in probability, but only in the case when the recovery function $F_n(y,A)$ is a Lipschitz function of $y$, which is essential for the argument in \cite{mmdimcompress} (see also \cite{GSNewBounds}). As the decompressors appearing in \cite{jalali2017universal, rezagah2017compression} are not even continuous (as they employ quantization), considering discontinuous recovery functions is crucial for applications.

\subsection{The asymptotically sparse case}

As a simple, yet informative application of our results, let us consider the case of i.i.d systems generating asymptotically sparse vectors.
\begin{example}\label{ex: sparse}
	Fix $p \in (0,1)$ and let $\mu_1 = (1-p)\delta_0 + p\Leb|_{[0,1]}$. Set $\mu = \mu_1^\N$ and note that $\mu$ is a distribution of an i.i.d. stochastic process $\bm{X} = (X_1, X_2, \ldots)$ with mixed discrete-continuous distribution. It follows from the Strong Law of Large Numbers that
	\[ \lim \limits_{n \to \infty}\frac{1}{n}\|(X_1, \ldots, X_n)\|_0 = p \text{ almost surely},\]
	hence a typical realization of the process is asymptotically $(pn + o(n))$-sparse. The assumptions of Theorem~\ref{main theorem} are satisfied by $\bm{X}$ and
	\[\midim(\bm{X}) = \idim(X_1) = p\]
	(see Example \ref{ex: i.i.d.} for details). It therefore follows from Theorem~\ref{main theorem} (together with the results of \cite{jalali2017universal, rezagah2017compression}) that the condition
	\[ 	\liminf \limits_{n \to \infty} \frac{m_n}{n} > p \]
	is the precise threshold for the existence of decompressors $F_n$ providing an almost lossless recovery of $(X_1, \ldots, X_n)$ from its random Gaussian measurement $A_n(X_1, \ldots, X_n)$, i.e. satisfying $\lim \limits_{n \to \infty}\frac{1}{\sqrt{n}}\Big\|(X_1, \ldots, X_n) - F_n (A_n, A_n (X_1, \ldots, X_n))  \Big\|_2 \to 0$ in probability.
\end{example}

The above can be compared with more constrained problems of finding the asymptotic thresholds for the problems of recovery of sparse vectors using the $\ell_1$-minimization algorithm. This can be considered in the setting of \textit{uniform recovery} (i.e. for high probability of Gaussian matrices $A$, recovering every $s$-sparse vector $x$ from its measurement $y= Ax$ via $\ell_1$-minimization) and \textit{non-uniform recovery} (i.e. for fixed $s$-sparse vector $x$, recovering it from the measurement $y = Ax$ via $\ell_1$-minimization with a high probability on the draw of a Gaussian matrix $A$). The asymptotic study was performed by Donoho and Tanner \cite{Donoho06a, DonohoTanner05, DonohoTanner05a, DonohoTanner09}. In this case, the thresholds are more complicated, require more measurements and they are not given in closed forms - see \cite{Donoho06a, DonohoTanner09} for more details and \cite[Section 9]{FR13} for a summary.

\subsection{The method and the structure of the paper}

For the proof of Theorem~\ref{main theorem}, we introduce two new complexity measures for stochastic processes and prove lower bounds for compression rates in their terms. The \textit{correlation dimension rate} (denoted $\mdimcor(\mathbf{X})$) gives a lower bound on compression rates for arbitrary stationary processes (see Theorem~\ref{thm: mdimcor detail}). Using the $\psi^*$-mixing condition (see Proposition \ref{prop: restriction}), we then improve this to a lower bound in terms of the \textit{mean average local dimension} (denoted $\mlaldim(\mathbf{X})$), which is easier to compute (see Theorem~\ref{thm: laldim main}). Theorem~\ref{main theorem} finally follows from the fact that $\mlaldim(\mathbf{X})$ coincides with $\midim(\mathbf{X})$ under mild regularity assumptions (see Lemma~\ref{lem: aldim idim}). As these new notions are central to our methods, we include in Section~\ref{subsec: discussion} a discussion of some of their relations to more standard concepts in information theory.

The proof of Theorem \ref{thm: mdimcor detail}, which is our main technical result, is based on combining energy method of \cite{JMSections} with concentration inequalities for high-dimensional Gaussian matrices \cite{JalaliMalekiBaraniuk14,VershyninBook}. It can be seen as an attempt to develop methods of \textit{high-dimensional geometric measure theory}, which can be applied to stochastic processes rather than finite-dimensional measures. The correlation dimension rate can be seen as a dynamical version of the correlation dimension, defined in terms of energy integrals (see Section \ref{subsec: corr dim}). From our point of view, its usefulness stems from the fact that it works well with potential-theoretic (energy) methods of proving projection \cite{HuTaylorProjections}, embedding \cite{BGS20, BGSRegularity} and slicing \cite{JMSections} theorems for random orthogonal projections. See \cite{BGSRegularity} for a discussion of its connections with fundamental limits of lossless compression by random linear maps in a fixed finite dimension. It turns out that for our needs  it is crucial to have a quantitative control on the growth of energies of finite-dimensional marginals of a stochastic process.

The paper is organized as follows. Section \ref{sec: Preliminaries} introduces basic definitions and concepts.  In Section \ref{sec: mdimcor main} we prove Theorem \ref{thm: mdimcor detail} (converse for general sources in terms of the correlation dimension rate), while in Section \ref{sec: laldim main} we prove Theorem \ref{thm: laldim main} (converse for $\psi^*$-mixing processes in terms of the mean average local dimension) and deduce Theorem~\ref{main theorem} from it. The appendices contain auxiliary proofs and additional examples.

	\section{Dimensions: definitions and notation}\label{sec: Preliminaries}

%\section{Main Results}\label{sec: main results}

The methods of this paper are strongly based on dimension theory methods from the geometric measure theory and its dynamical counterpart, adapted to the setting of stochastic processes. This section provides a detailed discussion of the necessary apparatus, together with definitions of complexity measures for stochastic processes that will be used throughout the paper (including some relations with more standard notions from information theory and geometric measure theory). Section~\ref{subsec: discussion} discusses some connections of these (mean) dimension notions with entropy.

\subsection{General notation and standing conventions}\label{subsec: conventions}

Throughout the article, all logarithms will be in base $2$ and $\|\cdot\|_2$ will always denote the Euclidean norm on $\R^n$. We shall write $B_2^n(x,r)$ for the closed $r$-ball around $x$ in the Euclidean norm and $B_\infty^n(x,r)$ for the closed ball in the supremum norm. For a linear map $A : \R^n \to \R^k$ we will denote by $\|A\|$ the operator norm of $A$ with respect to Euclidean norms on $\R^n$ and $\R^k$.

By $\Leb_n$ we shall denote the Lebesgue measure on $\R^n$ and by $\alpha(n) = \Leb_n(B_2^n(0,1))$ the volume of a unit $n$-ball, so that $\Leb_n(B_2^n(x, r)) = \alpha(n)r^n$. For a measure $\mu$ and a measurable map $\phi$,  we will denote the transport of $\mu$ by $\phi$ as $\phi \mu$, i.e. $$\phi\mu(A) := \mu(\phi^{-1}{A})$$ for measurable sets $A$.

Given a strictly increasing sequence $n_k$ of natural numbers and two sequences $A_{n_k}$ and $B_{n_k}$ we write $A_{n_k} \lesssim^e B_{n_k}$ to denote that there exists $C \geq 0$ such that inequality $A_{n_k} \leq C^{n_k} B_{n_k}$ holds for every $k \in \N$ large enough (so $B_{n_k}$ bounds $A_{n_k}$ up to an exponential factor). If $C$ and the range of $k$ are allowed to depend on some parameters, this will be indicated in the lower index, e.g. $A_{n_k} \lesssim^e_{M,\delta} B_{n_k}$ means that there exists $C(M, \delta)$ and $k_0(M, \delta)$ such that $A_{n_k} \leq C(M, \delta)^{n_k} B_{n_k}$ for all $k \geq k_0(M, \delta)$.  Throughout the paper, $A_{n_k}$ and $B_{n_k}$ will depend on a given stochastic process. As we will work with a fixed stochastic process~$\mathbf{X}$, we suppres the dependence on $\mathbf{X}$ in the notation.

\subsection{Local dimensions}
\begin{defn}
	Let $\mu$ be a probability measure on $\R^n$. We define the \textbf{lower and upper local dimensions} of $\mu$ at $x \in \supp \mu$ as
	\[ \ld(\mu, x) := \liminf \limits_{r \to 0} \frac{\log \mu(B^n_2(x,r))}{\log r},\ \ud(\mu, x) := \limsup \limits_{r \to 0} \frac{\log \mu(B^n_2(x,r))}{\log r}\]
	and $\ld(\mu, x) = \ud(\mu, x) = 0$ for $x \notin \supp \mu$. If $\ld(\mu, x) = \ud(\mu, x)$, then we denote their common value $d(\mu, x)$ and call it the \textbf{local dimension} of $\mu$ at $x$. The \textbf{lower and upper average local dimensions} of $\mu$ are defined as
	\[ \laldim(\mu) := \int \ld(\mu, x)d\mu(x),\ \ualdim(\mu) := \int \ud(\mu, x)d\mu(x). \]
\end{defn}

The local dimension $d(\mu,x)$ captures the power-law scaling exponent of $\mu(B(x,r))$ as $r\to 0$, giving coarser information than the pointwise $s$-densities $\Theta^s(\mu,x) = \lim \limits_{r\to 0} \mu(B(x,r))/r^s$ studied in geometric measure theory (see~\cite[Chapter~6]{mattila}). More precisely, the local dimension retains only the critical exponent $s$ at which the $s$-density transitions from infinity to zero (see~\cite[Sections~1.9, 2.1]{BSSBook}), while the $s$-density at this critical exponent carries finer information about the multiplicative constant. The local dimensions are also closely related to the Hausdorff dimension $\hdim$ (a classical notion of dimension in geometric measure theory, see e.g. \cite[Section 4.3]{mattila}): for any Borel set $A$ with $\mu(\R^n \setminus A) = 0$, $\hdim(A) \geq \underset{x \sim \mu}{\esssup}\ \ld(\mu, x)$, and the Hausdorff dimension of the measure, defined as $\hdim(\mu) := \inf\{\hdim(A) : \mu(\R^n \setminus A) = 0\}$, satisfies $\hdim(\mu) = \underset{x \sim \mu}{\esssup}\ \ld(\mu, x)$, see \cite[Section~1.9]{BSSBook}. We therefore have $\laldim(\mu) \leq \hdim (\mu).$

Given a random variable $X$ taking values in $\R^n$, we will denote by $\laldim (X)$ and $\ualdim (X)$ the average local dimensions of the distribution of $X$ on $\R^n$, i.e. $\laldim (X) := \laldim (\mu_X)$ with $\mu_X$ defined by $\mu_X(A) = \PP(X \in A)$, where $X$ is a random vector on a probability space $(\Om, \Fk, \PP)$. We will use the same convention for all other notions of dimension that appear throughout the paper, e.g. $\lid(X) := \lid(\mu_X)$ for the information dimension defined below.

A useful basic fact (following e.g. from \cite[Theorem 1.9.5.(ii)]{BSSBook}) is that for a finite Borel measure on $\R^n$
\begin{equation}\label{eq: loc dim leq n}
	0 \leq \ld(\mu, x) \leq \ud(\mu,x) \leq n \text{ for } \mu\text{-a.e. } x.
\end{equation}
Consequently if $\mu$ is a probability measure, then
\begin{equation}\label{eq: aldim leq n}
	0 \leq \laldim(\mu) \leq \ualdim(\mu) \leq n
\end{equation}

\begin{defn} Let $\mu$ be a probability measure on $\R^n$. We say that $\mu$ is \textbf{local dimension regular}, if the local dimension of $\mu$ exists at $\mu$-a.e. $x \in \R^n$.
	Then, we define the  \textbf{average local dimension} of $\mu$ as
	\[ \aldim(\mu) = \int d(\mu, x) d\mu(x). \]
\end{defn}
Note that $\mu$ is local dimension regular if and only if $\laldim(\mu) = \ualdim(\mu)$ and then $\aldim(\mu)$ equals their common value.

\subsection{Information dimensions}
\begin{defn}\label{def: id}
	For a Borel probability measure $\mu$ on $\R^n$ the \textbf{lower} and \textbf{upper information dimensions} of $\mu$ are
	\[ \lid(\mu) = \liminf \limits_{r \to 0} \int \limits_{\supp(\mu)} \frac{\log \mu(B_2^n(x,r))}{\log r} d\mu(x) \text{ and } \uid(\mu) = \limsup \limits_{r \to 0} \int \limits_{\supp(\mu)} \frac{\log \mu(B_2^n(x,r))}{\log r} d\mu(x).\]
	If $\lid(\mu) = \uid(\mu)$, then we denote their common value as $\idim (\mu)$ and call it the \textbf{information dimension} of $\mu$.
\end{defn}

\begin{rem} Information dimensions of a non-compactly supported measure $\mu$ may be infinite if $\int \log \mu(B_2^n(x,r)) d\mu(x)$ is infinite for some $r > 0$. If however $\uid(\mu) < \infty$, then automatically $0 \leq \lid(\mu) \leq \uid(\mu) \leq n$. This will be so if $\mu$ has finite variance (in fact $\int \|x\|_2^\eps d\mu(x) < \infty$ for some $\eps>0$ suffices), see \cite[Proposition 1]{WV10} for details. Moreover, information dimensions can be alternatively defined as
	\begin{equation}\label{eq:id_equiv_def} \lid(\mu) = \liminf \limits_{\eps \to 0} \frac{1}{\log \eps} \sum \limits_{C \in \mathcal{C}_{\eps}}\mu(C)\log\mu(C) \text{ and } \uid (\mu) = \limsup \limits_{\eps \to 0} \frac{1}{\log \eps} \sum \limits_{C \in \mathcal{C}_{\eps}}\mu(C)\log\mu(C)
	\end{equation}
	where $\mathcal{C}_{\eps}$ is the partition of $\R^n$ into cubes with side length $\eps$ and vertices on the lattice $(\eps\Z)^n$, see e.g. \cite[Proposition 4]{WV10}. Moreover, it suffices to take (upper and lower) limits along the sequence $\eps_k = 1/k$ or $\eps_k = 2^{-k}$.
\end{rem}

By~\eqref{eq:id_equiv_def}, writing $H_\eps(\mu) := -\sum \limits_{C \in \mC_\eps}\mu(C)\log\mu(C)$ for the Shannon entropy of the $\eps$-quantization of $\mu$, the information dimension reads $\idim(\mu) = \lim \limits_{\eps \to 0} H_\eps(\mu) / (-\log \eps)$.
Thus $\idim(\mu)$ measures the growth rate of Shannon entropy under fine quantization: it is, in this sense, a Shannon entropy-based notion of dimension of a measure.

\noindent The following Lemma, giving a basic relationship between $\adim(\mu)$ and $\idim(\mu)$,  is proven in Appendix \ref{sec: lem aldim to idim proof}. 
\begin{lemma}\label{lem: aldim to idim}
	Let $\mu$ be a probability measure on $\R^n$ with finite variance. Then
	\begin{equation}\label{eq: aldim id comp} \laldim(\mu) \leq \lid (\mu)\leq \uid (\mu) \leq \ualdim(\mu) \leq n. \end{equation}
	Moreover, if $\mu$ is local dimension regular then $\aldim(\mu) = \idim(\mu)$ (in particular, both quantities exist).
\end{lemma}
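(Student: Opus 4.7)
The plan is to establish the chain $\laldim \mu \le \lid(\mu) \le \uid(\mu) \le \ualdim \mu \le n$ by Fatou-type arguments; the last inequality is \eqref{eq: aldim leq n} and the middle one is immediate from the definitions. For the first inequality, observe that for $r \in (0,1)$ and $x \in \supp \mu$ the integrand
\[ f_r(x) := \frac{\log \mu(B_2^n(x,r))}{\log r} \]
is non-negative, since $\log r < 0$ and $\mu(B_2^n(x,r)) \in (0,1]$. Applying Fatou's lemma along a sequence $r_k \to 0^+$ realizing the $\liminf$ in the definition of $\lid(\mu)$ yields $\laldim \mu \le \lid(\mu)$.

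The core of the argument is the third inequality $\uid(\mu) \le \ualdim \mu$, which I would obtain from the reverse Fatou lemma. This requires an integrable majorant for $f_r$, uniform for small $r$. For $r \in (0, 1/2]$ one has $|\log r| \ge 1$, so by monotonicity of $r \mapsto \mu(B_2^n(x,r))$,
\[ f_r(x) \le -\log \mu(B_2^n(x,r)) \le -\log \mu(B_2^n(x, 1/2)) =: g(x). \]
It then suffices to show $g \in L^1(\mu)$, and this is the step where the finite-variance hypothesis must enter. I would compare with a cubic quantization at a scale small enough to fit inside balls of radius $1/2$: let $\mathcal{C}$ denote the partition of $\R^n$ into cubes of side length $s := 1/(2\sqrt{n})$ on the lattice $(s\Z)^n$, so that any such cube containing a point $x$ is contained in $B_2^n(x, 1/2)$. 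Writing $Q(x)$ for the cube containing $x$, monotonicity of the logarithm gives $g(x) \le -\log \mu(Q(x))$, hence
\[ \int g \, d\mu \le -\sum_{Q \in \mathcal{C}} \mu(Q) \log \mu(Q) = H_s(\mu), \]
which is finite under the finite-variance assumption by \cite[Proposition 1]{WV10} (a Gibbs-type comparison with a reference measure of Gaussian decay reduces the bound to the finiteness of an $\eps$-th moment). With this majorant in hand, the reverse Fatou lemma delivers $\uid(\mu) \le \ualdim \mu$.

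For the ``moreover'' clause, local dimension regularity gives $\ld(\mu,x) = \ud(\mu,x)$ for $\mu$-a.e.\ $x$, so $\laldim \mu = \ualdim \mu = \aldim \mu$; the established chain then collapses and forces $\lid(\mu) = \uid(\mu)$, so that $\idim(\mu)$ exists and equals $\aldim(\mu)$. The main technical step is the $L^1(\mu)$-integrability of $g$; everything else is routine.
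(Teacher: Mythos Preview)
Your argument has a genuine gap at the construction of the integrable majorant. You claim
\[
f_r(x) \le -\log \mu(B_2^n(x,r)) \le -\log \mu(B_2^n(x,1/2)) =: g(x),
\]
but the second inequality points the wrong way: for $r \le 1/2$ one has $B_2^n(x,r) \subset B_2^n(x,1/2)$, hence $\mu(B_2^n(x,r)) \le \mu(B_2^n(x,1/2))$, and applying $-\log$ reverses the inequality, giving $-\log \mu(B_2^n(x,r)) \ge -\log \mu(B_2^n(x,1/2))$. So $g$ is a \emph{lower} bound for $-\log \mu(B_2^n(x,r))$, not an upper bound. In fact $-\log \mu(B_2^n(x,r))$ is typically unbounded as $r \to 0$, so no single-scale quantity like $g$ can dominate it; the point is that the \emph{ratio} $f_r$ stays controlled because the denominator $-\log r$ also blows up, but your bound throws the denominator away at the first step and then tries to recover.

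The paper's proof handles this by bounding the distribution function of $\sup_{0<r\le 1/5} f_r(x)$ directly. For the level set $A_t = \{x : \exists\, r\le 1/5,\ \mu(B_2^n(x,r)) < r^t\}$, a Vitali $5r$-covering argument gives $\mu(A_t \cap B_2^n(0,t)) \le 5^{2n-t} t^n$ (exponentially small in $t$), while Chebyshev and the finite-variance hypothesis control $\mu(\R^n \setminus B_2^n(0,t))$. Integrating in $t$ then yields the required $L^1$ bound. Your outline for the other inequalities and the ``moreover'' clause is fine, but the majorant step needs to be redone along these lines.
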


\begin{example}\label{ex: non dyn}
	Lemma \ref{lem: aldim to idim} immediately gives a number of examples where $\aldim(\mu)$ is easy to compute. For instance, if $\mu$ is an absolutely continuous measure on a smooth $d$-dimensional submanifold in $\R^n$, then $\aldim(\mu) = d$, and if $\mu = (1-p)\mu_d + p\mu_c$, where $\mu_d$ is a discrete measure (on countably many atoms) and $\mu_c$ is an absolutely continuous measure in $\R^n$ (i.e. $\mu$ has a mixed distribution), then $\aldim(\mu) = pn$, see e.g. \cite{renyi1959dimension}. Moreover, measures with dynamical symmetries often are local dimension regular, e.g. invariant hyperbolic measures for $C^{1+\alpha}$ diffeomorphisms of Riemannian manifolds \cite{BPS99} or self-affine \cite{FengDimensionSelfAffine} and self-conformal measures \cite{FengHu09}. On the other hand, it is not difficult to construct measures with all inequalities in \eqref{eq: aldim id comp} being strict, see e.g. \cite[Section 3]{FanLauRao02}.
\end{example}

\subsection{Mean information dimension and information dimension rate}\label{subsec: mid}
Let us now consider several complexity measures of stochastic processes which are based on dimension notions from information theory and geometric measure theory. Through the paper, all stochastic processes are assumed to be $\R$-valued. Given a stochastic process $\mathbf{X} = (X_1 ,X_2, \ldots)$ we will use the notation $X_k^n := (X_k, \ldots, X_n)$ for $k,n \in \N \cup \{ \infty \}$ and a shorthand $X^n = X_1^n$. We will denote by $(\Om, \Fk, \PP)$ the underlying probability space. For $k \geq 1$ let $[X^n]_k := \frac{\lfloor k X^n \rfloor}{k}$ be the quantization of $X^n$ in scale $1/k$ (this is a random variable taking values in $(\frac{1}{k}\mathbb{Z})^n$). Let $H([X^n]_k)$ denote the Shannon entropy of $[X^n]_k$. Let $\mathbf{X}$ be stationary and such that $H([X^1]_1)<\infty$. The upper mean information dimension was defined in \eqref{eq: umid def}. In terms of the information dimensions, we can equivalently define the \textbf{upper and lower mean information dimensions} of a stationary stochastic process as
\[ \umid(\mathbf{X}) = \lim \limits_{n \to \infty} \frac{\uid(X^n)}{n}\ \text{ and }\ \lmid(\mathbf{X}) = \liminf \limits_{n \to \infty} \frac{\lid(X^n)}{n}. \]
The \textbf{upper and lower information dimension rates} of $\mathbf{X}$ are defined as
\[ \uidimr(\mathbf{X}) = \limsup \limits_{k \to \infty} \lim \limits_{n \to \infty} \frac{H([X^n]_k)}{n \log k}\ \text{ and }\ \lidimr(\mathbf{X}) = \liminf \limits_{k \to \infty} \lim \limits_{n \to \infty} \frac{H([X^n]_k)}{n \log k}. \]

In both definitions, whenever the (double) limit exist, we refer to it as the \textbf{information dimension rate}, denoted $\idimr(\mathbf{X})$, and the \textbf{mean information dimension}, denoted $\midim(\mathbf{X})$, respectively (in other words, $\idimr(\mathbf{X})$ exists if $\lidimr(\mathbf{X}) = \uidimr(\mathbf{X})$ and equals their common value, and likewise for $\midim(\mathbf{X})$). Note, comparing the above defintions with \eqref{eq: umid def}, that the definitions of $\midim(\cdot)$ and $\idimr(\cdot)$ differ only by the order of limits with respect to $n$ (the "time") and $k$ (corresponding to the "scale" $1/k$). The information dimension rate was introduced by Geiger and Koch \cite{geiger2019information}\footnote{see also \cite{GS21VarPrin} for a definition valid for general dynamical systems and \cite{YCZ25} for a panorama of related concepts.} and the mean information dimension by Jalali and Poor \cite{jalali2017universal} (the original definition is different, but it agrees with the above one by \cite[Lemma 3]{jalali2017universal}); note that we use different notation than in those papers. Geiger and Koch proved that the information dimension rate coincides with the rate-distortion dimension as defined by Rezagah et at \cite{rezagah2017compression} and inequalities
\begin{equation}\label{lem: idimr mid ineq}
	\uidimr(\mathbf{X}) \leq \umid(\mathbf{X}) \leq 1\ \text{ and }\ \lidimr(\mathbf{X}) \leq \lmid(\mathbf{X}) \leq 1
\end{equation}
hold \cite[Theorem 14]{geiger2019information}. As $\idimr(\mathbf{X}) = \lim \limits_{k\to\infty} h([X]_k)/\log k$ for $h([X]_k) = \lim \limits_{n \to \infty} \frac{H([X^n]_k)}{n \log k}$, we see that $\idimr(\mathbf{X})$ is the limit of entropy rates (see e.g. \cite[Section 4.2]{cover2012elements}) of the quantized process, normalized in a dimension-like fashion.

\subsection{Mean average local dimension}

Let us now define the mean average local dimension of a stochastic process.

\begin{defn}\label{def: mlaldim}
	Let $\mathbf{X} = (X_1, X_2, \ldots)$ be a stochastic process. Its \textbf{upper and lower mean average local dimensions} are defined as
	\[ \mlaldim(\mathbf{X}) = \liminf \limits_{n \to \infty} \frac{\laldim(X^n)}{n}\ \text{ and }\   \mualdim(\mathbf{X}) = \limsup \limits_{n \to \infty} \frac{ \ualdim(X^n)}{n}. \]
\end{defn}

In the following lemmas we compare tha mean average local dimensions with $\midim$ and $\idimr$. Let us begin with general sources.

\begin{lem}\label{lem: aldim mid idimr}
	Let $\mathbf{X} = (X_1, X_2, \ldots)$ be a stationary stochastic process with finite variance. Then
	\begin{enumerate}
		\item $\mlaldim(\mathbf{X}) \leq \lmid (\mathbf{X}) \leq 1$,
		\item $\uidimr (\mathbf{X}) \leq \umid(\mathbf{X}) \leq \mualdim(\mathbf{X}) \leq 1$.
	\end{enumerate}
\end{lem}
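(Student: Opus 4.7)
The plan is to reduce both statements to Lemma \ref{lem: aldim to idim} applied to the finite-dimensional marginals $X^n$, and then pass to the appropriate asymptotic limit in $n$.

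First, I would observe that the finite-variance hypothesis on $\mathbf{X}$, together with stationarity, guarantees that every marginal $X^n$ has finite variance on $\R^n$: indeed $\E|X^n|^2 = n\,\E X_1^2 < \infty$. Consequently Lemma \ref{lem: aldim to idim} applies to the distribution of $X^n$ and yields, for every $n\in\N$, the chain
\[
\laldim(X^n) \;\le\; \lid(X^n) \;\le\; \uid(X^n) \;\le\; \ualdim(X^n) \;\le\; n.
\]
Everything that follows is obtained by dividing this chain by $n$ and taking limits in the appropriate sense.

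For part (1), dividing $\laldim(X^n)\le\lid(X^n)$ by $n$ and taking $\liminf_{n\to\infty}$ on both sides delivers $\mlaldim \mathbf{X} \le \lmid(\mathbf{X})$ directly from the definitions. The bound $\lmid(\mathbf{X})\le 1$ then follows either from the last inequality in the chain ($\lid(X^n)\le n$ implies $\tfrac{\lid(X^n)}{n}\le 1$) or by citing \eqref{lem: idimr mid ineq}. For part (2), the left-most inequality $\uidimr(\mathbf{X}) \le \umid(\mathbf{X})$ is precisely \eqref{lem: idimr mid ineq} (from \cite[Theorem 14]{geiger2019information}). For the middle inequality, I would divide $\uid(X^n)\le\ualdim(X^n)$ by $n$; since $\umid(\mathbf{X}) = \lim_{n\to\infty}\tfrac{\uid(X^n)}{n}$ exists by definition, we obtain
\[
\umid(\mathbf{X}) \;=\; \lim_{n\to\infty}\frac{\uid(X^n)}{n} \;\le\; \limsup_{n\to\infty}\frac{\ualdim(X^n)}{n} \;=\; \mualdim \mathbf{X}.
\]
Finally, $\ualdim(X^n)\le n$ (from \eqref{eq: aldim leq n}) gives $\mualdim\mathbf{X}\le 1$, closing the chain.

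I do not expect a genuine obstacle here: the only substantive input is Lemma \ref{lem: aldim to idim}, which is established separately in Appendix \ref{sec: lem aldim to idim proof}, and the inequality $\uidimr(\mathbf{X})\le\umid(\mathbf{X})$, already recorded in \eqref{lem: idimr mid ineq}. Granting these, the lemma amounts to routine bookkeeping with liminf/lim/limsup, with the only mild care needed being to match the correct limit operation (liminf on the left, lim/limsup on the right) to each definition.
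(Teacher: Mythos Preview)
Your proposal is correct and matches the paper's own proof, which simply states that the lemma follows from Lemma~\ref{lem: aldim to idim} and inequalities~\eqref{lem: idimr mid ineq}. You have spelled out precisely the routine bookkeeping the paper leaves implicit.
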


\begin{proof}
	This follows from Lemma \ref{lem: aldim to idim} and inequalities \eqref{lem: idimr mid ineq}.
\end{proof}

\begin{lem}\label{lem: aldim idim}
	Let $\mathbf{X} = (X_1, X_2, \ldots)$ be a stationary stochastic process with finite variance and assume that all finite-dimensional marginals of $\mathbf{X}$ are local dimension regular. Then
	\[ \mlaldim(\mathbf{X}) = \mualdim(\mathbf{X}) = \lim \limits_{n \to \infty}  \frac{\aldim(X^n)}{n} = \midim(\mathbf{X}). \]
\end{lem}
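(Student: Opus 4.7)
The plan is that this lemma is essentially a pointwise-in-$n$ corollary of Lemma~\ref{lem: aldim to idim}, so the entire argument reduces to verifying the hypotheses of that lemma for each finite-dimensional marginal and then dividing by $n$ and passing to the limit.

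First I would check that the hypotheses of Lemma~\ref{lem: aldim to idim} hold for every finite-dimensional marginal $\mu_n$ of $\mathbf{X}$, i.e.\ the distribution of $X^n$ on $\R^n$. Local dimension regularity of $\mu_n$ is assumed. Finite variance of $\mu_n$ is inherited from $\mathbf{X}$: by stationarity, $\E|X^n|^2 = \sum_{i=1}^n \E|X_i|^2 = n\,\E|X_1|^2 < \infty$. Therefore Lemma~\ref{lem: aldim to idim} applies to $\mu_n$ and (combining its two conclusions) yields the chain of equalities
\[ \laldim(X^n) \;=\; \lid(X^n) \;=\; \uid(X^n) \;=\; \ualdim(X^n) \;=\; \aldim(X^n) \;=\; \idim(X^n) \]
for every $n \in \N$.

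Next I would divide by $n$ and pass to the appropriate limits. From the equality $\laldim(X^n)/n = \lid(X^n)/n$ and the definitions,
\[ \mlaldim \mathbf{X} \;=\; \liminf_{n\to\infty} \frac{\laldim(X^n)}{n} \;=\; \liminf_{n\to\infty} \frac{\lid(X^n)}{n} \;=\; \lmid(\mathbf{X}), \]
and similarly $\mualdim \mathbf{X} = \limsup_{n\to\infty} \uid(X^n)/n$. Since $\uid(X^n)/n = \lid(X^n)/n$ for every $n$, the limsup on the right equals the liminf of $\lid(X^n)/n$, so all four quantities $\mlaldim \mathbf{X}$, $\mualdim \mathbf{X}$, $\lmid(\mathbf{X})$, $\umid(\mathbf{X})$ coincide; in particular the limit in the definition of $\umid(\mathbf{X})$ exists and $\midim(\mathbf{X})$ is well-defined and equal to this common value. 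Applying the same squeeze to $\aldim(X^n)/n = \idim(X^n)/n$ shows that $\lim_{n\to\infty} \aldim(X^n)/n$ exists and equals $\midim(\mathbf{X})$, completing the four-fold equality.

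There is no real obstacle here, since the whole content is packaged in Lemma~\ref{lem: aldim to idim}; the only thing to be mildly careful about is checking that finite variance of the one-dimensional marginal $X_1$ propagates to the joint $X^n$ (immediate from stationarity) so that Lemma~\ref{lem: aldim to idim} can legitimately be invoked at every scale.
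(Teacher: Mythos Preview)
Your approach is the same as the paper's --- apply Lemma~\ref{lem: aldim to idim} pointwise in $n$ to obtain $\laldim(X^n) = \lid(X^n) = \uid(X^n) = \ualdim(X^n) = \idim(X^n)$, then divide by $n$ and pass to the limit --- but there is a logical slip in how you handle the limit. You write ``Since $\uid(X^n)/n = \lid(X^n)/n$ for every $n$, the limsup on the right equals the liminf of $\lid(X^n)/n$'', and then conclude that the limit in the definition of $\umid(\mathbf{X})$ exists. But knowing that two sequences agree term by term only tells you their limsups agree and their liminfs agree; it does \emph{not} tell you the limsup equals the liminf. Nothing in your argument establishes that $\lim_n \idim(X^n)/n$ exists.

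The paper closes this gap by invoking subadditivity: the sequence $n \mapsto \idim(X^n)$ is subadditive for a stationary process (since Shannon entropy is subadditive and $\idim(X^n)$ is a limit of normalized entropies via \eqref{eq:id_equiv_def}), so Fekete's lemma gives the limit. Equivalently, you could simply note that in Section~\ref{subsec: mid} the quantity $\umid(\mathbf{X}) = \lim_n \uid(X^n)/n$ is already written as a genuine limit (its existence being a consequence of that same subadditivity); once you accept this, the pointwise equalities from Lemma~\ref{lem: aldim to idim} immediately transfer the limit to all the other sequences, and the four-fold equality follows. Either way, the missing ingredient is subadditivity, not the pointwise identity $\uid(X^n) = \lid(X^n)$.
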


\begin{proof}
	Lemma \ref{lem: aldim to idim} gives $\laldim(X^n) = \ualdim(X^n) = \aldim(X^n) = \idim (X^n)$. The existence of the limit $\lim \limits_{n \to \infty} \frac{\idim (X^n)}{n}$ follows from the subadditivity of the sequence $n \mapsto \idim (X^n)$ (which in turn follows from the subadditivity of Shannon's entropy)
\end{proof}

The assumption of local dimension regularity of finite-dimensional distributions cannot be omitted in the above lemma. See Appendix \ref{sec: maldim noneq} for the details.

\begin{example}\label{ex: gaussian}
	Let $\mathbf{X} = (X_1, X_2, \ldots)$ be a stationary Gaussian process. Then $X^n$ has an absolutely continuous distribution on a $k$-dimensional linear subspace of $\R^n$, where $k = \rank(\Sigma_n)$ with $\Sigma_n$ being the covariance matrix of $X^n$. Therefore $X^n$ has local dimension regular finite-dimensional marginals, and hence Lemma \ref{lem: aldim idim} gives
	\[ \mlaldim(\mathbf{X}) = \mualdim(\mathbf{X}) = \midim(\mathbf{X}) = \lim \limits_{n \to \infty} \frac{\rank(\Sigma_n)}{n}. \]
	
	Combining this with \cite[Example 4]{geiger2019information} yields an existence of a stationary Gaussian process with
	\[  \idimr(\mathbf{X}) < \mlaldim(\mathbf{X}) = \mualdim(\mathbf{X}) = \midim(\mathbf{X}). \]
\end{example}

\begin{lem}\label{lem: psi star mlaldim}
	Let $\mathbf{X}$ be a stationary, $\psi^*$-mixing stochastic process. Then the limit defining $\mlaldim(\mathbf{X})$ exists, i.e. $ \mlaldim(\mathbf{X}) = \lim \limits_{n \to \infty}  \frac{\laldim(X^n)}{n}$. Moreover, $\umid(\mathbf{X}) = \uidimr(\mathbf{X})$ in this case.
\end{lem}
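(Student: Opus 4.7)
The plan is to leverage the approximate-independence captured by $\psi^*$-mixing: it upgrades the trivial subadditivity of both Shannon entropies and integrated local dimensions into matching super-additive bounds (modulo a mixing gap of length $g$), after which a Fekete-type argument promotes these one-sided comparisons into the desired limit statements. The same blocking mechanism drives both parts.

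For the existence of $\lim_n \laldim(X^n)/n$, I fix $n,m,g \in \N$ and a point $(x,y,z) \in \R^{n+g+m}$ with $x \in \R^n$, $y \in \R^g$, $z \in \R^m$. Writing $\mu_k$ for the law of $X^k$, the inclusion $B_2^{n+g+m}((x,y,z),r) \subseteq B_2^n(x,r) \times \R^g \times B_2^m(z,r)$ together with the $\psi^*$-mixing property applied to the events determined by the two ball factors yields
\[
\mu_{n+g+m}\bigl(B_2^{n+g+m}((x,y,z),r)\bigr) \leq \psi^*(g)\, \mu_n(B_2^n(x,r))\, \mu_m(B_2^m(z,r)).
\]
Taking $\log$, dividing by $\log r < 0$ (reversing the inequality), and passing to $\liminf_{r\to 0}$ -- the term $\log\psi^*(g)/\log r$ vanishing in the limit -- gives the pointwise inequality $\ld(\mu_{n+g+m},(x,y,z)) \geq \ld(\mu_n,x) + \ld(\mu_m,z)$. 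Integrating against $\mu_{n+g+m}$ and using stationarity to identify the first-$n$ and last-$m$ marginals, I obtain
\[
\laldim(X^{n+g+m}) \geq \laldim(X^n) + \laldim(X^m).
\]
Fitting $t := \lfloor N/(n+g) \rfloor$ such length-$n$ blocks into $\{1,\ldots,N\}$ and iterating gives $\laldim(X^N) \geq t\, \laldim(X^n)$, so $\liminf_N \laldim(X^N)/N \geq \laldim(X^n)/(n+g)$. Taking $\limsup$ over $n$ and using $(n+g)/n \to 1$ forces $\liminf_N \laldim(X^N)/N \geq \limsup_n \laldim(X^n)/n$, so the limit exists.

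For $\umid(\mathbf{X}) = \uidimr(\mathbf{X})$, the same philosophy applied to quantized entropies works. The $\psi^*$-mixing bound $\PP(Y_1=a, Y_2=b) \leq \psi^*(g)\PP(Y_1=a)\PP(Y_2=b)$ applied to $Y_1 = [X^n]_k$, $Y_2 = [X_{n+g+1}^{n+g+m}]_k$, together with the observation that $(Y_1,Y_2)$ is a function of $[X^{n+g+m}]_k$ and stationarity, yields by a standard computation
\[
H([X^{n+g+m}]_k) \geq H([X^n]_k) + H([X^m]_k) - \log \psi^*(g).
\]
Iterating this over $t$ blocks and sending $t \to \infty$, using that $h_k := \lim_N H([X^N]_k)/N$ exists by subadditivity, produces
\[
\frac{H([X^n]_k)}{n\log k} \leq \frac{n+g}{n} \cdot \frac{h_k}{\log k} + \frac{\log\psi^*(g)}{n \log k}.
\]
Taking $\limsup_k$ kills the last term and turns $h_k/\log k$ into $\uidimr(\mathbf{X})$, giving $\uid(X^n)/n \leq \tfrac{n+g}{n}\uidimr(\mathbf{X})$; sending $n \to \infty$ with $g$ fixed yields $\umid(\mathbf{X}) \leq \uidimr(\mathbf{X})$. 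The reverse inequality is Lemma \ref{lem: aldim mid idimr}(2).

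The main subtlety is in the local-dimension step: unlike for entropies, where $\log\psi^*(g)$ survives as an additive constant to be killed at the end, here the multiplicative factor $\psi^*(g)$ must be absorbed by the $1/\log r$ normalization intrinsic to the definition of $\ld$. The inequality reversals under division by $\log r < 0$ must therefore be tracked carefully so that what emerges is super-additivity (not sub-additivity) of $\laldim$, which is precisely what the Fekete-type argument exploits against the universal bound $\laldim(X^n) \leq n$.
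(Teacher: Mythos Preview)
Your proof is correct and, for the first assertion, follows essentially the same route as the paper: establish the gapped superadditivity $\laldim(X^{n+g+m}) \geq \laldim(X^n) + \laldim(X^m)$ via the $\psi^*$-mixing bound on ball measures, then invoke a Fekete-type argument. The only cosmetic differences are that the paper works with $\ell^\infty$-balls (where the product structure is exact) and cites a ready-made gapped Fekete lemma \cite[Lemma~2.1]{RaquepasKingman}, whereas you use the Euclidean-ball inclusion $B_2^{n+g+m}((x,y,z),r) \subset B_2^n(x,r)\times\R^g\times B_2^m(z,r)$ and sketch the Fekete step by hand; your block-counting argument is slightly informal (the iteration gives $a_{tn+(t-1)g}\geq ta_n$ rather than $a_N\geq ta_n$ directly, so one needs either the trivial monotonicity $a_{k+s}\geq a_k$ or to discard a nonnegative remainder), but this is easily patched.

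For the second assertion $\umid(\mathbf{X})=\uidimr(\mathbf{X})$, the paper simply quotes \cite[Corollary~15]{geiger2019information}, while you supply a self-contained argument: the same $\psi^*$-mixing mechanism gives the entropy superadditivity $H([X^{n+g+m}]_k)\geq H([X^n]_k)+H([X^m]_k)-\log\psi^*(g)$, and iterating yields $\uid(X^n)/n \leq \tfrac{n+g}{n}\uidimr(\mathbf{X})$. This is a genuine (if modest) addition over the paper's treatment, and has the virtue of making the parallelism between the two halves of the lemma explicit.
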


For the proof of the first statement see Appendix \ref{sec: lem psi star mlaldim proof}. The equality $\umid(\mathbf{X}) = \uidimr(\mathbf{X})$ in the above lemma is \cite[Corollary 15]{geiger2019information}.

For i.i.d processes with local dimension regular distributions, the mean average local dimension equals both $\aldim(X_1)$ and $\idim(X_1)$:

\begin{example}\label{ex: i.i.d.}
	Let $\mathbf{X}$ be an i.i.d process with local dimension regular $1$-dimensional distribution, then
	\begin{equation}\label{eq: ex i.i.d.}
		\mlaldim(\mathbf{X}) = \mualdim(\mathbf{X}) = \midim(\mathbf{X}) = \idimr(\mathbf{X}) = \aldim(X_1) = \idim (X_1)
	\end{equation}
	 This is a special case of Proposition~\ref{prop: markov psi star}. In general, \eqref{eq: ex i.i.d.} fails if the $1$-dimensional margin of the process is not local dimension regular. See Appendix \ref{sec: maldim noneq} for the details. 
	
	In particular, if $\bm{X}$ is i.i.d. with  $1$-dimensional margin $\mu$ of the form $\mu = p \mu_c + (1-p)\mu_d$, where $p \in [0,1]$, $\mu_c$ is absolutely continuous and $\mu_d$ is discrete (so $\bm{X}$ is a mixed discrete-continuous source), then combining \eqref{eq: ex i.i.d.} with Example \ref{ex: non dyn} yields
	\[ \aldim(\mathbf{X}) = \mualdim(\mathbf{X}) = \midim(\mathbf{X}) = \idimr(\mathbf{X}) = p. \]
\end{example}

\subsection{Energy and correlation dimension}\label{subsec: corr dim}

To prove Theorem~\ref{main theorem}, we first prove a similar result for general sources in terms of a new complexity measure of a stochastic process, which is inspired by the correlation dimension and related techniques from geometric measure theory, see e.g. \cite[Chapters 8-10]{mattila} or \cite[Chapter 3]{BP17}. For $s \geq 0$, the $\mathbf{s}$\textbf{-energy} of a finite Borel measure $\mu$ on $\R^n$ is
\[\mE_s (\mu) := \int \int \|x-y\|_2^{-s}d\mu(x)d\mu(y)\]
(recall that $\|\cdot\|_2$ stands for the Euclidean norm on $\R^n$).

\begin{defn}
	For a finite Borel measure $\mu$ on $\R^n$, its \textbf{correlation dimension} is defined as
	\[ \dim_\mathrm{cor}(\mu) = \sup\{ s \geq 0 : \Ek_s(\mu) < \infty\}.  \]
\end{defn}
It is easy to see that the set $\{ s \geq 0 : \Ek_s(\mu) < \infty\}$ is an interval. The correlation dimension defined as above is also called the \textit{lower correlation dimension} or the $L^2$\textit{-dimension}, see \cite[Sections 1.9.3 and 2.6]{BSSBook} for a more detailed discussion. A basic fact about the correlation dimension is
\begin{equation}\label{eq: dimcor dimh n ineq} 0 \leq \dim_\mathrm{cor}(\mu) \leq \underset{x \sim \mu}{\essinf}\ \ld(\mu, x) \leq \laldim(\mu) \leq n\ \text{ for every finite Borel measure } \mu \text{ on } \R^n,
\end{equation}
see e.g. \cite[Theorem 1.4]{FanLauRao02}. It is also easy to see that if $\mu$ has an atom, then $\dim_\mathrm{cor} \mu = 0$. 
We will use repeatedly the following formula (see e.g. \cite[p. 109]{mattila}), valid for a finite Borel measure $\mu$ on $\R^n$ and $0 < s < n$ and $x \in \R^n$
\begin{equation}\label{eq: energy integral}
	\int \|x-y\|_2^{-s}d\mu(y) = s\int \limits_{0}^\infty r^{-s-1}\mu(B(x,r))dr
\end{equation}

The correlation dimension relates to the local dimensions and the Hausdorff dimension discussed earlier. The main difference is that bounding it from below requires uniform control on the measure of balls (via $s$-energy integrals $\iint \|x-y\|^{-s}\,d\mu(x)\,d\mu(y)$) rather than a pointwise local dimension control; see~\cite[Section~8]{mattila}. For $\mu$ supported in a compact set $A$, one has the chain of inequalities
\[
\dim_{\mathrm{cor}}(\mu) \leq \hdim(\mu) \leq \hdim(A) \leq \underline{\dim}_B(A) \leq \overline{\dim}_B(A),
\]
where $\underline{\dim}_B(A)$ and $\overline{\dim}_B(A)$ denote, respectively, the upper and lower box-counting (Minkowski) dimensions of $A$; see~\cite[Sections~1.4, 1.9]{BSSBook}.

\subsection{Correlation dimension rate}
Similarly as with the information dimension and average local dimension above, we can define a complexity measure of a stochastic process based on the correlation dimension. The following defintion turns out to be well suited for our needs.
\begin{defn}\label{def: mdimcor}
	For each $n \geq 1$, let $\mu_n$ be a finite Borel measure on $\R^n$. The \textbf{correlation dimension rate} of the sequence $(\mu_n)_{n=1}^\infty$ is
	\[ \mdimcor ( (\mu_n)_{n=1}^\infty) := \sup \left\{ \theta \geq 0 : \limsup \limits_{n \to \infty} \frac{1}{n} \log \left( n^{\theta n/2} \mE_{\theta n}(\mu_n) \right) < \infty \right\}. \]
	For a stochastic process $\mathbf{X} = (X_1, X_2, \ldots)$ we define
	\[ \mdimcor(\mathbf{X}) := \mdimcor ( (\mu_{X^n})_{n=1}^\infty), \]
	where $\mu_{X^n}$ is the distribution of $X^n$ on $\R^n$.
\end{defn}

In terms of the asymptotic notation from Section \ref{subsec: conventions}, the definition of the correlation dimension rate can be equivalently written as follows:
\begin{equation}\label{eq: mdimcor lesssim e} \mdimcor ( (\mu_n)_{n=1}^\infty) = \sup \left\{ \theta \geq  0 : \mE_{\theta n}(\mu_n) \lesssim^e_{\theta} n^{-\theta n/2}  \right\}
\end{equation}
(Recall from Section~\ref{subsec: conventions} that the constant in $\lesssim^e_\theta$ is allowed depend on the fixed sequence $(\mu_n)$; this dependence is suppressed in the notation).

An immediate consequence of \eqref{eq: dimcor dimh n ineq} is the following inequality, valid for an arbitrary stochastic process $\mathbf{X}$
\begin{equation}\label{eq: mdimcor geq 1} \mdimcor ( \mathbf{X}) \leq \liminf \limits_{n \to \infty} \frac{\dim_\mathrm{cor}(X^n)}{n} \leq \mlaldim(\mathbf{X}) \leq 1.
\end{equation}

The following example shows how the normalizing term $n^{\theta n /2}$ appears naturally and provides a direct computation of the mean correlation dimension rate.  . It proves that $\mdimcor(\mathbf{X}) = 1$ for $\mathbf{X}$ being an i.i.d. process with a uniform distribution on an interval as the one-dimensional margin.

\begin{example}\label{ex: mdimcor leb}
	Let $\mu_n = \Leb_n|_{[0,1]^n}$. Then $\mdimcor ( (\mu_n)_{n=1}^\infty) = 1$. To prove this, see first that by \eqref{eq: vol unit ball} and \eqref{eq: gamma bounds}, for every $x \in \R^n$,
	\[ \mu_n(B_2^{n}(x,r)) \leq \alpha(n)r^n \lesssim^e \frac{r^n}{\Gamma(\frac{n}{2} + 1)}\lesssim^e n^{-n/2} r^n. \]
	Therefore for $0 < \theta < 1$ by \eqref{eq: energy integral}
	\[\begin{split}  \mE_{\theta n}(\mu_n) & = \theta n \int \int \limits_{0}^\infty r^{-\theta n - 1} \mu_n(B_2^n(x,r))dr d\mu_n(x) \lesssim^e  n^{-n/2}\int \limits_{0}^{\sqrt{n}}r^{(1-\theta)n - 1}dr + \int \limits_{\sqrt{n}}^\infty r^{-\theta n - 1} dr \\
		&	= n^{-n/2} \frac{1}{(1-\theta)n}n^{(1-\theta)n/2} + \frac{n^{-\theta n /2}}{\theta n} \lesssim^e_{\theta} n^{-\theta n / 2}.
	\end{split}\]
	Therefore $\mdimcor ( (\mu_n)_{n=1}^\infty) \geq 1$ by \eqref{eq: mdimcor lesssim e}. The upper bound follows from \eqref{eq: mdimcor geq 1}.
\end{example}

\begin{rem}
The above calculation also illustrates a useful principle for bounding energy integrals: we split the integration variable $r$ at $\sqrt{n}$, separating the contributions from $r \leq \sqrt{n}$ and $r > \sqrt{n}$. The tail integral $\int_{\sqrt{n}}^\infty r^{-\theta n - 1}\,dr$ already exhibits the desired exponential decay regardless of $\mu_n(B_2^n(x,r))$, so the non-trivial task is to bound the integral over $[0, \sqrt{n}]$. We use splittings of this type in several proofs throughout the paper, including the proofs of Theorem~\ref{thm: mdimcor detail}, Proposition~\ref{prop: restriction}, and Lemma~\ref{lem: energy comparsion}.
\end{rem}

\subsection{Relations between entropy and dimensions}\label{subsec: discussion}

The names \textit{mean information dimension} and \textit{mean average local dimension} are adopted by analogy with the notions of mean dimensions studied in topological dynamics and ergodic theory, introduced by Gromov~\cite{G} and further developed by Lindenstrauss and Weiss~\cite{LW} and others (e.g., \cite{LT12, gutman2020embedding,gutman2019application}); see also~\cite{YCZ25} for a discussion of entropy-based notions of measure-theoretic mean dimension. These invariants have been shown to be well suited for the study of infinite-entropy systems.

In general there are no direct connections between dimensions and entropy for arbitrary measures or stochastic processes. There are, however, deep connections in the class of smooth dynamical systems - see the celebrated theory of Ledrappier and Young~\cite{LY85I,LY85II} relating dimensions, entropy, and Lyapunov exponents, as well as its extension to self-similar and self-affine measures~\cite{BK17}.

\section{A converse for general sources in terms of the correlation dimension rate}\label{sec: mdimcor main}

For technical reasons, it will be convenient for us to work with \textit{subprobability} measures, i.e. measures on a measure space $X$ satisfying $\mu(X) \leq 1$.

Let $G_{n,m}$ denote the standard Gaussian measure on $\R^{m \times n}$ with $m \leq n$, i.e.\ we identify $\R^{m \times n}$ with $m \times n$ matrices and $G_{n,m}$ is the distribution of a random matrix $A = [a_{i j}]$, where $a_{i j}$ are i.i.d.\ with standard Gaussian distribution $N(0,1)$. Note that with this notation the measure $\nu$ appearing in Theorem~\ref{main theorem} is given as $\nu = \bigotimes \limits_{n=1}^\infty G_{n,m_n}$

\begin{thm}[Main Technical Theorem]\label{thm: mdimcor detail}
	Fix $M \geq 1$. For each $n \geq 1$, let $\mu_n$ be a subprobability measure on $B_2^n(0, \sqrt{n}M)$. Let $m_n  \in \N$ be a sequence such that $\liminf \limits_{n \to \infty} \frac{m_n}{n} < \mdimcor ( (\mu_n)_{n=1}^\infty) $. Let  $F_n : \R^{m_n} \times \R^{m_n \times n} \to \R^n$ be a sequence of Borel maps. Then there exists $\delta_0$ such that for every $0 < \delta \leq \delta_0$
	\[ \liminf \limits_{n \to \infty}\ \mu_n \otimes G_{n,m_n} \left( \left\{ (x, A) \in \R^n \times \R^{m_n \times n} : \frac{1}{\sqrt{n}}\|x - F_n(Ax, A)\|_2 \leq \delta  \right\} \right) = 0.\]
\end{thm}

\begin{rem}
	Note that the threshold $\liminf \limits_{n \to \infty} \frac{m_n}{n} < \mdimcor( \mathbf{X})$ cannot be optimal in general. For instance, the mixed discrete-continuous source from Example \ref{ex: sparse} satisfies $\mdimcor(\mathbf{X}) = 0$ if $p<1$, as then every finite-dimensional distribution $\mu_n$ of the process has an atom, hence $\dim_{\mathrm{corr}}(\mu_n) = 0$. On the other hand, as discussed in Example \ref{ex: sparse}, it follows from Theorem~\ref{main theorem} that $\frac{1}{\sqrt{n}} \|X^n - \hat{X}^n\|_2 \text{ does not converge to } 0 \text{ in probability}$ already if $\liminf \limits_{n \to \infty} \frac{m_n}{n} < p$.
\end{rem}

\begin{proof}[{\bf Proof of Theorem \ref{thm: mdimcor detail}}]
	Fix $\theta, R >0$ such that $\liminf \limits_{n \to \infty} \frac{m_n}{n} < R < \theta < \mdimcor((\mu_n)_{n=1}^{\infty}) \leq 1$  (recall \eqref{eq: mdimcor geq 1}) and let $n_k \nearrow \infty$ be a sequence such that $\lim \limits_{k \to \infty} \frac{m_{n_k}}{n_k}$ exists and $m_{n_k} \leq Rn_k$ for all $k$. In particular by Lemma \ref{lem: energy comparsion}
	\begin{equation}\label{eq: theta' energy}
		\Ek_{\theta' n} \lesssim^e_{\theta} n^{-\theta' n / 2} \text{ for every } 0 < \theta' \leq \theta.
	\end{equation}
	
	We shall prove that there exists $\delta_0$ such that for every $0 < \delta \leq \delta_0$
	\begin{equation}\label{eq: prob subseq}  \lim \limits_{k \to \infty}\ \mu_{n_k} \otimes G_{n_k,m_{n_k}} \left( \left\{ (x, A) \in  \R^{n_k} \times \R^{m{_{n_k}} \times n_k}  : \frac{1}{\sqrt{n_k}}\|x - F_{n_k}(Ax, A)\|_2 \leq \delta  \right\} \right) = 0.
	\end{equation}
	For short, let us write $n = n_k$ and $m=m_{n_k}$.  Let $K$ be the constant from Lemma \ref{lem: gauss norm} and set
	\[ Q_n = \{ A \in \R^{m_n \times n} : \|A\| \leq K\sqrt{n}\}\]
	and for $A \in \R^{m_n \times n}$
	\[T_n(A) = \{ z \in \R^m :  \underset{0 < r \leq 1}{\forall}\ \mu_n(A^{-1}(B_2^m(z,r))) > 2^{-n}(10KMn)^{-m}r^m  \} \]
	(recall that $M$ is fixed in the statement of the theorem to be proved). By Lemma \ref{lem: gauss norm} we have
	\[G_{n,m}(Q_n^c) \leq 2e^{-n}\]
	and by Lemma \ref{lem: density bound} applied with $D = 2^{-n}(10KMn)^{-m}$ we have for $A \in Q_n$ and $n$ large enough
	\[ \mu_n(A^{-1}(T_n(A))^c) \leq 2^{-n}, \]
	as $5\|A\|\sqrt{n}M+1 \leq 10KnM$ for $n$ large enough and $A \in Q_n$.
	Therefore, setting
	\[
	\begin{split}E_n & = \left\{ (x, A) \in \R^n \times \R^{m \times n} : \|A\| \leq K\sqrt{n}\ \text{ and }\ \underset{0 < r \leq 1}{\forall}\ A\mu(B_2^m(Ax,r)) > 2^{-n}(10KMn)^{-m}r^m \right\} \\
		& = \bigcup \limits_{A \in Q_n} A^{-1}(T_n(A)) \times \{A\}
	\end{split}
	\]
	we have by Fubini's theorem that
	\begin{align*}
		\mu_n \otimes G_{n,m} (E_n^c) & \leq G_{n,m}(Q_n^c) \cdot \mu_n(\R^n) + \int_{Q_n} \mu_n\bigl(A^{-1}(T_n(A))^c\bigr)\, dG_{n,m}(A) \\
		& \leq 2e^{-n} \cdot 1 + 2^{-n} \cdot 1 = 2e^{-n} + 2^{-n} \to 0 \text{ as } n \to \infty.
	\end{align*}
	Consequently, it suffices to prove that there exists $\delta_0$ such that for every $0 < \delta \leq \delta_0$
	\begin{equation}\label{eq: En prob} \lim \limits_{k \to \infty}\ \mu_{n_k} \otimes G_{n_k,m_{n_k}} \left( \left\{ (x, A) \in E_{n_k} : \frac{1}{\sqrt{n_k}}\|x - F_{n_k}(Ax, A)\|_2 \leq \delta  \right\} \right) = 0.
	\end{equation}
	With the use of the disintegration \eqref{eq:cond meas decomp} of $\mu_n$ into conditional measures $\mu_{n,A,z}$ (with respect to the map $\phi = A : \R^n \to \R^m$; in this case the fiber $A^{-1}z$ is an affine subspace of $\R^n$; see Appendix~\ref{sec: cond} for the discussion of the properties of the conditional disintegration that we use) we can write as follows for every $s > 0$
	\[
	\begin{split}
		\mu_n \otimes G_{n,m} & \left( \left\{ (x, A) \in E_n : \frac{1}{\sqrt{n}}\|x - F_n(Ax, A)\|_2 \leq \delta  \right\} \right) \\
		& = \int \limits_{Q_n} \int \limits_{T_n(A)} \mu_{n,A,z}\left( \left\{ x \in A^{-1}(T_n(A)) : \frac{1}{\sqrt{n}}\|x - F_n(Ax, A)\|_2 \leq \delta  \right\}  \right) dA\mu_n(z)  dG_{n,m}(A) \\
		& \overset{\eqref{eq:cond meas inverse}}{=}  \int \limits_{Q_n} \int \limits_{T_n(A)} \mu_{n,A,z}\left( \left\{ x \in \R^n : \frac{1}{\sqrt{n}}\|x - F_n(z, A)\|_2 \leq \delta  \right\}  \right) dA\mu_n(z)  dG_{n,m}(A) \\
		& = \int \limits_{Q_n} \int \limits_{T_n(A)} \mu_{n,A,z}\left(B_2^n(F_n(z, A), \sqrt{n}\delta) \right) dA\mu_n(z)  dG_{n,m}(A) \\
		& \overset{\text{Lem. } \ref{lem: ball to energy}}{\leq} 2^{s/2} n^{s/4}\delta^{s/2}\int \limits_{Q_n} \int \limits_{T_n(A)} \mE_s(\mu_{n,A,z})^\frac{1}{2}dA\mu_n(z)  dG_{n,m}(A).
	\end{split}
	\]
	Applying Jensen's inequality and recalling that $\mu_n(\R^n) \leq 1$ gives for every $s>0$
	\begin{equation}\label{eq: measure s bound}
		\begin{split}
			\mu_n \otimes G_{n,m} & \left( \left\{ (x, A) \in E_n : \frac{1}{\sqrt{n}}\|x - F_n(Ax, A)\|_2 \leq \delta  \right\} \right) \\
			& \leq  2^{s/2} n^{s/4}\delta^{s/2}  \left(\int \limits_{Q_n} \int \limits_{T_n(A)} \mE_s(\mu_{n,A,z})dA\mu_n(z)  dG_{n,m}(A) \right)^\frac{1}{2}.
		\end{split}
	\end{equation}
	Let us now bound the above integral. We have by \eqref{eq:cond measure limit} and the lower semi-continuity of the function $x \mapsto \|x-y\|_2^{-s}$ on $\R^n$
	\[
	\begin{split}
		\int \limits_{Q_n} & \int \limits_{T_n(A)}  \mE_s(\mu_{n,A,z})dA\mu_n(z)  dG_{n,m}(A) = \int \limits_{Q_n} \int \limits_{T_n(A)} \int \limits_{\R^n} \int \limits_{\R^n} \|x-y\|_2^{-s} d\mu_{n,A,z}(x)d\mu_{n,A,z}(y)dA\mu_n(z)  dG_{n,m}(A) \\
		& \overset{\eqref{eq:muG liminf}}{\leq} \int \limits_{Q_n} \int \limits_{T_n(A)}  \int \limits_{\R^n} \liminf \limits_{r \to 0} \int \limits_{\R^n} \frac{\|x-y\|_2^{-s}\mathds{1}_{B(z,r)}(Ax)}{\mu_n(A^{-1}(B(z,r)))} d\mu_n(x)d\mu_{n,A,z}(y)dA\mu_n(z)  dG_{n,m}(A) \\
		& \overset{\text{Fatou's lem.}}{\leq} \liminf \limits_{r \to 0} \int \limits_{Q_n} \int \limits_{T_n(A)}  \int \limits_{\R^n}  \int \limits_{\R^n} \frac{\|x-y\|_2^{-s}\mathds{1}_{B(z,r)}(Ax)}{\mu_n(A^{-1}(B(z,r)))} d\mu_n(x)d\mu_{n,A,z}(y)dA\mu_n(z)  dG_{n,m}(A) \\
		& \overset{\eqref{eq:cond meas inverse}}{= } \liminf \limits_{r \to 0} \int \limits_{Q_n} \int \limits_{T_n(A)}  \int \limits_{\R^n}  \int \limits_{\R^n} \frac{\|x-y\|_2^{-s}\mathds{1}_{B(Ay,r)}(Ax)}{\mu_n(A^{-1}(B(z,r)))} d\mu_n(x)d\mu_{n,A,z}(y)dA\mu_n(z)  dG_{n,m}(A) \\
		& \overset{\text{def. of } T_n(A)}{\leq } 2^{n}(10KMn)^{m} \liminf \limits_{r \to 0} r^{-m} \int \limits_{Q_n} \int \limits_{T_n(A)}  \int \limits_{\R^n}  \int \limits_{\R^n} \|x-y\|_2^{-s}\mathds{1}_{B(Ay,r)}(Ax) d\mu_n(x)d\mu_{n,A,z}(y)dA\mu_n(z)  dG_{n,m}(A) \\
		& \overset{\eqref{eq:cond meas decomp} \text{ and } m \leq Rn}{ \lesssim^e_{M,R}} n^m \liminf \limits_{r \to 0} r^{-m} \int \limits_{Q_n} \int \limits_{\R^n}  \int \limits_{\R^n} \|x-y\|_2^{-s}\mathds{1}_{\{ \|Ax - Ay\|_2 \leq r \}} d\mu_n(x)d\mu_n(y) dG_{n,m}(A) \\
		& \overset{\text{Fubini's thm.}}{ \lesssim^e_{M,R}} n^m \liminf \limits_{r \to 0} r^{-m} \int \limits_{\R^n}  \int \limits_{\R^n} \|x-y\|_2^{-s}G_{n,m}\left(\{ A \in \R^{m_n \times n} :  \|Ax - Ay\|_2 \leq r \}\right) d\mu_n(x)d\mu_n(y).
	\end{split}
	\]
	For $r > 0$ we bound the last integral as follows, applying in the second inequality below Lemma \ref{lem: gauss trans} with $u = x - y$ and $\eps = \frac{r}{\sqrt{m}\|x-y\|_2}$
	\[
	\begin{split}
		r^{-m}\int \limits_{\R^n}    \int \limits_{\R^n} &  \|x-y\|_2^{-s}G_n \left(\{ A \in \R^{m \times n} :  \|Ax - Ay\|_2 \leq r \}\right) d\mu_n(x)d\mu_n(y) \\
		& \leq r^{-m} \iint \limits_{\{\sqrt{m}\|x-y\|_2 \leq r\}}  \|x-y\|_2^{-s}d\mu_n(x)d\mu_n(y)  \\
		& \qquad + r^{-m} \iint \limits_{\{\sqrt{m}\|x-y\|_2 >r\}} \|x-y\|_2^{-s} G_{n,m}\left(\{ A \in \R^{m \times n} :  \|Ax - Ay\|_2 \leq r \}\right) d\mu_n(x)d\mu_n(y) \\
		& \overset{\text{Lem. } \ref{lem: gauss trans}}{\leq} m^{-m/2}  \iint \limits_{\{\sqrt{m}\|x-y\|_2 \leq r\}}  \|x-y\|_2^{-(s+m)}d\mu_n(x)d\mu_n(y)  \\
		& \qquad + e^m m^{-m/2} \iint \limits_{\{\sqrt{m}\|x-y\|_2 >r\}} \|x-y\|_2^{-(s+m)} d\mu_n(x)d\mu_n(y) \\
		& \leq e^m m^{-m/2} \mE_{s+m}(\mu_n).
	\end{split}
	\]
	Combining the last two calculations gives
	\begin{equation}\label{eq: energy trans} \int \limits_{Q_n}  \int \limits_{T_n(A)}  \mE_s(\mu_{n,A,z})dA\mu_n(z)  dG_{n,m}(A) \lesssim^e_{M,R} n^m m^{-m/2} \Ek_{s+m}(\mu_n).
	\end{equation}
	Apply now \eqref{eq: measure s bound} and \eqref{eq: energy trans} with $s = s(n) = (\theta - R) n$ (so that $s + m \leq \theta n$) and \eqref{eq: theta' energy} to obtain
	\begin{equation}\label{eq: measure theta bound}
		\begin{split}
			\mu_n \otimes G_{n,m} & \left( \left\{ (x, A) \in E_n : \frac{1}{\sqrt{n}}\|x - F_n(Ax, A)\|_2 \leq \delta  \right\} \right) \\
			& \lesssim^e_{M, R, \theta} \delta^{s / 2} n^{s / 4} n^{m/2} m^{-m/4} \Ek_{s+m}(\mu_n)^{1/2} \\
			& \overset{\text{Lem. }\ref{lem: energy comparsion}}{\lesssim^e_{M, R, \theta}} \delta^{s / 2} n^{s / 4} n^{m/2} m^{-m/4} n^{-(s+m)/4} \\
			& \lesssim^e_{M, R, \theta} \delta^{(\theta - R) n / 2} (m/n)^{-m/4}.
		\end{split}
	\end{equation}
	Let $R' = \lim \limits_{k \to \infty} \frac{m_{n_k}}{n_k}$ (recall that we have chosen subsequence $n_k$ so that the limit exists). To finish the proof it suffices to prove
	\begin{equation}\label{eq: mn bound} (m/n)^{-m/4} \lesssim^e_{R'} 1,
	\end{equation}
	as then
	\[ \delta^{(\theta - R) n / 2} (m/n)^{-m/4} \lesssim^e_{M,R,R',\theta} \delta^{(\theta - R) n / 2},\]
	so by \eqref{eq: measure theta bound} there exists $C = C(M,R,R',\theta) \geq 0$ such that
	\[ \mu_n \otimes G_{n,m} \left( \left\{ (x, A) \in E_n : \frac{1}{\sqrt{n}}\|x - F_n(Ax, A)\|_2 \leq \delta  \right\} \right) \leq C^n \delta^{(\theta - R) n / 2}.\]
	Therefore, choosing $\delta_0 = \delta_0(M,R,R',\theta)$ such that $C\delta_0^{(\theta - R)/2} < 1$ implies that \eqref{eq: En prob} holds for every $0 < \delta \leq \delta_0$ and finishes the proof of Theorem \ref{thm: mdimcor detail}.
	To prove \eqref{eq: mn bound} we shall consider two cases. If $R' = 0$ , then
	\[ \frac{1}{n} \log \left((m/n)^{-m/4} \right) = \frac{-m}{4n} \log \frac{m}{n} \to 0 \text{ as } k \to \infty\]
	since $x \log x \to 0$ as $x \to 0$ and so \eqref{eq: mn bound} holds. 
	Otherwise $R' > 0$, so $m \geq R' n/2$ for all $k$ large enough, so for such $k$
	\[ (m/n)^{-m/4} \leq  (R'/2)^{-m/4} \leq (R'/2)^{-R' n / 8}\]
	and hence \eqref{eq: mn bound} holds in this case as well.
\end{proof}

\section{A converse for $\psi^*$-mixing stochastic process}\label{sec: laldim main}

\subsection{Converse in terms of mean average local dimension}

\begin{thm}\label{thm: laldim main}
	Let $\mathbf{X} = (X_1, X_2, \ldots)$ be a finite variance, stationary, $\psi^*$-mixing stochastic process. Consider a sequence $m_n  \in \N$ such that $\liminf \limits_{n \to \infty} \frac{m_n}{n} < \mlaldim(\mathbf{X})$. Let  $F_n : \R^{m_n} \times \R^{m_n \times n} \to \R^n$ be a sequence of Borel maps (where we identify $\R^{m_n \times n}$ with the space of linear maps $A : \R^n \to \R^{m_n}$). Let $A_n \in \R^{m_n \times n}$ be a sequence of random matrices with independent $N(0,1)$ entries, chosen independently of one another and of the process $\mathbf{X}$. Denote $\hat{X}^n = F_n(A_n X^n, A_n)$. Then
	\[ \frac{1}{\sqrt{n}} \|X^n - \hat{X}^n\|_2 \text{ does not converge to } 0 \text{ in probability}. \]
\end{thm}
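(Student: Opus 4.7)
The plan is to deduce Theorem~\ref{thm: laldim main} from the general converse Theorem~\ref{thm: mdimcor main} (equivalently, its restatement Theorem~\ref{thm: mdimcor detail}) by passing to a sequence of truncated, restricted measures whose correlation dimension rate strictly exceeds $\liminf_n m_n/n$. The finite variance hypothesis will be used to truncate the distributions of $X^n$ to a ball of radius $\sqrt{n}M$, while the $\psi^*$-mixing hypothesis will enter only through Proposition~\ref{prop: restriction}, which we expect to produce, for any $\eta > 0$, Borel sets $Y_n \subset \R^n$ with $\mu_n(Y_n) \geq 1-\eta$ such that the restricted sequence $(\mu_n|_{Y_n})_n$ has correlation dimension rate at least $\mlaldim \bm{X} - \eta$.

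I would argue by contradiction, assuming $\frac{1}{\sqrt{n}}|X^n - \hat{X}^n| \to 0$ in $(\mu \otimes \nu)$-probability. First, pick $\alpha$ with $\liminf_n m_n/n < \alpha < \mlaldim \bm{X}$ and then $\eta \in (0, 1/3)$ so small that $\mlaldim \bm{X} - \eta > \alpha$. Using stationarity, Chebyshev's inequality gives
\[ \mu_n(\R^n \setminus B_2^n(0, \sqrt{n}M)) \leq \frac{\E X_1^2}{M^2}, \]
so for $M = M(\eta)$ large enough, $\mu_n(B_2^n(0, \sqrt{n}M)) \geq 1 - \eta$ uniformly in $n$. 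I would then apply Proposition~\ref{prop: restriction} to obtain the sets $Y_n$ and set
\[ \nu_n := \mu_n|_{Y_n \cap B_2^n(0, \sqrt{n}M)}. \]
Each $\nu_n$ is a sub-probability measure supported in $B_2^n(0, \sqrt{n}M)$ with $\nu_n(\R^n) \geq 1-2\eta$. Since restriction to a smaller set can only decrease the $s$-energy, and hence preserve or increase $\mdimcor$, the additional truncation to the ball does not spoil the bound obtained from Proposition~\ref{prop: restriction}, yielding $\mdimcor((\nu_n)_{n=1}^\infty) \geq \mlaldim \bm{X} - \eta > \alpha > \liminf_n m_n/n$.

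The next step is to invoke Theorem~\ref{thm: mdimcor detail} on $(\nu_n)_n$ (whose hypotheses are satisfied since $\nu_n(\R^n) \leq 1 \leq M^n$): it produces $\delta_0 > 0$ such that for every $0 < \delta \leq \delta_0$,
\[ \liminf_{n \to \infty} (\nu_n \otimes G_n)\left(\left\{(x,A) : \tfrac{1}{\sqrt{n}}|x - F_n(Ax, A)| \leq \delta \right\}\right) = 0. \]
On the other hand, since $\nu_n(B) \geq \mu_n(B) - 2\eta$ for every Borel set $B \subset \R^n$ (by subadditivity applied to $B$ and the complement of $Y_n \cap B_2^n(0,\sqrt{n}M)$), the contradiction hypothesis yields
\[ (\nu_n \otimes G_n)\left(\left\{(x,A) : \tfrac{1}{\sqrt{n}}|x - F_n(Ax, A)| \leq \delta \right\}\right) \geq 1 - 2\eta - o(1), \]
which is bounded away from $0$ since $\eta < 1/3$. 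This would be the desired contradiction.

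The hard part lies entirely in Proposition~\ref{prop: restriction}: its statement is the bridge that converts the averaged, $\mu$-integrated dimensional quantity $\mlaldim \bm{X}$ into the pointwise-in-$n$ uniform energy bound that defines $\mdimcor$, and it is precisely there that the $\psi^*$-mixing hypothesis must be used in an essential way (presumably to rule out atypical configurations where the local dimensions of $\mu_n$ are large but the global energies blow up through long-range correlations). Once that proposition is available, the truncation afforded by finite variance and the final application of Theorem~\ref{thm: mdimcor detail} are essentially bookkeeping.
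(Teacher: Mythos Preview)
Your proposal is correct and follows essentially the same route as the paper: reduce to Theorem~\ref{thm: mdimcor detail} via Proposition~\ref{prop: restriction}, using Chebyshev to control the mass outside a $\sqrt{n}M$-ball. Two cosmetic remarks: the proposition as stated already produces sets $E_n \subset B_2^n(0,\sqrt{n}M)$ (so your separate intersection with the ball is redundant) and assumes $\E X_1 = 0$, so one first centers the process; also the measure bound there is $\liminf_n \mu_n(E_n) \geq 1 - \Var(X_1)/M^2$, which one makes $\geq 1-\eta$ by choosing $M$ large, exactly as you do.
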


The proof is given after Proposition~\ref{prop: restriction} below.

It remains an open problem whether this result can be improved to general stationary stochastic processes.
\begin{proof}[{\bf Proof of Theorem~\ref{main theorem}}]
	Theorem~\ref{main theorem} follows directly from Theorem \ref{thm: laldim main} and Lemma \ref{lem: aldim idim}.
\end{proof}

\subsection{$\psi^*$-mixing lemma}

We will use the $\psi^*$-mixing condition via the following lemma. We shall use the following  notation: for a vector $x=(x_1, \ldots, x_n) \in \R^n$ and $1 \leq i \leq j \leq n$ we set $x_i^j = (x_i, \ldots, x_j)$.

\begin{lemma}\label{lem: psi star on balls}
	Let $\mathbf{X} = (X_1, X_2, \ldots)$ be a stationary stochastic process on a probability space $(\Om, \Fk, \PP)$. Let $\mu_n$ denote the distribution of $X_1^n$ and let $g \in \N$ be such that $\psi^*(g) < \infty$.\footnote{We will use the $\psi^*$-mixing condition only through Lemma \ref{lem: psi star on balls} and hence a seemingly weaker condition would suffice: there exists $g \in \N$ such that $\psi^*(g) < \infty$. However by \cite[Theorem 1]{BradleyMixing83}, for mixing processes, this is equivalent to the $\psi^*$-mixing condition.} Then the following holds for every $i, k \in \N, r>0$ and $x \in \R^{i+g+k-1}$
	\[ \mu_{i+g+k-1}(B_2^{i+g+k-1}(x,r)) \leq \psi^*(g) \int \limits_{B_2^k(x_{i+g}^{i+g+k-1}, r)} \mu_i\left(B_2^i \left(x_1^i, \left( r^2 - \|x_{i+g}^{i+g+k-1} - z\|_2^2\right)^{1/2}\right)\right)d\mu_k(z). \]
\end{lemma}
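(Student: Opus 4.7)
The plan is to reduce the inequality to a joint event about only the first $i$ and last $k$ coordinates (i.e.\ the two blocks separated by the gap), then apply the $\psi^*$-mixing hypothesis as a measure-dominance statement and finish by Fubini's theorem.

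First, I would observe that if $X = (X_1, \ldots, X_{i+g+k-1}) \in B_2^{i+g+k-1}(x,r)$, then in particular
\[ |X_1^i - x_1^i|^2 + |X_{i+g}^{i+g+k-1} - x_{i+g}^{i+g+k-1}|^2 \leq r^2, \]
since we simply drop the contribution of the $g-1$ intermediate coordinates. Thus
\[ \mu_{i+g+k-1}(B_2^{i+g+k-1}(x,r)) \leq \PP\bigl( (X_1^i, X_{i+g}^{i+g+k-1}) \in E \bigr), \]
where $E \subset \R^i \times \R^k$ is the Borel set $E = \{ (y,z) : |y - x_1^i|^2 + |z - x_{i+g}^{i+g+k-1}|^2 \leq r^2 \}$.

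Next, I would establish the following measure-dominance form of the $\psi^*$-mixing condition: if $\nu$ denotes the joint law of $(X_1^i, X_{i+g}^{i+g+k-1})$ on $\R^i \times \R^k$, and $\mu_i$, $\mu_k$ denote the marginals (where we use stationarity to identify the law of $X_{i+g}^{i+g+k-1}$ with $\mu_k$), then
\[ \nu \leq \psi^*(g)\, (\mu_i \otimes \mu_k) \]
as Borel measures on $\R^i \times \R^k$. Indeed, for any Borel rectangle $A \times B$ the definition of $\psi^*(g)$ gives $\nu(A\times B) = \PP(X_1^i \in A,\, X_{i+g}^{i+g+k-1} \in B) \leq \psi^*(g)\mu_i(A)\mu_k(B)$; since Borel rectangles form a $\pi$-system generating the product Borel $\sigma$-algebra, the dominance extends to all Borel sets by the standard monotone class / $\pi$--$\lambda$ argument (applied separately to sets of finite measure so as to avoid arithmetic issues with subtraction of infinite measures).

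Applying this to the set $E$ yields
\[ \PP\bigl( (X_1^i, X_{i+g}^{i+g+k-1}) \in E \bigr) = \nu(E) \leq \psi^*(g) \iint \ind_E(y,z)\, d\mu_i(y)\, d\mu_k(z). \]
Finally I would apply Fubini, integrating in $y$ first: for fixed $z$, the slice $\{ y : (y,z) \in E \}$ is empty unless $|z - x_{i+g}^{i+g+k-1}| \leq r$, in which case it is precisely the ball $B_2^i\bigl(x_1^i, \sqrt{r^2 - |z - x_{i+g}^{i+g+k-1}|^2}\bigr)$. This restricts the outer integral to $z \in B_2^k(x_{i+g}^{i+g+k-1}, r)$ and gives exactly the claimed bound.

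The only mildly technical step is the passage from the pointwise inequality on events (as given in the definition of $\psi^*(g)$) to the measure-dominance $\nu \leq \psi^*(g)(\mu_i \otimes \mu_k)$; everything else is bookkeeping with Fubini and the triangle-style inequality $|y-x_1^i|^2 \leq r^2 - |z - x_{i+g}^{i+g+k-1}|^2$. No regularity or continuity of the marginals is required.
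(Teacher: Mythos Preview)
Your argument is correct and gives a genuinely different, more elementary proof than the paper's. The paper proceeds by disintegrating $\mu_{i+g+k-1}$ with respect to the projection onto the last $k$ coordinates, then bounds the conditional measures $\mu_{\pi,z}$ pointwise by invoking the definition of $\psi^*(g)$ together with the Portmanteau theorem (which is why the paper first passes to open balls). Your approach bypasses disintegration entirely: you establish the global dominance $\nu \le \psi^*(g)\,(\mu_i\otimes\mu_k)$ of the joint law by the product of marginals, and then a single application of Fubini to the ``ellipsoidal'' set $E$ finishes the job. This is cleaner and requires no weak-$^*$ or semicontinuity considerations.

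One small point of care: the literal $\pi$--$\lambda$ justification you give does not work as stated, because the class $\{E:\nu(E)\le \psi^*(g)(\mu_i\otimes\mu_k)(E)\}$ is \emph{not} a $\lambda$-system (inequalities do not pass to set differences). What does work is the monotone class version you also allude to: the algebra generated by measurable rectangles consists of finite \emph{disjoint} unions of rectangles, on which the dominance holds by summation, and the class above is a monotone class (closed under increasing unions and, since all measures are finite, under decreasing intersections). Hence the monotone class theorem extends the inequality to all Borel sets. Alternatively, outer regularity of $\mu_i\otimes\mu_k$ on the Polish space $\R^i\times\R^k$ together with the fact that every open set is a countable disjoint union of Borel rectangles gives the same conclusion. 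Your parenthetical about infinite measures is unnecessary here, as everything in sight is a probability measure (up to the constant $\psi^*(g)$).
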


\begin{proof}
	
	Note that while the lemma is stated for closed balls, it suffices to prove it for open balls (by the continuity of measure from above). Therefore, in the following proof we abuse the notation and let $B_2^n(x,r)$ denote the open $r$-ball in the Euclidean metric.
	
	It will be useful for us to consider the conditional disintegration of $\mu_{i+g+k-1}$ with respect to the projection map $\pi : \R^{i+g+k-1} \to \R^k,\ \pi(x_1, \ldots, x_{i+g+k-1}) = (x_{i+g}, \ldots, x_{i+g+k-1})$, as described in Section \ref{sec: cond} (in other words, we study conditional distribution of $X_1^{i+g+k-1}$ with respect to $X_{i+g}^{i+g+k-1}$). Note that by stationarity $\pi \mu_{i+g+k-1} = \mu_k$. Let $\mu_{\pi, z},\ z \in \R^k$ be the conditional distributions of $\mu_{i+g+k-1}$ with respect to $\pi$, so that by \eqref{eq:cond meas decomp} and \eqref{eq:cond meas inverse} for Borel $E \subset \R^{i+g+k-1}$
	\[ \mu_{i+g+k-1}(E) = \int \limits_{\R^k} \mu_{\pi, z}(E) d\mu_k(z) \]
	and $\mu_{\pi, z} \left( \left\{ x \in \R^{i+g+k-1} : x_{i+g}^{i+g+k-1} = z \right\} \right) = 1$ for $\mu_k$-a.e. $z \in \R^k$. We therefore have the following for $x = (x_1, \ldots, x_{n+g+k-1})$
	
	\begin{equation}\label{eq: conditional ball}
		\begin{split} \mu_{i+g+k-1} & (B_2^{i+g+k-1}(x,r))  \\
			& = \PP \left (\sum \limits_{j=1}^{i+g+k-1}|X_j - x_j|^2 < r^2 \right) \\
			& \leq \PP\left(\sum \limits_{j=1}^{i}|X_j - x_j|^2 + \sum \limits_{j=i+g}^{i+g+k-1} |X_j - x_j|^2 < r^2\right) \\
			& = \int \limits_{B_2^k(x_{i+g}^{i+g+k-1}, r)} \mu_{\pi, z}\left( \left\{ y \in \R^{i+g+k-1} : \sum \limits_{j=1}^{i}|y_i - x_i|^2 + \sum \limits_{j=i+g}^{i+g+k-1} |y_i - x_i|^2 < r^2 \right\} \right)d\mu_k(z) \\
			& = \int \limits_{B_2^k(x_{i+g}^{i+g+k-1}, r)} \mu_{\pi, z}\left( \left\{ y \in \R^{i+g+k-1} :  \sum \limits_{j=1}^{i}|y_i - x_i|^2 < r^2 - \|x_{i+g}^{i+g+k-1} - z\|_2^2 \right\} \right)d\mu_k(z). \\
		\end{split}
	\end{equation}
	
	By \eqref{eq:cond measure limit}, definition of $\psi^*(g)$ and the Portmanteau theorem (see e.g. \cite[Corollary 8.2.10]{BogachevMeasureTheory}; this is the reason for which we want to work with open balls), for $\mu_k$-a.e. $z \in \R^k$
	\[
	\begin{split}
		\mu_{\pi, z} & \left( \left\{ y \in \R^{i+g+k-1} : \sum \limits_{j=1}^{i}|y_i - x_i|^2 < r^2 - \|x_{i+g}^{i+g+k-1} - z\|_2^2  \right\} \right) \\
		&  \leq \liminf \limits_{\rho \to 0} \frac{\mu_{i+g+k-1}\left( \left\{ y \in \R^{i+g+k-1} :  \|y_{i+g}^{i+g+k-1} - z\|_2 \leq \rho \text{ and }\sum \limits_{j=1}^{i}|y_i - x_i|^2 < r^2 - \|x_{i+g}^{i+g+k-1} - z\|_2^2  \right\} \right)}{\mu_{i+g+k-1}\left( \left\{ y \in \R^{i+g+k-1} :  \|y_{i+g}^{i+g+k-1} - z\|_2 \leq \rho \right\} \right)} \\
		& \leq \psi^*(g) \mu_{i+g+k-1}\left( \left\{ y \in \R^{i+g+k-1} : \sum \limits_{j=1}^{i}|y_i - x_i|^2 < r^2 - \|x_{i+g}^{i+g+k-1} - z\|_2^2  \right\} \right) \\
		& = \psi^*(g) \mu_{i}\left( \left\{ y \in \R^{i} : \sum \limits_{j=1}^{i}|y_i - x_i|^2 < r^2 - \|x_{i+g}^{i+g+k-1} - z\|_2^2  \right\} \right) \\
		& = \psi^*(g) \mu_{i}\left( \left\{ y \in \R^{i} : \|y - x_{1}^{i}\|_2^2 < r^2 - \|x_{i+g}^{i+g+k-1} - z\|_2^2 \right\} \right) \\
		& = \psi^*(g) \mu_{i}\left( B_2^i \left( x_{1}^{i}, \left( r^2 - \|x_{i+g}^{i+g+k-1} - z\|_2^2 \right)^{1/2}  \right) \right).
	\end{split} 
	\]
	Combining this with \eqref{eq: conditional ball} finishes the proof.
\end{proof}

\subsection{Relating correlation dimension rate and mean local average dimension}

The main step for proving Theorem \ref{thm: laldim main} is the following proposition (note that we do not assume here the finite-dimensional marginals of the process to be local dimension regular).

\begin{prop}\label{prop: restriction}
	Let $\mathbf{X} = (X_1, X_2, \ldots)$ be a stationary, $\psi^*$-mixing stochastic process taking values in $\R$. Let $\mu_n$ be the distribution of $X_1^n$ an assume that $\Var(X_1) < \infty$ and $\E X_1 = 0$. Then for every $0< \eta < 1$ and $M \geq 1$ there exists a sequence of Borel sets $E_n \subset \R^n$ such that
	\begin{enumerate}
		\item\label{it: En ball} $E_n \subset B_2^n(0, \sqrt{n}M)$,
		\item\label{it: En measure bound} $\liminf \limits_{n \to \infty } \mu_n(E_n) \geq 1 - \frac{Var(X_1)}{M^2}$,
		\item\label{it: En mdimcor} $\mdimcor( (\mu_n|_{E_n})_{n=1}^\infty) \geq \mlaldim(\mathbf{X}) - \eta$.
	\end{enumerate}
\end{prop}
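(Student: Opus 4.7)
The plan is to construct $E_n$ via a block decomposition using the $\psi^*$-mixing property, and to control the correlation dimension rate through an energy estimate based on the integral representation $|x-y|^{-s} = \Gamma(s/2)^{-1}\int_0^\infty t^{s/2-1}e^{-t|x-y|^2}\,dt$. Setting $B_n := B_2^n(0,\sqrt{n}M)$, condition (\ref{it: En ball}) is immediate, and Chebyshev applied to $|X^n|^2$ (using $\E X_1=0$ and stationarity) yields $\mu_n(B_n) \geq 1 - \Var(X_1)/M^2$, so it suffices to build $E_n \subset B_n$ with $\mu_n(B_n\setminus E_n) \to 0$ satisfying (\ref{it: En mdimcor}). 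Write $h := \mlaldim\mathbf{X}$. Using a Shannon--McMillan--Breiman-type concentration for $\psi^*$-mixing processes (which underlies Lemma~\ref{lem: psi star mlaldim}) in the form that $\ld(\mu_k,\cdot)/k \to h$ in $\mu_k$-probability as $k \to \infty$, I would choose parameters $\delta \in (0,\eta/(2h))$ and $\theta' \in ((h-\eta)/(1-2\delta),h)$, a gap $g$ with $\psi^*(g) \leq 2$, and $k$ large enough that $\mu_k(\{\ld(\mu_k,\cdot) > \theta' k\}) \geq 1-\delta/2$. Applying Egorov's theorem to the liminf defining $\ld(\mu_k,\cdot)$ then produces a Borel set $E_k^{(0)} \subset \R^k$ with $\mu_k(E_k^{(0)}) \geq 1-\delta$ and a constant $C_k$ such that $\mu_k(B_2^k(y,\rho)) \leq C_k\rho^{\theta' k}$ for every $y \in E_k^{(0)}$ and $\rho > 0$; a layer-cake computation then gives $\Ek_s(\mu_k|_{E_k^{(0)}}) < \infty$ for every $s < \theta' k$.

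With $N := \lfloor n/(k+g-1)\rfloor$ and the first $N(k+g-1)$ coordinates partitioned into $N$ data blocks $x^{(i)} \in \R^k$ separated by gaps of length $g-1$, I would define
\[ E_n := \Big\{ x \in B_n : \big|\{i : x^{(i)} \in E_k^{(0)}\}\big| \geq (1-\delta')N \Big\} \]
for some $\delta' \in (\delta,\mu_k(E_k^{(0)}))$. The block process $(X^{(i)})_i$ is ergodic, so Birkhoff's theorem shows that the fraction of blocks in $E_k^{(0)}$ converges almost surely to $\mu_k(E_k^{(0)}) > 1-\delta'$, giving $\mu_n(E_n) \to \mu_n(B_n)$ and thus (\ref{it: En measure bound}). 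For (\ref{it: En mdimcor}), fix $\theta \in (0,h-\eta)$ and write $\Ek_{\theta n}(\mu_n|_{E_n}) = \Gamma(\theta n/2)^{-1}\int_0^\infty t^{\theta n/2-1}\Psi_n(t)\,dt$ with $\Psi_n(t) := \iint_{E_n \times E_n}e^{-t|x-y|^2}\,d\mu_n\,d\mu_n$. The bound $e^{-t|x-y|^2} \leq \prod_i e^{-t|x^{(i)}-y^{(i)}|^2}$, the pointwise majorization $\mathds{1}_{E_n}(x) \leq \sum_{|G|=\lceil(1-\delta')N\rceil}\prod_{i \in G}\mathds{1}_{E_k^{(0)}}(x^{(i)})$, and two iterated applications of $\psi^*$-mixing (one for each of the $x$- and $y$-integrations) yield
\[ \Psi_n(t) \leq \binom{N}{\lceil\delta' N\rceil}^2[\psi^*(g)]^{2(N-1)}[\phi_k^E(t)]^{(1-2\delta')N}, \]
where $\phi_k^E(t) := \iint_{(E_k^{(0)})^2}e^{-t|z-y|^2}\,d\mu_k\,d\mu_k$ and the trivial bound $\phi_k(t) \leq 1$ absorbs the "bad block" contributions. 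Expanding $[\phi_k^E(t)]^M$ with $M := \lceil(1-2\delta')N\rceil$ as a $2M$-fold integral, inverting the integral transform, and applying the AM-GM inequality $(\sum_i|z_i-y_i|^2)^{-\theta n/2} \leq M^{-\theta n/2}\prod_i|z_i-y_i|^{-\theta n/M}$ reduces the $t$-integral to $\Gamma(\theta n/2)\, M^{-\theta n/2}\, \Ek_{\theta n/M}(\mu_k|_{E_k^{(0)}})^M$. Provided $k$ is large and $\delta'-\delta$ is small, $\theta n/M < \theta' k$, so $\Ek_{\theta n/M}(\mu_k|_{E_k^{(0)}})$ is finite; since $M^{-\theta n/2}$ contributes the factor $n^{-\theta n/2}$ (up to exponential-in-$n$ corrections) and all remaining prefactors are exponential in $n$, gathering them into $C(\theta)^n$ yields $\Ek_{\theta n}(\mu_n|_{E_n}) \lesssim^e_\theta n^{-\theta n/2}$, whence $\mdimcor((\mu_n|_{E_n})_n) \geq h-\eta$.

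The hardest part is guaranteeing $\mu_k(E_k^{(0)}) \to 1$ as $k \to \infty$: without this the block-level law of large numbers does not yield $\mu_n(E_n) \to \mu_n(B_n)$, and the bare Markov bound $\mu_k(E_k^{(0)}) \geq 1 - (1-h)/(1-\theta')$ obtained from $\ld(\mu_k,\cdot) \leq k$ alone is insufficient when $h$ is bounded away from $1$. This is precisely where the Shannon--McMillan--Breiman-type concentration for $\psi^*$-mixing processes is essential, and I rely on its availability from (or around) the proof of Lemma~\ref{lem: psi star mlaldim}. A secondary technicality is the combinatorial sum over good-block subsets, which introduces a binary-entropy factor $2^{2H(\delta')N}$ that is only exponential in $n$ for fixed $\delta'$ and is therefore absorbed into the other exponential prefactors.
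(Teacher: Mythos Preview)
Your energy estimate---via the representation $|x-y|^{-s}=\Gamma(s/2)^{-1}\int_0^\infty t^{s/2-1}e^{-t|x-y|^2}\,dt$, factoring $\Psi_n(t)$ over blocks through $\psi^*$-mixing, and then using AM--GM on $\sum_i|z_i-y_i|^2$ to extract the factor $M^{-\theta n/2}\sim n^{-\theta n/2}$---is a genuinely different route from the paper's and, as far as I can see, correct. The paper instead proves a pointwise ball bound: it introduces a block-sum $S_n(x)$ of truncated local dimensions and shows (Lemma~\ref{lem: ball Sn bound}) that $\mu_n(B_2^n(x,r))\lesssim^e S_n(x)^{-S_n(x)/2}r^{S_n(x)}$ via an induction on the number of blocks, using a conditional $\psi^*$-mixing inequality (Lemma~\ref{lem: psi star on balls}) together with Beta/Gamma identities. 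It then defines $E_n=\{S_n\geq(\mlaldim\mathbf{X}-\eta/2)n\}\cap B_2^n(0,\sqrt{n}M)$ and estimates the energy directly from the ball bound via~\eqref{eq: energy integral}. Your transform/AM--GM argument bypasses that induction entirely; the paper's argument is more elementary in tools but heavier in bookkeeping, and yields a uniform ball estimate that may be of independent use.

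The step you yourself flag as hardest is, however, a real gap as written. Lemma~\ref{lem: psi star mlaldim} only shows that the \emph{mean} $\laldim(X^k)/k$ converges (by a gapped Fekete argument); neither its statement nor its proof gives the concentration $\ld(\mu_k,\cdot)/k\to h$ in probability, and no Shannon--McMillan--Breiman analogue for local dimensions appears in the paper. What the proof of Lemma~\ref{lem: psi star mlaldim} does contain is the \emph{pointwise} superadditivity $\ld(\mu_{i+g+k-1},x)\geq\ld(\mu_i,x_1^i)+\ld(\mu_k,x_{i+g}^{i+g+k-1})$. Iterating this over blocks of a fixed size $m$ and applying the ergodic theorem gives $\liminf_N \ld(\mu_{Nm},X^{Nm})/(Nm)\geq \laldim(\mu_{m-g+1})/m$ almost surely, and since the right side tends to $h$ as $m\to\infty$ this does yield the concentration you need (along a subsequence of $k$'s, which suffices). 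But note that this is exactly the paper's mechanism: its $S_{n,C,\eps,k}$ is precisely such a block-sum, and its concentration via the ergodic theorem is what drives the construction of $E_n$ directly at level $n$, without ever needing concentration at the block level $k$. So once you fill in your missing step, you have effectively reproduced the paper's key move one level down; the two proofs then differ only in how the energy is subsequently bounded.
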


The proof is given at the end of Section~\ref{sec: laldim main}. Once it is proved, it is easy to deduce Theorem \ref{thm: laldim main} from Theorem \ref{thm: mdimcor detail}

\begin{proof}[{\bf Proof of Theorem \ref{thm: laldim main}}] Let $(\Om, \Fk, \PP)$ be an underlying probability space, on which both $\mathrm{X}$ and random matrices $A_n$ are defined (recall that we assume them to be independents). By translating the process by $\E X_1$, we can assume that $\E X_1 = 0$. Note further that is suffices to consider the case $\mlaldim(\mathbf{X}) > 0$ as otherwise the assumption $\liminf \limits_{n \to \infty} \frac{m_n}{n} < \mlaldim(\mathbf{X})$ cannot hold. Fix $\eta >0$ such that $\liminf \limits_{n \to \infty} \frac{m_n}{n} < \mlaldim(\mathbf{X}) - \eta$. Fix $M \geq 1$ such that $\Var(X_1)/M^2 < 1$ and consider the sequence $E_n$ from Proposition \ref{prop: restriction}. Applying Theorem \ref{thm: mdimcor detail} to the sequence $(\mu_n|_{E_n})$ we have that for all $\delta$ small enough
	\[ \liminf \limits_{n \to \infty}\ \mu_n \otimes G_{n,m_n} \left( \left\{ (x, A) \in E_n \times \R^{m_n \times n} : \frac{1}{\sqrt{n}}\|x - F_n(Ax, A)\|_2 \leq \delta  \right\} \right) = 0,\]
	so for such $\delta$
	\[ \liminf \limits_{n \to \infty} \PP\left( \frac{1}{\sqrt{n}} \|X^n - \hat{X}^n\|_2 \leq \delta \right) \leq \limsup \limits_{n \to \infty}\ (1 - \mu_n(E_n)) \leq  \Var(X_1)/M^2 < 1.\]
	Therefore $\frac{1}{\sqrt{n}} \|X^n - \hat{X}^n\|_2$ cannot converge in probability to zero.
\end{proof}

The rest of this section is devoted to the proof of Proposition \ref{prop: restriction}. For $k \in \N$ let us denote $\ld_k(x) = \ld(\mu_k, x)$ for short. The idea for the construction of $E_n$ is to consider the set of points $x$ at which the bound $\mu_k(B_2^k(x,r)) \lesssim r^{\ld_k(x)}$ holds uniformly in $r > 0$ with suitable constants (see Lemma~\ref{lem: ball Sn bound} below). To obtain this uniformity, we introduce a truncation via  auxiliary functions $C_{\eps,k}(x)$ and $f_{C,\eps,k}$ below. Given $x \in \R^k$ and $\eps>0$ set
\[ C_{\eps,k}(x) = \begin{cases} \sup \limits_{r > 0} \frac{\mu_k(B_2^k(x,r))}{r^{\ld_k(x) - \eps}} & \text{ if } \ld_k(x) \geq \eps \\
	\infty & \text{otherwise}
\end{cases} \]
Note that $C_{\eps,k}(x) < \infty$ whenever $\ld_k(x) \geq \eps$ and then $\mu_k(B_2^k(x,r)) \leq C_{\eps,k}(x) r^{\ld_k(x) - \eps}$ holds for all $r>0$. Given $C \geq 1$ and $0 < \eps < 1/2$ and $k \in \N$ let us define an auxiliary function $f_{C, \eps,k} : \R^k \to [0, \infty)$
\[ f_{C, \eps, k}(x) = (\ld_k(x) - \eps) \mathds{1}_{\{C_{\eps,k}(x) \leq C,\ 2\eps \leq \ld_k(x) \leq C\}}(x). \]
Note that by Lemma \ref{lem: psi star mlaldim}
\begin{equation}\label{eq: f int}
	\lim \limits_{k \to \infty}\ \lim \limits_{\eps \to 0}\ \lim \limits_{C \to \infty}\ \frac{1}{k}\int f_{C,\eps,k}d\mu_k = \lim \limits_{k \to \infty} \frac{1}{k} \int \ld_k(x)d\mu_k(x) = \mlaldim(\mathbf{X})
\end{equation}
(the inner equality holds by the monotone convergence theorem, since $\lim \limits_{\eps \to 0} \lim \limits_{C \to \infty} f_{C,\eps,k}(x) = \ld_k(x)$ monotonically for every $x$) and limits in $\eps$ and $C$ are increasing. Fix $g \in \N$ such that $\psi^*(g) < \infty$ and set $m:= k + g - 1$. For $n \geq 1$ define a function $S_{n,C,\eps, k} : \R^n \to [0,\infty)$
\[ S_{n,C,\eps, k} (x_1, \ldots, x_n) = \sum \limits_{j=0}^{\lfloor n / m \rfloor - 1} f_{C,\eps,k}(x_{jm + g}, x_{jm + g + 1 }, \ldots, x_{(j+1)m}). \]
We will use it later to define sets $E_n$ in Proposition \ref{prop: restriction}. One hand, it connects via the ergodic theorem and \eqref{eq: f int} to $\mlaldim(\mathbf{X})$. On the other hand, the following lemma shows that it controls measures of $n$-balls (uniformly in the radius), and hence it can be used the bound the energy integrals. In order to make of the $\psi^*$-mixing condition, we consider in $S_{n,C,\eps,k}$ blocks of coordinates which are $g$-separated (so heuristically we can treat the elements of the sum as essentially independent).
\begin{lem}\label{lem: ball Sn bound}
	Fix $g \in \N$ such that $\psi^*(g) < \infty$. Then for every $C\geq 1,\ 0 < \eps < 1/2,\ n, k \geq 1,\ r>0$ and $x \in \R^n$
	\begin{equation}\label{eq: product measure bound}
		\mu_n(B_2^n(x, r)) \lesssim^e_{C, \eps,k , g} S_{n, C, \eps,k}(x)^{-S_{n, C, \eps, k}(x)/2}r^{S_{n, C, \eps, k}(x)}
	\end{equation}
	(we use here the convention $0^0 = 1$).
\end{lem}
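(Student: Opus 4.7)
The plan is to iteratively apply Lemma \ref{lem: psi star on balls} to peel off one block of $k$ coordinates at a time from the end of the window $\{1, \ldots, n\}$, thereby tracking a polynomial-in-$r$ bound on the measures of balls that holds uniformly in $r > 0$. Writing $S := S_{n,C,\eps,k}(x)$ and $m = k+g-1$, observe that when $S = 0$ the conclusion reduces to $\mu_n(B_2^n(x,r)) \leq 1$, which is trivial. Moreover, by marginalization, $\mu_n(B_2^n(x,r)) \leq \mu_{Nm}(B_2^{Nm}(x_1^{Nm}, r))$ with $N = \lfloor n/m \rfloor$, so we may discard the tail and work with the first $N$ complete blocks.

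The core computation is the following combining step. Suppose that for some $A, B \geq 1$ and $a, b > 0$ we have
\[ \mu_i(B_2^i(y_1, \rho)) \leq A\rho^a \quad \text{and} \quad \mu_k(B_2^k(y_2, \rho)) \leq B\rho^b \quad \text{for all } \rho > 0. \]
Then Lemma \ref{lem: psi star on balls} gives
\[ \mu_{i+g+k-1}(B_2^{i+g+k-1}(x, r)) \leq \psi^*(g) \int_{B_2^k(y_2, r)} A(r^2 - |y_2 - z|^2)^{a/2}\, d\mu_k(z). \]
An integration by parts in the radial variable $s = |y_2 - z|$ (using that $\mu_k(\{y_2\}) = 0$, forced by $B\rho^b \to 0$ as $\rho \to 0$) turns the integral into $aA\int_0^r s(r^2 - s^2)^{a/2-1}\mu_k(B_2^k(y_2, s))\, ds \leq aAB \int_0^r s^{b+1}(r^2-s^2)^{a/2-1}\, ds$. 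Substituting $s = r\sqrt{u}$ converts this into a Beta integral, which by \eqref{eq: gamma beta} evaluates to $\tfrac{2}{a}\, r^{a+b}\, \Gamma(a/2+1)\Gamma(b/2+1)/\Gamma((a+b)/2+1)$, yielding
\[ \mu_{i+g+k-1}(B_2^{i+g+k-1}(x, r)) \leq \psi^*(g)\,AB\, \frac{\Gamma(a/2+1)\Gamma(b/2+1)}{\Gamma((a+b)/2+1)}\, r^{a+b}. \]

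We now iterate this step across blocks, peeling off from the right end. For each block $j$ with $f_j := f_{C,\eps,k}(x_{jm+g}^{(j+1)m}) > 0$, the definitions of $C_{\eps, k}$ and $f_{C,\eps,k}$ directly give $\mu_k(B_2^k(\cdot, \rho)) \leq C\rho^{f_j}$ for all $\rho > 0$, so the combining step applies with $B = C$, $b = f_j$. For blocks with $f_j = 0$, the trivial bound $\mu_k \leq 1$ together with $(r^2 - |y_2 - z|^2)^{a/2} \leq r^a$ contributes only an additional multiplicative factor of $\psi^*(g)$. After $N - 1$ applications of the combining step, the Gamma-ratios telescope along the partial sums of the $f_j$'s (using $\Gamma(1) = 1$ for the initial zero partial sum), and one arrives at
\[ \mu_{Nm}(B_2^{Nm}(x_1^{Nm}, r)) \leq \psi^*(g)^{N-1}\, C^{|J|}\, \frac{\prod_{j \in J} \Gamma(f_j/2 + 1)}{\Gamma(S/2 + 1)}\, r^{S}, \]
where $J := \{j \in \{0, \ldots, N-1\} : f_j > 0\}$.

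To conclude, since $|J| \leq N \leq n/m$ and each $f_j \leq C$, the factor $\psi^*(g)^{N-1} C^{|J|} \prod_{j \in J}\Gamma(f_j/2 + 1)$ is exponential in $n$ with constant depending on $C, k, g$. The Stirling estimate \eqref{eq: gamma bounds} applied to the denominator gives $\Gamma(S/2 + 1)^{-1} \lesssim^e_{\eps, C, k, g} S^{-S/2}$, where the factor $2^{S/2}$ arising from comparing $(S/2)^{-S/2}$ with $S^{-S/2}$ is absorbed into the exponential since $S \leq NC \leq Cn/m$. The main obstacle is the derivation of the Gamma-ratio in the combining step: the naive bound $\mu_i(B_2^i(y_1,\rho)) \leq A\rho^a$ evaluated at $\rho = r$ alone gives only $\mu_{i+g+k-1}(B_2^{i+g+k-1}(x,r)) \leq \psi^*(g)\,AB\, r^{a+b}$, missing the crucial Beta-type contraction whose telescoping produces the $S^{-S/2}$ normalization; without this factor the bound would not absorb the super-exponential growth $S^{S/2}$ that appears when $S$ is of order $n$, and the lemma would be useless for the energy integral estimates in the proof of Theorem \ref{thm: mdimcor main}.
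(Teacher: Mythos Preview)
Your proposal is correct and follows essentially the same route as the paper: iterate Lemma~\ref{lem: psi star on balls} block by block, convert the resulting integral into a Beta function, telescope the Gamma factors along the partial sums of the $f_j$, and finish with Stirling. The only difference is cosmetic: the paper uses the layer-cake formula directly and tracks the ratio $D_n(x)=\sup_r \mu_n(B_2^n(x,r))/r^{S_n(x)}$, obtaining $\Gamma(a/2)$ in the recursion (with a compensating factor of $a$), whereas your integration-by-parts formulation produces $\Gamma(a/2+1)\Gamma(b/2+1)/\Gamma((a+b)/2+1)$ directly, which makes the telescoping one line cleaner and avoids the separate bookkeeping for $S_{jm}(x)\Gamma(S_{jm}(x)/2)/\Gamma(S_{jm}(x)/2+1)$ that the paper carries out via~\eqref{eq: gamma frac}.
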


\begin{proof}
	The proof is (essentially) by induction on $n$. Fix $C \geq 1,\ \eps > 0, k \geq 1$ and denote for short $f = f_{C,\eps, k},\ S_n = S_{n,C,\eps, k}$. Note that for every $x \in \R^k$ we have
	\begin{equation}\label{eq: one dim f bound}
		\mu_k(B^k_2(x,r)) \leq Cr^{f(x)} \text{ for all } r>0.
	\end{equation}
	Indeed, if $C_{\eps,k}(x) \leq C$ and $2\eps \leq \ld_k(x) \leq C$, then \eqref{eq: one dim f bound} follows from the definition of $C_{\eps,k}(x)$ and $f$. Otherwise $f(x) = 0$ and hence \eqref{eq: one dim f bound} holds since $C \geq 1$ and $\mu_k$ is a probability measure. For $x = (x_1, \ldots, x_n) \in \R^n$ define
	\[ D_n(x) = \sup \limits_{r>0}\ \frac{\mu_n(B_2^n(x, r))}{r^{S_n(x)}}. \]
	With this notation, our goal is to prove
	\begin{equation}\label{eq: Dn lemma}
		D_n(x) \lesssim^e_{C, \eps, k, g} S_{n, C, \eps,k}(x)^{-S_{n, C, \eps, k}(x)/2}.
	\end{equation}
	
	Let $n = \ell m + q$ with $\ell \in \N$ and $0 \leq q < m$ (so that $\ell$ is the number of terms in the sum defining $S_{n,C,\eps, k}$) and note that $S_n(x) = S_{\ell m}(x) = S_{(\ell-1) m}(x) + f\left(x_{(\ell - 1)m + g}^{\ell m}\right)$. Note also that if $S_n(x) = 0$, then $r^{S_n(x)} = 1$ and as $\mu_n$ is a probability measure it follows
	\[ S_n(x) = 0\ \Longrightarrow\ D_n(x) \leq 1, \]
	which proves \eqref{eq: Dn lemma} if $S_n(x) = 0$. Therefore, it suffices to consider the case $S_n(x) > 0$. 
	Assume first that $S_{(\ell - 1) m }(x)=0$. Then applying \eqref{eq: one dim f bound} (together with the stationarity of the process and the fact that if $y \in B_2^n(x, r)$, then $y_{(\ell-1)m + g}^{\ell m} \in B_2^k(x_{(\ell-1)m + g}^{\ell m}, r)$) gives
	\[ \mu_n(B_2^n(x, r))  \leq \mu_k(B_2^k(x_{(\ell-1)m + g}^{\ell m}, r))  \leq Cr^{f \left(x_{(\ell-1)m + g}^{\ell m }\right)}.\]
	As $S_n(x) = S_{\ell m}(x) = f \left(x_{(\ell-1)m + g}^{\ell m}\right)$ in this case, we have
	\begin{equation}\label{eq: Sn positive Sn-1 zero} S_n(x) > 0 \text{ and } S_{(\ell - 1)m}(x) = 0\ \Longrightarrow\ D_n(x) \leq C.
	\end{equation}
	Furthermore we have then $S_n(x) = f\left(x_{(\ell-1)m + g}^{\ell m}\right)\in [\eps,  C]$, so
	\[ S_n(x)^{-S_n(x) / 2} = f \left(x_{(\ell-1)m + g}^{\ell m}\right)^{-f \left(x_{(\ell-1)m + g}^{\ell m}\right) / 2} \geq Q\] for some constant $Q = Q(C, \eps) > 0$ and so \eqref{eq: Dn lemma} holds if $S_{n}(x) > 0$ and $S_{(\ell-1)m}(x) = 0$.
	
	It remains to consider the case with $S_{(\ell-1)m}(x)>0$ (and therefore $S_n(x) > 0$). We shall give a bound on $D_{n}(x)$ in terms of $D_{(\ell - 1)m}(x)$. Iterating this bound will yield \eqref{eq: Dn lemma}. Applying Lemma \ref{lem: psi star on balls} (for $i = (\ell-1)m$) gives

	\[\begin{split}
		& \mu_{n}  (B_2^{n}(x, r)) \leq \mu_{\ell m}(B_2^{\ell m}(x_1^{\ell m}, r))\\
		& \overset{\text{Lem \ref{lem: psi star on balls}}}{\leq} \psi^*(g) \int \limits_{B_2^{k}(x_{(\ell - 1)m + g}^{\ell m}, r)} \mu_{(\ell-1)m}\left(B_2^{(\ell-1)m}\left(x_1^{(\ell-1) m}, \left( r^2 - |x_{(\ell - 1)m + g}^{\ell m} - z|^2\right)^{1/2}\right)\right)d\mu_{k}(z) \\
		& \overset{\text{def. of } D_{(\ell-1)m}(x)}{\leq} \psi^*(g) D_{(\ell - 1)m}(x)   \int \limits_{B_2^{k}(x_{(\ell - 1)m + g}^{\ell m}, r)} \left( r^2 - |x_{(\ell - 1)m + g}^{\ell m} - z|^2\right)^{S_{(\ell-1)m}(x)/2}d\mu_k(z) \\
		& \overset{\text{\cite[Th.~1.15]{mattila}}}{=} \psi^*(g) D_{(\ell - 1)m}(x) \int \limits_{0}^\infty \mu_k \left( \left\{ z \in B_2^{k}(x_{(\ell - 1)m + g}^{\ell m}, r) :  \left( r^2 - |x_{(\ell - 1)m + g}^{\ell m} - z|^2\right)^{S_{(\ell-1)m}(x)/2} \geq t \right\}\right) dt \\
		& =  \psi^*(g)  D_{(\ell - 1)m}(x)  \int \limits_{0}^{r^{S_{(\ell - 1)m}(x)}} \mu_{k} \left( \left\{ z \in B_2^{k}(x_{(\ell - 1)m + g}^{\ell m}, r) :  |x_{(\ell - 1)m + g}^{\ell m} - z| \leq \sqrt{r^2 - t^{2 / S_{(\ell-1)m}(x)}} \right\}\right) dt \\
		& = \psi^*(g)  D_{(\ell - 1)m}(x)  \int \limits_{0}^{r^{S_{(\ell - 1)m}(x)}} \mu_{k} \left(  B_2^{k}\left(x_{(\ell - 1)m + g}^{\ell m},\sqrt{r^2 - t^{2 / S_{(\ell-1)m}(x)}} \right)  \right) dt \\
		& \overset{s = \frac{1}{r}t^{1/S_{(\ell-1)m}(x)}}{=}  \psi^*(g) D_{(\ell - 1)m}(x)   S_{(\ell-1)m}(x) r^{S_{(\ell-1)m}(x)} \int\limits_0^1s^{S_{(\ell-1)m}(x) - 1} \mu_{k} \left(  B_2^{k}\left(x_{(\ell - 1)m + g}^{\ell m}, r\sqrt{1 - s^2} \right)  \right)  ds \\
		& \overset{\eqref{eq: one dim f bound}}{\leq} C  \psi^*(g) D_{(\ell -1)m}(x) S_{(\ell-1)m}(x) r^{S_{(\ell-1)m}(x) + f\left( x_{(\ell-1)m+g}^{\ell m} \right)}  \int\limits_0^1 s^{S_{(\ell-1)m}(x) - 1} (1-s^2)^{f\left( x_{(\ell-1)m+g}^{\ell m} \right)/2} ds \\
		& \overset{u=s^2}{=} \frac{C  \psi^*(g)}{2}  D_{(\ell -1)m}(x) S_{(\ell-1) m}(x) r^{S_{n}(x)} \int\limits_0^1 u^{\frac{S_{(\ell - 1)m}(x)}{2} - 1} (1-u)^{f\left( x_{(\ell-1)m+g}^{\ell m} \right)/2} du \\
		& = \frac{C  \psi^*(g)}{2} D_{(\ell -1)m}(x) S_{(\ell-1) m}(x) r^{S_{n}(x)} B\left(\frac{S_{(\ell-1)m}(x)}{2}, \frac{f\left( x_{(\ell-1)m+g}^{\ell m} \right)}{2} + 1\right) \\
		& \overset{\eqref{eq: gamma beta}}{=} \frac{C  \psi^*(g)}{2} D_{(\ell -1)m}(x)  S_{(\ell-1) m}(x)  r^{S_{n}(x)} \frac{  \Gamma\left( \frac{S_{(\ell-1)m}(x)}{2}\right) \Gamma\left( \frac{f\left( x_{(\ell-1)m+g}^{\ell m} \right)}{2}  + 1\right) }{\Gamma  \left( \frac{S_{\ell m}(x)}{2} + 1 \right)}.
	\end{split} \]
	
	Consequently
	
	\begin{equation}\label{eq: Dn iterate}
		D_{n}(x) \leq  \frac{C  \psi^*(g)}{2} D_{(\ell -1)m}(x)  S_{(\ell-1) m}(x)\frac{  \Gamma\left( \frac{S_{(\ell-1)m}(x)}{2}\right) \Gamma\left( \frac{f\left( x_{(\ell-1)m+g}^{\ell m} \right)}{2}  + 1\right) }{\Gamma  \left( \frac{S_{\ell m}(x)}{2} + 1 \right)} \quad \text{ if } S_{(\ell-1)m}(x)>0.
	\end{equation}

	Let $\ell_0 = \inf \{ 1 \leq j \leq \ell -1 : S_{j m}(x) > 0 \}$. Iterating \eqref{eq: Dn iterate} gives
	
	\[ D_n(x) \leq D_{\ell_0 m}(x) \left( \frac{C  \psi^*(g)}{2} \right)^{\ell - \ell_0} \prod \limits_{j = \ell_0}^{\ell - 1} \left(S_{jm}(x) \frac{ \Gamma\left( \frac{S_{jm}(x)}{2}\right) \Gamma\left( \frac{f\left( x_{jm+g}^{(j+1)m} \right)}{2} + 1 \right) }{\Gamma  \left( \frac{S_{(j+1) m}(x)}{2} + 1 \right)} \right) \]
	
	As $\frac{f\left( x_{jm+g}^{(j+1)m} \right)}{2} + 1 \in [1, 1 + C/2]$, we have that  $\Gamma\left( \frac{f\left( x_{jm+g}^{(j+1)m} \right)}{2} +  1 \right) \leq Q$ for some constant $Q$ depending on $C$. Applying this and rearranging the product gives

	\begin{equation}\label{eq: Dn prod bound}
		D_{n}(x)  \leq  D_{\ell_0 m}(x)  \left(\frac{C  \psi^*(g) Q}{2}\right)^{\ell - \ell_0} \frac{S_{\ell_0 m}(x) \Gamma\left( \frac{S_{\ell_0 m}(x)}{2}  \right) }{\Gamma\left( \frac{S_{\ell m}(x)}{2} + 1  \right)} \prod \limits_{j = \ell_0 + 1}^{\ell - 1} \left( S_{jm}(x) \frac{\Gamma\left( \frac{S_{jm}(x)}{2}  \right)}{\Gamma \left( \frac{S_{jm}(x)}{2} + 1 \right)} \right).
	\end{equation}
	
	By \eqref{eq: gamma frac}, $\Gamma\left( \frac{S_{jm}(x)}{2} + 1 \right) = \frac{S_{jm}(x)}{2} \Gamma\left( \frac{S_{jm}(x)}{2} \right)$, so each factor in the product in \eqref{eq: Dn prod bound} equals $S_{jm}(x) \cdot \frac{1}{S_{jm}(x)/2} = 2$. Hence the product equals $2^{\ell - \ell_0 - 1}$.

	Note that by the definition of $\ell_0$ and the fact $f(x) > 0 \Rightarrow \eps \leq f(x) \leq C$, we have
	\[
	\eps \leq S_{\ell_0 m}(x) \leq C \text{ and } \eps \leq S_{jm}(x) \leq C\ell \text{ for } \ell_0 \leq j \leq \ell. \]
	Combining this with \eqref{eq: gamma bounds} and \eqref{eq: gamma frac} gives that there exist constants $R_1 = R_1(C,\eps)\text{ and } R_2 = R_2(C, \eps)$ such that
	\begin{itemize}
		\item $S_{\ell_0 m}(x) \Gamma \left( \frac{S_{\ell_0 m}(x)}{2} \right) \overset{\eqref{eq: gamma bounds}}{\leq} R_1$,
		\item $\Gamma\left( \frac{S_{\ell m}(x)}{2} + 1  \right) \overset{\eqref{eq: gamma frac}}{=} \frac{S_{\ell m}(x)}{2} \Gamma\left( \frac{S_{\ell m}(x)}{2} \right) \overset{\eqref{eq: gamma bounds}}{\geq} L_\eps \left(\frac{S_{\ell m}(x)}{2}\right)^{S_{\ell m}(x)/2 + 1/2} e^{-S_{\ell m}(x)/2} \geq R_2^{\ell} S_{\ell m}(x)^{S_{\ell m}(x) / 2}\cdot e^{-S_{\ell m}(x)/2}$.
		%\item $\frac{\Gamma\left( \frac{S_{jm}(x)}{2} \right)}{\Gamma \left( \frac{S_{jm}(x)}{2} + 1 \right)} \leq \frac{R_3}{S_{jm}(x)} $ for $\ell_0 \leq j \leq \ell$. % removed: follows directly from \eqref{eq: gamma frac}
	\end{itemize}
	where the last inequality uses $\left(\frac{S_{\ell m}(x)}{2}\right)^{1/2} \geq (\eps/2)^{1/2}$ and $2^{-S_{\ell m}(x)/2} \geq (2^{-C/2})^\ell$. Applying the above inequalities to \eqref{eq: Dn prod bound} gives for some constant $R_3 = R_3(C,\eps,g) \geq 1$
	\[ D_{n}(x)  \leq  D_{\ell_0 m}(x) R_3^{\ell} S_{\ell m}(x)^{-S_{\ell m}(x)/2}\cdot e^{S_{\ell m}(x)/2} = D_{\ell_0 m}(x) R_3^{\ell} S_{n}(x)^{-S_{n}(x)/2}\cdot e^{S_n(x)/2}, \]
	since $S_{\ell m}(x) = S_n(x)$. Setting $P = R_3 e^{C/2}$ and using $e^{S_n(x)/2} \leq (e^{C/2})^\ell$ (since $S_n(x) \leq C\ell$), we obtain
	\[ D_n(x) \leq D_{\ell_0 m}(x) P^\ell S_n(x)^{-S_n(x)/2}. \]
	As $\ell \leq n$, in order to obtain \eqref{eq: Dn lemma} and finish the proof of the lemma, we shall prove that $D_{\ell_0 m}(x) \leq C$. If $\ell_0 \geq 2$, then this follows from \eqref{eq: Sn positive Sn-1 zero}, as we have $S_{\ell_0 m}(x) > 0$ and $S_{(\ell_0 - 1)m}(x) = 0$ by the definition of $\ell_0$. If $\ell_0= 1$, then $D_{\ell_0 m}(x) = D_m(x) \leq C$, since $B_2^m(x,r) \subseteq \R^{g-1} \times B_2^k((x_g,\ldots,x_m),r)$, so we have
	\[\mu_m(B_2^m(x,r)) \leq \mu_k(B_2^k((x_g,\ldots,x_m),r)) \overset{\eqref{eq: one dim f bound}}{\leq} C r^{f(x_g,\ldots,x_m)} = Cr^{S_m(x)},\]
	where the last equality holds because $S_m(x) = f(x_g,\ldots,x_m)$ (the sum defining $S_m$ has a single term).
\end{proof}

\begin{proof}[{\bf Proof of Proposition \ref{prop: restriction}}]
	Fix $M \geq 1$ and $0 < \eta < 1$. Fix $g \in \N$ such that $\psi^*(g)< \infty$. By \eqref{eq: f int}, we can fix $k\in \N, C \geq 1$ and $0 < \eps < 1/2$ such that (recall that $f_{C,\eps,k}(x) \leq k$ for $\mu_k$-a.e. $x \in \R^k$, which follows from $\ld_k(x) \leq k$ $\mu_k$-a.e.\ (by \eqref{eq: loc dim leq n}) and the definition of $f_{C,\eps,k}$)
	\[ \frac{1}{k+g-1}\int f_{C,\eps,k} d\mu_k \geq \frac{1}{k}\int f_{C,\eps,k} d\mu_k - \eta/8 \geq \mlaldim(\mathbf{X}) - \eta/4.\]
	Set $m = g+k-1$. Let $\mu$ denote the distribution of $\mathbf{X}$ on $\R^\N$. By Birkhoff's pointwise ergodic theorem \cite[Theorem~1.14]{W82}\footnote{The ergodic theorem is applied to the function $\tilde{f}(x) = f_{C,\eps,k}(x_g, \ldots, x_m)$ and the measure-preserving transformation $\sigma^m$, where $\sigma$ denotes the left-shift on $\R^\N$, so that $S_{n,C,\eps,k}(x) = \sum_{j=0}^{\lfloor n/m \rfloor - 1} \tilde{f}(\sigma^{jm}(x))$. The ergodic theorem is applied to the $m$-th iterate $\sigma^m$ of the left-shift map, for which $\mu$ is ergodic, since $\mathbf{X}$ is $\psi^*$-mixing and therefore $\mu$ is strong-mixing (in the sense of \cite{W82}; see \cite[Section~2.1]{BradleySurvey05}), and every iterate of a strong-mixing transformation is ergodic - this follows for instance from \cite[Theorem~1.23]{W82}).}, for $\mu$-a.e.\ $x \in \R^\N$,
	\[ \lim \limits_{n \to \infty} \frac{1}{n} S_{n,C,\eps,k}(x) = \frac{1}{m} \int \limits_{\R^m} f_{C,\eps, k}(x_g, \ldots, x_{m})d\mu_m(x_1, \ldots, x_m) = \frac{1}{k+g-1} \int \limits_{\R^k} f_{C,\eps,k}  d\mu_k \geq \mlaldim(\mathbf{X}) - \eta/4.  \]
	Therefore
	\begin{equation}\label{eq: tilde En meas}
		\lim \limits_{n \to \infty} \mu_n \left( \left\{ x \in \R^n : \frac{1}{n} S_{n,C, \eps, k}(x) \geq \mlaldim(\mathbf{X}) - \eta/2 \right\} \right) = 1.
	\end{equation}
	By the Chebyshev's inequality (recall that we assume $\E X_1 = 0$)
	\begin{equation}\label{eq: chebyshev}  \mu_n(\R^n \setminus B_2^n(0, \sqrt{n}M)) = \mu_n \left(  \left\{ (x_1, \ldots, x_n) \in \R^n : \frac{1}{n}\sum \limits_{j=1}^n x_j^2 \geq M^2 \right\}\right) \leq  \frac{\Var(X_1)}{M^2}.\end{equation}
	Therefore, setting
	\[ E_n =   \left\{ x \in \R^n : \frac{1}{n} S_{n,C, \eps, k}(x) \geq \mlaldim(\mathbf{X}) - \eta/2 \right\} \cap B_2^n(0, \sqrt{n}M)\]
	we see from \eqref{eq: tilde En meas} and \eqref{eq: chebyshev} that $E_n$ satisfies items  \eqref{it: En ball} and \eqref{it: En measure bound} of Proposition \ref{prop: restriction}. It suffices to prove that \eqref{it: En mdimcor} is satisfied as well. Denote $\nu_n = \mu_n |_{E_n},\ d = \mlaldim(\mathbf{X}) \leq 1$ (recall Lemma \ref{lem: aldim mid idimr}) and $S_n(x) = S_{n,C,\eps,k}$. Fix $0 < \theta < d - \eta$. Thus $\theta + \frac{\eta}{2} < d -\frac{\eta}{2}$. Then, by the definitions of $E_n$, $f_{C,\eps,k}$ (which satisfies $f_{C,\eps,k} \leq C$), and $S_n$ (which has at most $n$ terms),
	\begin{equation}\label{eq: Sn theta bound}  (\theta + \frac{\eta}{2})n \leq S_n(x) \leq Cn \text{ for } x \in E_n.
	\end{equation}
	By \eqref{eq: energy integral}
	\[
	\begin{split} \mE_{\theta n} (\nu_n) &= \theta n \int \int \limits_{0}^\infty r^{-\theta n - 1} \nu_n(B_2^n(x,r)) dr d\nu_n(x) \\
		& = \theta n \int \int \limits_{0}^{\sqrt{n}} r^{-\theta n - 1} \nu_n(B_2^n(x,r)) dr d\nu_n(x) + \theta n \int \int \limits_{\sqrt{n}}^\infty r^{-\theta n - 1} \nu_n(B_2^n(x,r)) dr d\nu_n(x) \\
		& \overset{\text{Lem. } \ref{lem: ball Sn bound}}{\lesssim^e_{C,\eps,k,g,\theta}} \int S_n(x)^{-S_n(x)/2} \int \limits_{0}^{\sqrt{n}} r^{S_n(x)-\theta n - 1} dr d\nu_n(x) + \int \limits_{\sqrt{n}}^\infty r^{-\theta n - 1}dr \\
		& = \int \frac{n^{(S_n(x) - \theta n)/2}}{S_n(x) - \theta n} S_n(x)^{-S_n(x)/2}  d\nu_n(x) + \frac{n^{-\theta n /2}}{\theta n} \\
		& \overset{\eqref{eq: Sn theta bound}}{\lesssim^e_{C,\eps,k,g,\theta} } \int n^{(S_n(x) - \theta n)/2} ((\theta + \eta/2)n)^{-S_n(x)/2}  d\nu_n(x) + n^{-\theta n /2} \\
		& = n^{-\theta n /2} \int (\theta + \eta/2)^{-S_n(x)/2}  d\nu_n(x) + n^{-\theta n /2} \\
		& \overset{\theta + \eta/2 < 1\text{ and } \eqref{eq: Sn theta bound} }{\leq} n^{-\theta n /2} (\theta + \eta/2)^{-Cn/2}  + n^{-\theta n /2} \\
		& \lesssim^e_{C,\eps,k,g,\theta,\eta} n^{- \theta n /2}.
	\end{split}
	\]
	As $\theta < d -\eta$ can be chosen arbitrarily, this shows $\mdimcor( (\nu_n)_{n=1}^\infty) \geq d - \eta$ by \eqref{eq: mdimcor lesssim e}, since $C,\eps,k,g$ are fixed given $\eta$.
\end{proof}

\section*{Acknowledgments}
The authors are grateful to Shirin Jalali and Tobias Koch for helpful discussions. We are also very grateful to the editor and anonymous referees for many useful comments and suggestions.

\appendix

\section{Proof of Lemma \ref{lem: aldim to idim}}\label{sec: lem aldim to idim proof}
Inequality $\ualdim(\mu) \leq n$ was obtained in \eqref{eq: aldim leq n}. Inequality $\laldim(\mu) \leq \lid (\mu)$ follows from Fatou's lemma. For $\uid(\mu) \leq \ualdim(\mu)$ we can also invoke Fatou's lemma for the upper limits: if $h_r \leq g$ $\mu$-a.e.\ for all $r$, where $g \in L^1(\mu)$, then $\limsup_{r \to 0} \int h_r \, d\mu \leq \int \limsup_{r \to 0} h_r \, d\mu$ (this follows by applying Fatou's lemma \cite[Theorem~1.28]{R87} to the nonnegative functions $g - h_r$). This requires checking that the collection of functions $x \mapsto \frac{\log \mu(B_2^n(x,r))}{\log r}$ is majorized by an integrable function. For that we shall use the assumption that $\mu$ has finite variance. Our goal is to prove that
\begin{equation}\label{eq: fatou int finite} \int \limits_{\supp(\mu)} \sup \limits_{0 < r \leq \frac{1}{5}} \frac{\log \mu(B_2^n(x,r))}{\log r} d\mu(x) < \infty.
\end{equation}
For $t \geq 0$ define
\[ A_t := \left\{ x \in \supp(\mu) : \underset{ 0 < r \leq \frac{1}{5}}{\exists} \frac{\log \mu(B_2^n(x,r))}{\log r} > t \right\} = \left\{ x \in \supp(\mu) : \underset{ 0 < r \leq \frac{1}{5}}{\exists}  \mu(B_2^n(x,r)) < r^t \right\}. \]
Then, by \cite[Theorem~1.15]{mattila},
\begin{equation}\label{eq: int to leb} \int \limits_{\supp(\mu)} \sup \limits_{0 < r \leq \frac{1}{5}} \frac{\log \mu(B_2^n(x,r))}{\log r} d\mu(x) = \int \limits_{0}^\infty \mu \left( \left\{ x \in \supp(\mu) : \sup \limits_{0 < r \leq \frac{1}{5}} \frac{\log \mu(B_2^n(x,r))}{\log r} > t \right\} \right)dt = \int \limits_{0}^\infty \mu \left( A_t \right)dt.\end{equation}
Consider a cover $A_t \cap B_2^n(0,t) \subset \bigcup \limits_{x \in A_t} B(x, r_x / 5)$, where $0 < r_x \leq \frac{1}{5}$ is such that $\mu(B(x, r_x)) < r_x^t$. By the Vitali $5r$-covering lemma (see e.g. \cite[Theorem 2.1]{mattila}) there exists at most countable set $E \subset A_t$ such that the family $\{ B(x, r_x / 5) : x \in E\}$ consist of pairwise disjoint sets and $A_t \cap B_2^n(0,t) \subset \bigcup \limits_{x \in E} B(x, r_x)$. We have for $t > n$
\begin{equation}\label{eq: cover rx}\begin{split}
		\mu(A_t \cap B_2^n(0,t)) & \leq \sum \limits_{x \in E} \mu(B(x, r_x)) \leq \sum \limits_{x \in E} r_x^t \leq 5^{n-t} \sum \limits_{x \in E} r_x^n \leq 5^{2n-t} \sum \limits_{x \in E} (r_x/5)^n \\
		& = \frac{5^{2n-t}}{\alpha(n)} \sum \limits_{x \in E} \Leb_n(B_2^n(x, r_x / 5)) \\
		& \leq \frac{5^{2n-t}}{\alpha(n)} \Leb_n(B_2^n(0,t)) = 5^{2n-t} t^n.
	\end{split}
\end{equation}
On the other hand, Chebyshev's inequality gives for $t > 0$
\[ \mu(\R^n \setminus B_2^n(0, t)) \leq \frac{\int \|x\|_2^2d\mu(x)}{t^2}. \]
Combining this with \eqref{eq: int to leb} and \eqref{eq: cover rx} gives, as $\mu$ is a probability measure
\[\begin{split}
	\int \limits_{\supp(\mu)} \sup \limits_{0 < r \leq \frac{1}{5}} \frac{\log \mu(B_2^n(x,r))}{\log r} d\mu(x) & \leq n + \int \limits_{n}^\infty \mu(\R^n \setminus B_2^n(0, t))dt + \int \limits_{n}^\infty \mu(A_t \cap B_2^n(0,t)) dt \\
	& \leq n +  \int \limits_{n}^\infty \left( \frac{\int \|x\|_2^2d\mu(x)}{t^2} + 5^{2n-t} t^n  \right)dt < \infty.
\end{split}\]
This proves \eqref{eq: fatou int finite}.
%For the inequalities see \cite{BatakisHeurteaux02}.
Finally, If the local dimension exists at $\mu$-a.e. $x$, then by the already proved inequalities 
$\uid (\mu) \leq \ualdim(\mu) = \int d(\mu, x)d\mu(x) = \laldim(\mu) \leq \lid (\mu)$, hence $\aldim(\mu) = \idim(\mu)$. This finishes the proof of Lemma \ref{lem: aldim to idim}.

\section{Proof of Lemma \ref{lem: psi star mlaldim}}\label{sec: lem psi star mlaldim proof}

Denote $a_n = \laldim(X_1^n)$. We shall prove that if $\mathbf{X}$ is $\psi^*$-mixing and $g \in \N$ is such that $\psi^*(g)<\infty$, then
\begin{equation}\label{eq: gapped supadd}
	a_{n+k+g-1} \geq a_n + a_k \text{ for all } n,k \geq 1.
\end{equation}
If \eqref{eq: gapped supadd} holds, then $\lim \limits_{n \to \infty} \frac{a_n}{n}$ exists by the gapped version of Fekete's lemma, see \cite[Lemma 2.1]{RaquepasKingman} and hence the proof will be finished. Since $\|x\|_\infty \leq \|x\|_2 \leq \sqrt{n}\|x\|_\infty$ for all $x \in \R^n$, we have $B_2^n(x,r) \subseteq B_\infty^n(x,r) \subseteq B_2^n(x, \sqrt{n}\, r)$ for all $x \in \R^n$ and $r > 0$. Therefore,
\[ \liminf_{r \to 0} \frac{\log \mu_n(B_2^n(x,\sqrt{n}\,r))}{\log r} \leq \liminf_{r \to 0} \frac{\log \mu_n(B_\infty^n(x,r))}{\log r} \leq \liminf_{r \to 0} \frac{\log \mu_n(B_2^n(x,r))}{\log r}, \]
and since $\lim_{r \to 0} \frac{\log \sqrt{n}}{\log r} = 0$, both sides equal $\ld(\mu_n, x)$. Thus, for $x \in \R^n$
\[ \ld(\mu_n, x) = \liminf \limits_{r \to 0} \frac{\log \mu_n(B_\infty^n(x,r))}{\log r}. \]
Therefore, by the definition of $\psi^*(g)$, we have for for $x = (x_1, \ldots, x_{n+k+g-1}) \in \R^{n+k+g-1}$
\[\begin{split} \ld(\mu_{n+k+g-1}, x) & = \liminf \limits_{r \to 0} \frac{\log \mu_{n+k+g-1}(B_\infty^{n+k+g-1}(x,r))}{\log r} \\
	& = \liminf \limits_{r \to 0} \frac{\log \mu_{n+g+k-1} \left(B_\infty^n(x_1^n,r) \times B_\infty^{g-1}(x_{n+1}^{n+g-1},r) \times B_\infty^{k}(x_{n+g}^{n+k+g-1},r) \right)}{\log r} \\
	& \geq \liminf \limits_{r \to 0} \frac{\log \mu_{n+g+k-1} \left(B_\infty^n(x_1^n,r) \times \R^{g-1} \times B_\infty^{k}(x_{n+g}^{n+k+g-1},r) \right)}{\log r} \\
	& \geq \liminf \limits_{r \to 0} \frac{\log \left( \psi^*(g) \mu_{n} \left(B_\infty^n(x_1^n,r) \right) \mu_k \left(B_\infty^{k}(x_{n+g}^{n+k+g-1},r) \right) \right)}{\log r} \\
	& \geq \liminf \limits_{r \to 0} \frac{\log  \psi^*(g) }{\log r} +  \liminf \limits_{r \to 0} \frac{\log \mu_{n} \left(B_\infty^n(x_1^n,r) \right) }{\log r} +  \liminf \limits_{r \to 0} \frac{\log \mu_k \left(B_\infty^{k}(x_{n+g}^{n+k+g-1},r) \right)}{\log r} \\
	& = \ld (\mu_n, x_1^n) + \ld (\mu_k, x_{n+g}^{n+k+g-1}).
\end{split}\]
Consequently, as $\mathbf{X}$ is stationary
\[\begin{split}
	a_{n+k+g-1} & = \int \limits_{\R^{n+k+g-1}} \ld(\mu_{n+k+g-1}, x) d\mu_{n+g+k-1}(x) \\
	&\geq \int \limits_{\R^{n+k+g-1}} \ld (\mu_n, x_1^n)d\mu_{n+g+k-1}(x) +  \int \limits_{\R^{n+k+g-1}}  \ld (\mu_k, x_{n+g}^{n+k+g-1})d\mu_{n+g+k-1}(x) \\
	& = \int \limits_{\R^n} \ld(\mu_n, x) d\mu_n(x) + \int \limits_{\R^k} \ld(\mu_k, x) d\mu_k(x) \\
	& = a_n + a_k.
\end{split}\]
This proves \eqref{eq: gapped supadd} and finishes the proof of Lemma  \ref{lem: psi star mlaldim}.

\section{$\psi^*$-mixing and local dimension regularity of Markov chains with continuous states}\label{sec: markov psi star proof}

\begin{prop}\label{prop: markov psi star}
Let $\nu$ be a probability measure on~$\mathbb{R}$ and let
$(X_n)_{n\geq 1}$ be a stationary Markov chain on~$\mathbb{R}$
whose transition kernel has a density $f(x,y)$ with respect
to~$\nu \otimes \nu$ satisfying
\begin{equation}\label{eq: markov f bounds}
  0 < c \;\leq\; f(x,y) \;\leq\; C \;<\;\infty
  \qquad\text{for } \nu\text{-a.e.\ } x\in\mathbb{R} \text{ and }\nu\text{-a.e.\ } y\in\mathbb{R}.
\end{equation}
Then $(X_n)_{n\geq 1}$ is $\psi^*$-mixing with
\begin{equation}\label{eq: markov psi bound}
  \psi^*(g)\;\leq\; 1 + R\,\alpha^{g},
  \qquad g\geq 1,
\end{equation}
for some constants $R > 0$ and $\alpha \in (0,1)$.
Moreover, if $\nu$ is local dimension regular with
$\aldim(\nu)=d$, then all finite-dimensional marginals
of $(X_n)$ are local dimension regular, and
\begin{equation}\label{eq: markov dim}
  \mualdim(\mathbf{X})
  = \mlaldim(\mathbf{X})
  = \midim(\mathbf{X})
  = \idimr(\mathbf{X})
  = \aldim(\nu)
  = \idim(\nu)
  = d.
\end{equation}
\end{prop}

\begin{proof}
By modifying $f$ on a set of $\nu \otimes \nu$-measure zero, we can assume that inequalities \eqref{eq: markov f bounds} hold for all $x,y \in \supp(\nu)$. The transition probabilities of $(X_n)$ are given by $P(x,B)=\int_B f(x,y)\,d\nu(y)$. By~\eqref{eq: markov f bounds},
\begin{equation}\label{eq: markov doeblin}
  P(x,B) \;\geq\; c\,\nu(B)
  \qquad\text{for all } x\in\supp(\nu) \text{ and all Borel }
  B\subseteq\mathbb{R}.
\end{equation}
We shall apply \cite[Theorem~16.0.2]{MeynTweedie2009}. It is easy to check that the chain is aperiodic. By \eqref{eq: markov doeblin}, the Doeblin condition holds, hence \cite[Theorem~16.0.2]{MeynTweedie2009} yields that
there exists a unique stationary probability measure~$\mu$ and,
by~\cite[Theorem~16.0.2.(ii)]{MeynTweedie2009},
\begin{equation}\label{eq: markov tv decay}
  \sup_{x \in \supp(\nu)}\|P^g(x,\cdot)-\mu\|_{TV}\;\leq\; R\,\alpha^g,
  \qquad g\geq 0,
\end{equation}
for some constants $R > 0$ and $\alpha \in (0,1)$.

Since $P(x,\cdot)\ll\nu$ for every $x \in \supp(\nu)$,
iterating the Chapman--Kolmogorov equation shows that
$P^g(x,\cdot)\ll\nu$ for every $g\geq 1$ and $x\in \supp(\nu)$; denote
the density by $f^{(g)}(x,\cdot)$.
For the stationary measure we have $\mu(\cdot)=\int P(x,\cdot)\,\mu(dx)$,
so by Fubini and \eqref{eq: markov f bounds}, we have $\mu\ll\nu$ with density
\begin{equation}\label{eq: markov pi bounds}
  \pi(y)\;:=\;\int f(x,y)\,\mu(dx),
  \qquad c\leq\pi(y)\leq C \quad\nu\text{-a.e.}
\end{equation}
The Chapman--Kolmogorov equation gives
\[
  f^{(g)}(x,y)-\pi(y)
  = \int \bigl[f^{(g-1)}(x,z)-\pi(z)\bigr]\,f(z,y)\,d\nu(z),
\]
hence, using $f(z,y)\leq C$,
\begin{equation}\label{eq: markov ptwise}
  |f^{(g)}(x,y)-\pi(y)|
  \;\leq\; C\!\int |f^{(g-1)}(x,z)-\pi(z)|\,d\nu(z)
  = 2C\,\|P^{g-1}(x,\cdot)-\mu\|_{TV}.
\end{equation}
Set $R_g := \sup_{x\in\mathbb{R}}\|f^{(g)}(x,\cdot)/\pi(\cdot)\|_{L^\infty(\nu)}$.
By \eqref{eq: markov tv decay}, \eqref{eq: markov ptwise} and $\pi\geq c$ from~\eqref{eq: markov pi bounds},
\begin{equation}\label{eq: markov Rg}
  R_g \;\leq\; 1 + \frac{2C}{c}\,
  \sup_{x\in\supp(\nu)}\|P^{g-1}(x,\cdot)-\mu\|_{TV} \leq 1 + \frac{2CR\alpha^{g-1}}{c}.
\end{equation}
Fix $g,n\geq 1$ and events $A\in\sigma(X_1,\ldots,X_n)$,
$B\in\sigma(X_{n+g},X_{n+g+1},\ldots)$ with $\PP(A),\PP(B)>0$.
By the Markov property, $\PP(A\cap B)=\E[\mathbf{1}_A\,\PP(B\mid X_n)]$,
so for \eqref{eq: markov psi bound} it suffices to show
\begin{equation}\label{eq: markov goal}
  \PP(B\mid X_n=x) \;\leq\; R_g\,\PP(B)
  \qquad\text{for $\mu$-a.e.\ }x.
\end{equation}

For a cylinder set of the form
$B_m=\{X_{n+g}\in A_1,\,X_{n+g+1}\in A_2,\,\ldots,\,X_{n+g+m-1}\in A_m\}$
(with $A_1,\ldots,A_m$ Borel), the Markov property gives
\begin{align}
  \PP(B_m\mid X_n\!=\!x)
  &= \int_{A_1}\!f^{(g)}(x,y_1)
    \underbrace{\int_{A_2}\!\cdots\!\int_{A_m}
    f(y_1,y_2)\cdots f(y_{m-1},y_m)\,d\nu(y_m)\cdots d\nu(y_2)
    }_{=:\,h(y_1)\,\geq\,0}\;d\nu(y_1) \notag\\
  &\leq R_g\int_{A_1}\!\pi(y_1)\,h(y_1)\,d\nu(y_1)
  \;=\; R_g\,\PP(B_m). \label{eq: markov cylinder}
\end{align}

The collection $\mathcal{A}$ of all finite unions of cylinder sets is
an algebra that generates $\sigma(X_{n+g},X_{n+g+1},\ldots)$.
The bound~\eqref{eq: markov cylinder} extends to all $B\in\mathcal{A}$
(by noting that in fact every finite union of cylinders can be written as a finite union of disjoint cylinders). It extends further to general $B\in\sigma(X_{n+g},X_{n+g+1},\ldots)$ by approximating $B$ by $B' \in \mathcal{A}$ (for every $\varepsilon >0$ there exists $B' \in \mathcal{A}$ such that $\PP(B\setminus B' \cup B' \setminus B)<\varepsilon$, see
e.g.\ \cite[Lemma A.2.1]{Durrett2019}). This proves \eqref{eq: markov psi bound}.

Now we prove \eqref{eq: markov dim}. Suppose that $\nu$ is local dimension regular with $\aldim(\nu)=d$. Let $\mu_n$ denote the distribution of $(X_1,\ldots,X_n)$ on~$\mathbb{R}^n$. By the Markov property and stationarity,
\begin{equation}\label{eq: markov rn}
  \frac{d\mu_n}{d\nu^n}(x_1,\ldots,x_n)
  = \pi(x_1)\,f(x_1,x_2)\cdots f(x_{n-1},x_n)
  \;>\; 0
  \qquad\nu^n\text{-a.e.}
\end{equation}
since each factor is positive by~\eqref{eq: markov f bounds} and~\eqref{eq: markov pi bounds}. In particular $\mu_n\sim\nu^n$ ($\mu_n$ and $\nu^n$ are mutually absolutely continuous). Since $\nu$ is local dimension regular with $d(\nu,x)=d$ for
$\nu$-a.e.~$x$, the product measure $\nu^n$ is local dimension
regular with $d(\nu^n,\mathbf{x})=nd$ for $\nu^n$-a.e.\
$\mathbf{x}=(x_1,\ldots,x_n)\in\mathbb{R}^n$.
Indeed, for any $\mathbf{x}$ at which all coordinates have local
dimension~$d$, the ball inclusions
$\prod_{i=1}^n B(x_i,r/\sqrt{n})\subset B_2^n(\mathbf{x},r)
\subset\prod_{i=1}^n B(x_i,r)$
give
\[
  \prod_{i=1}^n\nu\bigl(B(x_i,r/\sqrt{n})\bigr)
  \;\leq\; \nu^n\bigl(B_2^n(\mathbf{x},r)\bigr)
  \;\leq\; \prod_{i=1}^n\nu\bigl(B(x_i,r)\bigr),
\]
and since $\frac{\log\nu(B(x_i,r/\sqrt{n}))}{\log r}
= \frac{\log\nu(B(x_i,r/\sqrt{n}))}{\log(r/\sqrt{n})}
\cdot\frac{\log(r/\sqrt{n})}{\log r}\to d\cdot 1$ as $r\to 0$,
we get $d(\nu^n,\mathbf{x})=nd$. Since $\mu_n\sim\nu^n$ with density~\eqref{eq: markov rn} that is
positive and finite $\nu^n$-a.e., the local dimension of $\mu_n$
at $\mathbf{x}$ coincides with that of $\nu^n$ at $\mu_n$-a.e.\
$\mathbf{x}$. Indeed, writing $\rho_n:=d\mu_n/d\nu^n>0$ for the Radon--Nikodym derivative,
\begin{equation}\label{eq: markov mu nu dim}
  \frac{\log\mu_n(B_2^n(\mathbf{x},r))}{\log r}
  = \frac{\log\nu^n(B_2^n(\mathbf{x},r))}{\log r}
    + \frac{\log\bigl(\mu_n(B_2^n(\mathbf{x},r))/
      \nu^n(B_2^n(\mathbf{x},r))\bigr)}{\log r}.
\end{equation}
At every Lebesgue point of $\rho_n$ (with respect to the
measure~$\nu^n$ and Euclidean balls),

\[ \mu_n(B_2^n(\mathbf{x},r))/\nu^n(B_2^n(\mathbf{x},r))
\to\rho_n(\mathbf{x})\in(0,\infty),\]
so the second term in \eqref{eq: markov mu nu dim} vanishes as $r\to 0$.
Therefore $d(\mu_n,\mathbf{x})=d(\nu^n,\mathbf{x})=nd$
at $\mu_n$-a.e.~$\mathbf{x}$,
and hence $\mu_n$ is local dimension regular with
$\aldim(\mu_n)=nd$. By Lemma~\ref{lem: aldim to idim}, this gives
$\aldim(\mu_n)=\idim(\mu_n)=nd$.
Since $(X_n)$ is $\psi^*$-mixing and all
finite-dimensional marginals are local dimension regular,
Lemma~\ref{lem: aldim idim} and Lemma~\ref{lem: psi star mlaldim}
give \eqref{eq: markov dim}.
\end{proof}

\section{Monotonicity properties of energy integrals}\label{sec: mdim cor}

\begin{lem}\label{lem: energy comparsion}
	For each $n \geq 1$, let $\mu_n$ be a subprobability measure on $\R^n$ and $0 \leq s_n  \leq \theta n$ a constant.  Let $\theta \geq 0$ be such that $\mE_{\theta n}(\mu_n) \lesssim^e_{\theta} n^{-\theta n/2} $. Then it holds
	\[ \mE_{s_n}(\mu_n) \lesssim^e_{\theta} n^{-s_n/2}. \]
\end{lem}
\begin{proof}
	\[
	\begin{split} \mE_{s_n}(\mu_n) & \leq \iint \limits_{\|x - y\|_2 \leq \sqrt{n}}\|x-y\|_2^{-s_n} d\mu_n(x)d\mu_n(y) + n^{-s_n/2} \iint \limits_{\|x - y\|_2 > \sqrt{n}} d\mu_n(x)d\mu_n(y) \\
		& \leq n^{(\theta n - s_n)/2} \iint \limits_{\|x - y\|_2 \leq \sqrt{n}}\|x-y\|_2^{-\theta n} d\mu_n(x)d\mu_n(y) + n^{-s_n/2} \mu_n(\R^n)^2 \\
		& \leq n^{(\theta n - s_n)/2} \mE_{\theta n}(\mu_n) + n^{-s_n/2}  \\
		& = n^{-s_n / 2} \left( n^{\theta n /2} \mE_{\theta n}(\mu_n) + 1 \right) \\
		& \lesssim^e_{\theta} n^{-s_n/2},
	\end{split} \]
	where the last inequality follows from $\mE_{\theta n}(\mu_n) \lesssim^e_{\theta} n^{-\theta n/2} $.
\end{proof}

% Corollary D.2 removed: the fact that mdimcor <= 1 follows directly from \eqref{eq: mdimcor geq 1}.

\section{Random Gaussian matrices}\label{sec: gauss matrices}

Recall that $G_{n,m}$ denotes the standard Gaussian measure on $\R^{m \times n}$ with $m \leq n$ (see Section~\ref{sec: mdimcor main}).

\begin{lem}\label{lem: gauss trans}
	For every $n \geq m \geq 1$, $u \in \R^n \setminus \{ 0\}$, and $0< \eps < 1$
	\begin{equation}\label{eq: gauss trans} G_{n,m}(\{ A \in \R^{m \times n}: \|Au\|_2 \leq\eps \sqrt{m}\|u\|_2 \})  \leq e^m \eps^m,\end{equation}
		where $G_{n,m}$ is the standard Gaussian measure on $\R^{m_n \times n}$.
\end{lem}
\begin{proof}
	Let $A = [a_{i j}]_{(i,j) \in \{1, \ldots, 
	m \} \times \{ 1, \ldots, n\}}$, so that $a_{i j}$ are i.i.d. random variables with distribution $N(0,1)$ on a probability space $(\Om, \Fk, \PP)$. Denote $u = (u_1, \ldots, u_n)$ and observe
	\[ G_{n,m}(\{ A : \|Au\|_2 \leq \eps \sqrt{m}\|u\|_2 \}) = \PP \left( \sum \limits_{i=1}^m \left(\sum \limits_{j=1}^n a_{i j}u_j \right)^2 \leq \eps^2 m \|u\|_2^2  \right) = \PP \left( \sum \limits_{i=1}^m \left(\sum \limits_{j=1}^n \frac{a_{i j}u_j}{\|u\|_2} \right)^2 \leq \eps^2 m \right).\]
	Note that $Z_i = \sum \limits_{j=1}^n \frac{a_{i j}u_j}{\|u\|_2}$ are independent random variables with distribution $N(0,1)$. Therefore applying \cite[Lemma 2]{JalaliMalekiBaraniuk14} with $\tau = 1- \eps^2$ gives
	\[\begin{split} G_{n,m}(\{ A : \|Au\|_2 \leq \eps \sqrt{m}\|u\|_2 \}) & = \PP \left( \sum \limits_{i=1}^m Z_i^2 \leq m(1-\tau) \right) \leq e^{\frac{m}{2}(\tau + \log(1-\tau))} \\
		& =  e^{\frac{m}{2}(1 - \eps^2 + \log \eps^2)} = e^\frac{m(1 - \eps^2)}{2}\eps^m \\
		& \leq e^m \eps^m.
	\end{split}\]
\end{proof}

Second, we need a high probability bound on the operator norm $\|A\|$ (with respect to Euclidean norms) of a random Gaussian matrix $A \in \R^{m \times n}$ with $m \leq n$.

\begin{lem}\label{lem: gauss norm}
	There exists an absolute constant $K \geq 1$ such that for every $n \geq m \geq 1$
	\[ G_{n,m}( \{ A \in \R^{m \times n}:  \|A\| \geq K\sqrt{n} \} ) \leq 2e^{-n},\]
			where $G_{n,m}$ is the standard Gaussian measure on $\R^{m_n \times n}$.
\end{lem}

\begin{proof}
	Again, let $A = [a_{i j}]$ with $a_{i j}$ being i.i.d. $N(0,1)$ random variables over a probability space $(\Om, \Fk, \PP)$. By \cite[Theorem 4.4.5]{VershyninBook} (recall that we assume $m \leq n$), there exists a universal constant $C>0$ such that for all $t > 0$
	\begin{equation}\label{eq: norm conc} \PP( \{ A :  \|A\| \geq C(2\sqrt{n} + t)\max \limits_{i,j} \|a_{i j}\|_{\psi_2} \} ) \leq 2e^{-t^2},
	\end{equation}
	where $\| a_{i j} \|_{\psi_2}$ denotes the sub-Gaussian norm of the random variable $a_{i j}$ (see \cite[Definition 2.5.6]{VershyninBook}). By \cite[Example 2.5.8.(i)]{VershyninBook}, $\max \limits_{i,j} \|a_{i j}\|_{\psi_2}$ is bounded by an absolute constant (independently of $m,n$). Applying this together with \eqref{eq: norm conc} for $t = \sqrt{n}$ finishes the proof.
\end{proof}

\section{Conditional disintegration}\label{sec: cond}

We will need to work with conditional disintegration of measures with respect to linear maps. A useful formalism of this classical theory follows \cite{SimmonsRohlin}. For a Borel map $\phi \colon X \to \R^m$ on a compact set $X \subset \R^n$ and a (complete) finite Borel measure $\mu$ on $X$, we define a system of measures $\mu_{\phi, z}$, $z \in \R^m$, where $\mu_{\phi, z}$ is a (possibly zero) Borel measure on $\phi^{-1}(z)$ defined as the weak-$^*$ limit
\begin{equation}\label{eq:cond measure limit}
	\mu_{\phi,z} = \lim_{r \to 0} \frac{1}{\mu(\phi^{-1}(B_2^m(z,r)))} \; \mu|_{\phi^{-1}(B_2^m(z,r))},
\end{equation}
whenever the limit exists, and zero otherwise. By the topological Rohlin disintegration theorem \cite{SimmonsRohlin}, the limit in \eqref{eq:cond measure limit} exists for $\phi \mu$-almost every $z \in \R^m$ and satisfies
\begin{equation}\label{eq:cond meas decomp}
	\mu(E) = \int_{\R^m} \mu_{\phi, z}(E) \; d(\phi \mu)(z) \qquad\text{for  every }  \mu\text{-measurable } E \subset X
\end{equation}
(in particular, the function $\R^m \ni z \mapsto \mu_{\phi, z}(E)$ in \eqref{eq:cond meas decomp} is $\phi \mu$-measurable) and
\begin{equation}\label{eq:cond meas inverse}
	\mu_{\phi, z}(\phi^{-1}(z)) = 1 \qquad\text{for $\phi \mu$-almost every } z \in \R^m.
\end{equation}
The system $\{\mu_{\phi, z}\}_{z \in \R^m}$ is called the \textbf{system of conditional measures for $\mu$ with respect to $\phi$}. Moreover, the conditions \eqref{eq:cond meas decomp} and \eqref{eq:cond meas inverse} characterize the system  $\{\mu_{\phi, z}\}_{z \in \R^m}$ uniquely ($\phi \mu$-almost surely). See \cite{SimmonsRohlin} for details (note that \cite{SimmonsRohlin} considers only the case where $\mu$ is a probability measure, while in our case we consider a general finite measure $\mu$ and set the conditional measures $\mu_{\phi,z}$ to have (almost surely) unit mass. This case follows directly from \cite{SimmonsRohlin} by normalizing $\mu$ to be a probability measure).

We will also make use of the following simple observation. If $g \colon X \to [0, \infty]$ is lower semi-continuous, then for $\phi \mu$-almost every $z \in \R^k$,
\begin{equation}\label{eq:muG liminf}
	\int g\, d \mu_{\phi,z} \leq  \liminf_{r \to 0}  \frac{1}{\mu(\phi^{-1}(B_2^m(z,r)))} \int_{\phi^{-1}(B_2^m(z,r))} g\, d\mu.
\end{equation}
This follows from the definition of $\mu_{\phi,z}$ as a weak-$^*$ limit and the fact that a lower semi-continuous function $g \colon X \to [0, \infty]$ is a non-decreasing limit $g_k \nearrow g$ of a sequence of non-negative continuous functions $g_k : X \to [0, \infty)$ (or see e.g. \cite[Corollary 8.2.5]{BogachevMeasureTheory}). More precisely, by the monotone convergence theorem for non-negative functions (see e.g. \cite[Theorem 1.26]{R87})
\[
\begin{split} \int g\, d \mu_{\phi,z} & = \lim \limits_{k \to \infty} \int g_k d \mu_{\phi,z} = \lim \limits_{k \to \infty}\ \lim \limits_{r \to 0 } \frac{1}{\mu(\phi^{-1}(B_2^m(z,r)))} \int_{\phi^{-1}(B_2^m(z,r))} g_k\, d\mu \\
	& \leq \liminf \limits_{r \to 0 } \frac{1}{\mu(\phi^{-1}(B_2^m(z,r)))} \int_{\phi^{-1}(B_2^m(z,r))} g\, d\mu.
\end{split} \]

\section{Gamma and beta functions}\label{sec: gamma beta}

For $z > 0$ the gamma function is defined as
\[ \Gamma(z) = \int \limits_0^\infty t^{z-1}e^{-t} dt. \]
Recall that the gamma function extends the factorial function in the sense that $\Gamma(n) = (n-1)!$ for $n \in \N$. One can express the volume of the unit $n$-ball in its terms as
\begin{equation}\label{eq: vol unit ball}
	\alpha(n) := \Leb_n(B_2^n(0,1)) = \frac{\pi^{n/2}}{\Gamma(n/2 + 1)}.
\end{equation}
For $z_1, z_2 > 0$ the beta function is defined as
\[ B(z_1, z_2) = \int \limits_0^1 t^{z_1 - 1} (1-t)^{z_2 - 1} dt.\]

The two are connected via the following formula

\begin{equation}\label{eq: gamma beta}
	B(z_1, z_2) = \frac{\Gamma(z_1)\Gamma(z_2)}{\Gamma(z_1 + z_2)}.
\end{equation}

We will also make use of bounds, which follow directly from Stirling's approximation for the gamma function (see e.g. \cite[Eq. (3.9)]{ArtinGamma}):
\[ \Gamma(z) = \sqrt{\frac{2\pi}{z}}\left(\frac{z}{e}\right)^z\left(1 + O\left( \frac{1}{z} \right) \right), \]
as well of the fact that $\Gamma(z)$ is positive and continuous in $(0,\infty)$. It follows that there exists an absolute constant $L$ and constant $L_\eps$ depending on $\eps>0$ such that
\begin{equation}\label{eq: gamma bounds}  L_\eps z^{z - 1/2} e^{-z} \leq \Gamma(z) \leq L z^{z - 1/2} e^{-z} \text{ for } z \geq \eps.
\end{equation}

We will also use the identity
\begin{equation}\label{eq: gamma frac} \Gamma(z+1) = z\Gamma(z) \text{ for } z>0.
\end{equation}

\section{Covering bounds}\label{sec: covering lemmas}

The following is a variant of the Hardy-Littlewood maximal inequality of weak type (see~\cite[Theorem~2.19]{mattila}) with an explicit control of constants.

\begin{lem}\label{lem: density bound}
	Let $n \geq m \geq 1$ and $M \geq 1$. Let $\mu$ be a finite Borel measure on $\R^n$ such that $\mu(\R^n \setminus B_2^n(0,\sqrt{n}M)) = 0$. Then for every linear map $A \in \R^{m \times n}$ and every $D>0$
	\[ A\mu \left( \left\{ x \in \R^m : \underset{0 < r \leq 1}{\exists}\ \mu(A^{-1}(B_2^m(x,r))) \leq Dr^m  \right\} \right) \leq D (5\|A\|\sqrt{n}M + 1)^m. \]
\end{lem}

\begin{proof}
	First, note that
	\begin{equation}\label{eq: supp Amu} \supp A\mu \subset A(B_2^n(0, \sqrt{n}M)) \subset B_2^m(0, \|A\|\sqrt{n}M).
	\end{equation}
	Denote
	\[ E = \left\{ x \in \supp A\mu :  \underset{0 < r \leq 1}{\exists}\ \mu(A^{-1}(B_2^m(x,r))) \leq Dr^m \right\}\]
	and consider a cover
	\[ E \subset \bigcup \limits_{x \in E} B_2^m(x, r_x / 5), \]
	where $0 < r_x \leq 1$ is such that $A\mu(B_2^m(x,r_x)) \leq Dr_x^m$. By the Vitali $5r$-covering lemma (see e.g. \cite[Theorem 2.1]{mattila}) there exists at most countable set $F \subset E$ such that the family $\{B_2^m(x, r_x / 5) : x \in F \}$ consists of pairwise disjoint sets and $E \subset \bigcup \limits_{x \in F} B_2^m(x, r_x)$. Therefore
	\begin{equation}\label{eq: AmuE}
		A\mu(E) \leq \sum \limits_{x \in F} A\mu(B_2^m(x,r_x)) \leq \sum \limits_{x \in F} D r_x^m.
	\end{equation}

	On the other hand, by the disjointness of $\{B_2^m(x, r_x / 5) : x \in F \}$ we have
	\begin{equation}\label{eq: sum rx}
		\sum \limits_{x \in F} r_x^m = \frac{5^m}{\alpha(m)} \sum \limits_{x \in F} \Leb_m(B_2^m(x, \frac{r_x}{5})) = \frac{5^m}{\alpha(m)}  \Leb_m\left(\bigcup \limits_{x \in F} B_2^m(x, \frac{r_x}{5})\right).
	\end{equation}
	As $F \subset \supp A \mu$, we have by \eqref{eq: supp Amu}
	\[ \bigcup \limits_{x \in F} B_2^m(x, \frac{r_x}{5}) \subset B_2^m(0, \|A\|\sqrt{n}M + \frac{1}{5}),\]
	so \eqref{eq: sum rx} gives
	\[ \sum \limits_{x \in F} r_x^m  \leq (5\|A\|\sqrt{n}M + 1)^m. \]
	Combining this with \eqref{eq: AmuE} finishes the proof.
\end{proof}

\begin{lem}\label{lem: ball to energy}
	Let $\mu$ be a finite Borel measure on $\R^n$. Then for every $s>0, z \in \R^n$ and $r>0$
	\[ \mu(B_2^n(z,r)) \leq 2^{s/2} r^{s/2} \mE_s(\mu)^{1/2}. \]
\end{lem}
\begin{proof}
	\[ \mE_s(\mu) \geq \int \limits_{B_2^n(z,r)} \int \limits_{B_2^n(z,r)} \|x - y\|_2^{-s} d\mu(x) d\mu(y) \geq (2r)^{-s} \mu(B_2^n(z,r))^2. \]
\end{proof}

\section{Examples}\label{sec: maldim noneq}

Below we present examples showing that the assumption of local dimension regularity cannot be omitted in Lemma \ref{lem: aldim idim} and Example \ref{ex: i.i.d.}.

\begin{example}\label{ex: maldim noneq}
	Let $S \subset \N$ be such that
	\[\liminf \limits_{n \to \infty} \frac{\#(S \cap [1,n])}{n} = \liminf \limits_{n \to \infty} \frac{\#((\N \setminus S) \cap [1,n])}{n} = 0\] and 
	\[\limsup \limits_{n \to \infty} \frac{\#(S \cap [1,n])}{n} = \limsup \limits_{n \to \infty} \frac{\#((\N \setminus S) \cap [1,n])}{n} = 1\]
	(for instance one can take $S = \bigcup \limits_{n=0}^\infty[s_{2n}, s_{2n+1})$, where $(s_n)_{n \geq 0}$ is a strictly increasing sequence of natural numbers such that $\lim \limits_{n \to \infty} \frac{s_n}{s_{n+1} - s_n} = 0$). Let $\eps_j, j \geq 1$ be a sequence of i.i.d random variables such that $\PP(\eps_j = 0) = \PP(\eps_j = 1) = 1/2$. Define random variables $Y, Z$ as $Y = \sum \limits_{j \in S} \eps_j 2^{-j}$ and $Z = \sum \limits_{j \in \N \setminus S} \eps_j 2^{-j}$. Let $Y_i, Z_i, i \geq 1$ be a collection of independent random variables, such that $Y_i$ have the same distribution as $Y$ and $Z_i$ have the same distributions as $Z$. Set $\mathbf{Y} = (Y_1, Y_2, \ldots)$ and $\mathbf{Z} = (Z_1, Z_2, \ldots)$.
	
	\begin{enumerate}
		\item\label{it: conv comb process} Let $\Delta$ be a random variable independent of all $Y_i, Z_i$ and such that $\PP(\Delta = 0) = \PP(\Delta = 1) = 1/2$. Set $X_i = \Delta Y_i + (1-\Delta)Z_i$ and $\mathbf{X} = (X_1, X_2, \ldots)$ (so $\mathbf{X} = \Delta\mathbf{Y} + (1-\Delta)\mathbf{Z}$). Then
		\[ \mlaldim(\mathbf{X}) = 0 < 1/2 = \midim(\mathbf{X}) = \idimr(\mathbf{X}) < 1 = \mualdim(\mathbf{X}). \]
		This example shows that all equalities in Lemma \ref{lem: aldim idim} can be violated if the finite-dimensional distributions of the process are not local dimension regular. Note that $\mathbf{X}$ is a stationary, but non-ergodic process.
		\item\label{it: iid process} Let $\Delta_i$ be a sequence of random variables independent of all $Y_i, Z_i$ and such that $\PP(\Delta_i = 0) = \PP(\Delta_i = 1) = 1/2$. Set $X_i = \Delta_i Y_i + (1-\Delta_i)Z_i$ and $\mathbf{X} = (X_1, X_2, \ldots)$ (so $\mathbf{X}$ is an i.i.d. process). Then
		\[ \laldim(X_1) = 0, \ualdim(X_1) = 1\]
		and
		\[\mlaldim(\mathbf{X}) = \mualdim(\mathbf{X}) = \midim(\mathbf{X}) = \idimr(\mathbf{X}) = \idim(X_1) = 1/2. \]
		This example shows that for an i.i.d. process without local dimension regular $1$-dimensional distribution, the mean average local dimension does not have to coincide with the average local dimension of its $1$-dimensional margin (cf. Example \ref{ex: i.i.d.})
	\end{enumerate}
\end{example}

Let us now prove the formulas in Example \ref{ex: maldim noneq}. Given $b,n \in \N$, let $\Ck^n_b$ be the partition of $\R^n$ into cubes of side length $2^{-b}$, such that each $C \in \Ck^n_b$ is an $n$-fold product of intervals of the form $[\frac{\ell}{2^b}, \frac{\ell+1}{2^b})$ with $\ell \in \Z$. Given $x \in \R^n$, let $C^n_b(x)$ be the unique element of $\Ck^n_b$ containing $x$. We will make use of the following fact (see e.g. \cite[Proposition 3.20]{HochmanNotes}): for a finite Borel measure $\mu$ on $\R^n$
\begin{equation}\label{eq: local dim dyadic}
	\ld(\mu, x) = - \liminf \limits_{b \to \infty} \frac{\log \mu(C^n_b(x))}{b} \text{ and } \ud(\mu, x) = - \limsup \limits_{b \to \infty} \frac{\log \mu(C^n_b(x))}{b} \text{ for } \mu\text{-a.e. } x.
\end{equation}
Note also that given a random vector $X^n$ taking values in $\R^n$ with distribution $\mu$, we have
\begin{equation}\label{eq: entropy vec part}
	H([X^n]_{2^b}) = H(\mu, \Ck^n_b),
\end{equation}
where $H(\mu, \Pk) = -\sum \limits_{C \in \Pk} \mu(C) \log \mu(C)$ is the entropy of $\mu$ with respect to a partition $\Pk$ of $\R^n$. We shall also use the fact that given two finite measure $\mu, \nu$ on $\R^n$ one has (see \cite[Corollary 3.17]{HochmanNotes})
\begin{equation}\label{eq: ac dim}
	\ld(\nu, x) = \ld(\mu, x) \text{ and } \ud(\nu, x) = \ud(\mu, x) \text{ at } \nu\text{-a.e. x, if } \nu \ll \mu.
\end{equation}

Consider now random variables $Y,Z,\Delta$ as defined in Example \ref{ex: maldim noneq}, with an underlying probability space $(\Om, \Fk, \PP)$. Let $\mu_Y, \mu_Z$ denote the distributions of $Y,Z$ on $\R$, respectively. Note that for each $b \in \N$, and $C \in \Ck^1_b$ of the form $C =[\frac{\ell}{2^b}, \frac{\ell+1}{2^b}), \ell \in \Z $ one has
\[\mu_Y(C) = 0\ \text{ or }\ \mu_Y(C) = \PP\left( \frac{\ell}{2^b} = \sum \limits_{j=1}^b \eps_j2^{-j} \mathds{1}_S(j)\right) = 2^{-\#(S \cap [1,b])}
\]
(following from the uniqueness of the dyadic expansion of integers)
and similarly
\[\mu_Z(C) = 0\ \text{ or }\ \mu_Z(C) = \PP\left( \frac{\ell}{2^b} = \sum \limits_{j=1}^b \eps_j2^{-j} \mathds{1}_{\N \setminus S}(j)\right) = 2^{-\#((\N \setminus S) \cap [1,b])}.\]
Note that it follows that all $C \in C^1_b$ with non-zero $\mu_Y$-measure are of equal measure (and likewise for $\mu_Z$). Therefore for all $b \geq 1$
\begin{equation}\label{eq: loc dim Y} \mu_Y(C^1_b(x)) = 2^{-\#(S \cap [1,b])} \text{ for } \mu_Y\text{-a.e. }x
\end{equation}
and 
\begin{equation}\label{eq: loc dim Z} \mu_Z(C^1_b(x)) = 2^{-\#((\N \setminus S) \cap [1,b])} \text{ for } \mu_Z\text{-a.e. }x.
\end{equation}

By \eqref{eq: loc dim Y} and \eqref{eq: loc dim Z}, combined with \eqref{eq: local dim dyadic} and assumption on $S$, we see that for $\mu_Y$-a.e. $x$
\[ \ld(\mu_Y, x) = \liminf \limits_{n \to \infty} \frac{\#(S \cap [1,b])}{b} = 0,\  \ud(\mu_Y, x) = \limsup \limits_{n \to \infty} \frac{\#(S \cap [1,b])}{b} = 1 \]
and similarly for $\mu_Z$-a.e. $x$
\[ \ld(\mu_Z, x) = \liminf \limits_{n \to \infty} \frac{\#((\N\setminus S) \cap [1,b])}{b} = 0,\  \ud(\mu_Z, x) = \limsup \limits_{n \to \infty} \frac{\#((\N\setminus S) \cap [1,b])}{b} = 1. \]

Let us now prove equalities from point \eqref{it: conv comb process}.  In this case we see that $X^n$ has distribution
\begin{equation}\label{eq: distr nonerg}
	\mu_{X^n} = \frac{1}{2}\mu_Y^{\otimes n} + \frac{1}{2}\mu_Z^{\otimes n}.
\end{equation}
It now follows from \eqref{eq: local dim dyadic} and \eqref{eq: loc dim Y} that for $\mu_{Y}^{\otimes n}$-a.e. $(x_1, \ldots, x_n) \in \R^n$
\[ \ld(\mu_{Y}^{\otimes n}, (x_1, \ldots, x_n)) = \liminf \limits_{b \to \infty} \frac{\sum \limits_{i=1}^n \log \mu_Y(C^1_{b}(x_i))}{b} = n \liminf \limits_{b \to \infty} \frac{\#(S \cap [1,b])}{b} = 0\]
and
\[  \ud(\mu_{Y}^{\otimes n}, (x_1, \ldots, x_n)) = \limsup \limits_{b \to \infty} \frac{\sum \limits_{i=1}^n \log \mu_Y(C^1_{b}(x_i))}{b} = n \limsup \limits_{b \to \infty} \frac{\#(S \cap [1,b])}{b} = n, \]
and similarly for  $\mu_{Z}^{\otimes n}$-a.e. $(x_1, \ldots, x_n) \in \R^n$
\[\ld(\mu_{Z}^{\otimes n}, (x_1, \ldots, x_n)) = 0,\ \ld(\mu_{Z}^{\otimes n}, (x_1, \ldots, x_n)) = n.\]
Combining this with \eqref{eq: ac dim} gives, as $\mu_{Y^n} \ll \mu_{X^n}$ and $\mu_{Z^n} \ll \mu_{X^n}$,
\begin{equation}\label{eq: laldim nonerg}
	\begin{split} \laldim(X^n)& = \int \ld(\mu_{X^n}, (x_1, \ldots, x_n))d\mu_{X^n}(x_1, \ldots, x_n) \\
		& = \frac{1}{2} \int \ld(\mu_{X^n}, (x_1, \ldots, x_n))d\mu_{Y}^{\otimes n}(x_1, \ldots, x_n) + \frac{1}{2} \int \ld(\mu_{X^n}, (x_1, \ldots, x_n))d\mu_{Z}^{\otimes n}(x_1, \ldots, x_n) \\
		& = \frac{1}{2} \int \ld(\mu_{Y}^{\otimes n}, (x_1, \ldots, x_n))d\mu_{Y}^{\otimes n}(x_1, \ldots, x_n) + \frac{1}{2} \int \ld(\mu_{Z}^{\otimes n}, (x_1, \ldots, x_n))d\mu_{Z}^{\otimes n}(x_1, \ldots, x_n)\\
		& = 0
	\end{split}
\end{equation}
and similarly
\begin{equation}\label{eq: ualdim nonerg}
	\ualdim(X^n) = n.
\end{equation}
These give
\[ \mlaldim(\mathbf{X}) =0\ \text{ and }\ \mualdim(\mathbf{X})=1.\]
It remains to prove
\begin{equation}\label{eq: midim idimr nonerg ex} 1/2 = \midim(\mathbf{X}) = \idimr(\mathbf{X}).
\end{equation}
By \eqref{eq: entropy vec part}, \eqref{eq: loc dim Y} and \eqref{eq: loc dim Z}
\[\begin{split}
	H([X^n]_{2^b}|\Delta) & = \frac{1}{2}(H([Y^n]_{2^b}) + H([Z^n]_{2^b})) = 
	\frac{n}{2}(H([Y]_{2^b}) + H([Z]_{2^b})) \\
	& = \frac{n}{2} (\#(S \cap [1,b]) + \#(\N\setminus S) \cap [1,b]) = \frac{nb}{2}.
\end{split}\]
As 
\[ H([X^n]_{2^b}|\Delta) \leq H([X^n]_{2^b}) \leq H(\Delta) + H([X^n]_{2^b}|\Delta) = \log 2 + H([X^n]_{2^b}|\Delta),\]
we obtain \eqref{eq: midim idimr nonerg ex}.

Let us deal now with point \eqref{it: iid process}. In this case, instead of \eqref{eq: distr nonerg}, we have
\begin{equation}\label{eq: distr iid}
	\mu_{X^n} = \left(\frac{1}{2}\mu_Y + \frac{1}{2}\mu_Z\right)^{\otimes n} = \sum \limits_{(\om_1, \ldots \om_n)\in \{0,1\}^n}\ 2^{-n}\bigotimes \limits_{i=1}^n \left( \om_i\mu_Y + (1 - \om_i)\mu_Z\right).
\end{equation}
As the one-dimensional distribution of $\mathbf{X}$ is the same as in the previous point, we see from \eqref{eq: laldim nonerg} and \eqref{eq: ualdim nonerg} that
\[ \laldim(X_1) = 0\ \text{ and }\ \ualdim(X_1)=1.  \]
By Lemmas \ref{lem: aldim mid idimr} and \ref{lem: psi star mlaldim}, it suffices to prove that
\begin{equation}\label{eq: maldim iid}
	\mlaldim(\mathbf{X}) = \mualdim(\mathbf{X}) = 1/2.
\end{equation}
For $\om = (\om_1, \ldots \om_n)\in \{0,1\}^n$ let us denote
\[ \mu_{\om} = \bigotimes \limits_{i=1}^n \left( \om_i\mu_Y + (1 - \om_i)\mu_Z\right).\]
Then by \eqref{eq: loc dim Y} and \eqref{eq: loc dim Z}, for $\mu_\om$-a.e. $(x_1, \ldots, x_n) \in \R^n$
\begin{equation}\label{eq: dim om}
	\begin{split} \ld(\mu_\om, (x_1, \ldots, x_n)) & = \liminf \limits_{b \to \infty} \frac{ \log \prod \limits_{i=1}^n \left(\om_i\mu_Y(C^1_{b}(x_i)) + (1-\om_i)\mu_Z(C^1_{b}(x_i))\right)}{b} \\
		& = \liminf \limits_{b \to \infty} \frac{ \left( \#(S \cap [1,b]) \sum \limits_{i=1}^n \om_i\right) + \left( \#((\N \setminus S) \cap [1,b]) \sum \limits_{i=1}^n (1-\om_i)\right)}{b} \\
		& = \sum \limits_{i=1}^n \om_i + \liminf \limits_{b \to \infty} \frac{ \#((\N \setminus S) \cap [1,b]) \sum \limits_{i=1}^n (1-2\om_i)}{b} \\
		& \geq \sum \limits_{i=1}^n \om_i - \left| \sum \limits_{i=1}^n (1-2\om_i)\right| \\
		& \geq \sum \limits_{i=1}^n \om_i - \left| n - 2\sum \limits_{i=1}^n \om_i\right|.
	\end{split}
\end{equation}
Therefore, by \eqref{eq: ac dim} and \eqref{eq: distr iid}
\[
\begin{split}
	\frac{1}{n}\laldim(X^n) & = \frac{1}{n} \int \ld(\mu_{X^n}, (x_1, \ldots, x_n))d\mu_{X^n}(x_1, \ldots, x_n) \\
	& = \sum \limits_{\om = (\om_1, \ldots \om_n)\in \{0,1\}^n}\ \frac{2^{-n}}{n} \int \ld(\mu_{X^n}, (x_1, \ldots, x_n))d\mu_{\om}(x_1, \ldots, x_n) \\
	& = \sum \limits_{\om = (\om_1, \ldots \om_n)\in \{0,1\}^n}\ \frac{2^{-n}}{n} \int \ld(\mu_{\om}, (x_1, \ldots, x_n))d\mu_{\om}(x_1, \ldots, x_n) \\
	& \geq  \sum \limits_{\om = (\om_1, \ldots \om_n)\in \{0,1\}^n}\ 2^{-n} \left( \frac{1}{n}\sum \limits_{i=1}^n \om_i - \left| 1 - \frac{2}{n}\sum \limits_{i=1}^n \om_i\right| \right) \\
	& = \mathbb{E} \left( \frac{1}{n}\sum \limits_{i=1}^n \Om_i - \left| 1 - \frac{2}{n}\sum \limits_{i=1}^n \Om_i\right| \right),
\end{split}
\]
where $\Om_1, \Om_2, \ldots$ is a sequence of i.i.d. random variables such that $\PP(\Om_i = 0) = \PP(\Om_i = 1) = 1/2.$ As $\lim \limits_{n \to \infty} \frac{1}{n} \sum \limits_{i=1}^n \Om_i = 1/2$ almost surely, we conclude that
\[ \mlaldim(\mathbf{X}) = \liminf \limits_{n\to \infty}\frac{1}{n}\laldim(X^n) \geq 1/2. \]
Similarly as in \eqref{eq: dim om}, we can also prove $\ud(\mu_\om, (x_1, \ldots, x_n)) \leq \sum \limits_{i=1}^n\om_i + \left|n - 2\sum \limits_{i=1}^n \om_i \right|$ for $\mu_\om$-a.e. $(x_1, \ldots, x_n) \in \R^n$. Consequently  $\mualdim(\mathbf{X}) \leq 1/2$, establishing \eqref{eq: maldim iid}.

\bibliographystyle{alpha}
\bibliography{universal_bib}

\end{document}